\providecommand{\keywords}[1]{\bigskip\textbf{\textit{Index terms---}} #1}
\let\emptyset\varnothing
\newcommand{\norm}[1]{\|{#1}\|}
\newcommand{\diag}{\operatorname{diag}}
\newcommand{\rank}{\operatorname{rank}}
\newcommand{\calF}{\mathcal{F}}
\newcommand{\bbR}{\mathbb{R}}
\newcommand{\bbC}{\mathbb{C}}
\newcommand{\calT}{\mathcal{T}}
\newcommand{\scrT}{\mathscr{T}}
\newcommand{\calR}{\mathcal{R}}
\newcommand{\calC}{\mathcal{C}}
\newcommand{\calX}{\mathcal{X}}
\newcommand{\calY}{\mathcal{Y}}
\newcommand{\calM}{\mathcal{M}}
\newcommand{\rom}[1]{\uppercase\expandafter{\romannumeral #1\relax}}
\newcommand{\transpose}{^{\mathrm{T}}}
\newtheorem{theorem}{Theorem}[section]
\newtheorem{lemma}[theorem]{Lemma}
\newtheorem{corollary}[theorem]{Corollary}
\newtheorem{proposition}[theorem]{Proposition}
\newtheorem{definition}[theorem]{Definition}
\newtheorem{remark}[theorem]{Remark}
\newtheorem{conjecture}[theorem]{Conjecture}
\begin{document}

\title{A Unified Framework for Identifiability Analysis in Bilinear Inverse Problems with Applications to Subspace and Sparsity Models\thanks{This work was supported in part by the National Science Foundation (NSF) under Grants CCF 10-18789 and IIS 14-47879.}}
\author{Yanjun Li\thanks{Department of Electrical and Computer Engineering and Coordinated Science Laboratory, University of Illinois, Urbana-Champaign, IL 61801, USA.}
\and
Kiryung Lee\thanks{Department of Statistics and Coordinated Science Laboratory, University of Illinois, Urbana-Champaign, IL 61801, USA.}
\and
Yoram Bresler\footnotemark[2]}
\date{}
\maketitle

\doublespacing

\abstract
Bilinear inverse problems (BIPs), the resolution of two vectors given their image under a bilinear mapping, arise in many applications. Without further constraints, BIPs are usually ill-posed. In practice, properties of natural signals are exploited to solve BIPs. For example, subspace constraints or sparsity constraints are imposed to reduce the search space. These approaches have shown some success in practice. However, there are few results on uniqueness in BIPs. For most BIPs, the fundamental question of under what condition the problem admits a unique solution, is yet to be answered. For example, blind gain and phase calibration (BGPC) is a structured bilinear inverse problem, which arises in many applications, including inverse rendering in computational relighting (albedo estimation with unknown lighting), blind phase and gain calibration in sensor array processing, and multichannel blind deconvolution (MBD). It is interesting to study the uniqueness of such problems.

In this paper, we define identifiability of a BIP up to a group of transformations. We derive necessary and sufficient conditions for such identifiability, i.e., the conditions under which the solutions can be uniquely determined up to the transformation group. Applying these results to BGPC, we derive sufficient conditions for unique recovery under several scenarios, including subspace, joint sparsity, and sparsity models. For BGPC with joint sparsity or sparsity constraints, we develop a procedure to compute the relevant transformation groups. We also give necessary conditions in the form of tight lower bounds on sample complexities, and demonstrate the tightness of these bounds by numerical experiments. The results for BGPC not only demonstrate the application of the proposed general framework for identifiability analysis, but are also of interest in their own right.

\keywords{uniqueness, transformation group, equivalence class, ambiguity, blind gain and phase calibration, sensor array processing, inverse rendering, SAR autofocus, multichannel blind deconvolution}


\section{Introduction}
Whereas linear inverse problems are well-understood and the literature on them is vast, much less is known about  bilinear inverse problems (BIPs). BIPs, i.e., recovering two variables $x$ and $y$ given a bilinear measurement $z=\calF(x,y)$, have attracted considerable attention recently. However, in spite of recent progress, the question of identifiability - or uniqueness of the solutions in BIPs under a variety of realistic conditions, has been largely open. BIPs arise in many important applications, such as blind deconvolution \cite{Kundur1996,Abed-meraim1997}, phase retrieval \cite{Taylor1981,Fienup1982}, dictionary learning \cite{Rubinstein2010}, etc. These problems usually involve recovering the inputs of an under-determined bilinear system. They also suffer from scaling ambiguity among other possible ambiguities (e.g., shift ambiguity of blind deconvolution, multiplication by a permutation matrix in dictionary learning, multiplication by an arbitrary invertible matrix in matrix factorization problems, etc). Therefore, these problems are ill-posed and do not yield unique solutions. By introducing further constraints that exploit the properties of natural signals, one can reduce the search space, which may help identifiability. For example, cone constraints, such as convex cone constraints (e.g., positivity), subspace constraints, and union of subspaces constraints (e.g., sparsity or joint sparsity), are very common in BIPs. However, even with a reduced feasible set, a BIP often still exhibits some ambiguities, such as scaling \cite{Choudhary2013a}. 

In this paper, to address the issues of ambiguity, we expand the notion of identifiability of BIPs. We resolve the ambiguity issues by allowing uniqueness up to a group of transformations, which define equivalence classes of solutions. We then derive necessary and sufficient conditions for identifiability in BIPs up to the transformation group. We demonstrate the utility of this  proposed framework and the new general results by applying them to a prototypical BIP - the blind gain and phase calibration (BGPC) problem, with subspace, or with sparsity constraints. The results include the first algebraic sample complexity conditions for this problem. Owing to the ubiquitousness of BGPC, the results obtained here for this problem are of broad interest in their own right.

Blind gain and phase calibration (BGPC) is a bilinear inverse problem that arises in many applications. It is the joint recovery of an unknown gain and phase vector $\lambda$ and signal vectors $\phi_1,\phi_2,\cdots,\phi_N$ given the entrywise product $Y=\diag(\lambda)\Phi$, where $\Phi=[\phi_1,\phi_2,\cdots,\phi_N]$. In inverse rendering \cite{Nguyen2013}, when the surface profile (3D model) of the object is known, the joint recovery of the albedo\footnote{Albedo, also known as reflection coefficient, is the ratio of reflected radiation from a surface to incident radiation upon it.} and the lighting conditions is a BGPC problem. In sensor array processing \cite{Paulraj1985}, if the directions of arrival of source signals are properly discretized using a grid, and the sensors have unknown gain and phase, the joint recovery of the source signals and the gain and phase of the sensors is a BGPC problem. In multichannel blind deconvolution (MBD) with the circular convolution model, the joint recovery of the signal and multiple channels is a BGPC problem. In all these problems, it is common to impose subspace, joint sparsity, or sparsity constraints on the signals represented by the columns of $\Phi$.

In this paper, after deriving general necessary and sufficient conditions for identifiability in a BIP up to the transformation group, we apply these to BGPC and give identifiability results under several scenarios. We first consider a subspace constraint and provide an alternative proof for the result in inverse rendering \cite{Nguyen2013}. Then we consider a joint sparsity constraint. We develop a procedure to determine the relevant equivalence classes and transformation groups for different bases. Then we give sufficient conditions for the identifiability of jointly sparse signals (1D or 2D), or piecewise constant signals. For BGPC with subspace or joint sparsity constraints, we also give necessary conditions in the form of tight lower bounds on sample complexities. We show that the sufficient conditions and the necessary conditions coincide in some cases. We design algorithms to check the identifiability of given signals and demonstrate the tightness of our sample complexity bounds. We analyze the gaps and present conjectures about how to bridge them. Then we derive a universal sufficient condition for BGPC with a sparsity constraint. This condtion is the most stringent, but applies to all bases and all equivalence classes of solutions. Once the condition is met, the solution of the BGPC problem can be recovered uniquely up to an unknown generalized permutation, regardless of the basis.

The rest of the paper is organized as follows. We introduce the problem setup and compare our results with related work in the rest of this section. In Section \ref{sec:bip}, we formally define ``transformation group'' and ``identifiability'' for bilinear inverse problems and derive sufficent and necessary conditions for identifiability up to a transformation group. We derive sufficient conditions for the identifiability of BGPC with a subspace constraint, a joint sparsity constraint, and a sparsity constraint, in Sections \ref{sec:subspace}, \ref{sec:jointsparsity}, and \ref{sec:sparsity} respectively. We also give necessary conditions and analyze the tightness of our sample complexity bounds in Sections \ref{sec:subspace} and \ref{sec:jointsparsity}.
We conclude this paper in Section \ref{sec:conclusions} with some discussion and open problems.

\subsection{Notations}
Before proceeding to the problem statement, we state the notations that will be used throughout the paper. We use upper-case letters $A$, $X$ and $Y$ to denote matrices, and lower-case letters to denote vectors. The diagonal matrix whose diagonal entries are the entries of vector $\lambda$ is denoted by $\diag(\lambda)$. We use $I$ to denote the identity matrix and $F$ to denote the normalized discrete Fourier transform (DFT) matrix. Unless otherwise stated, all vectors are column vectors. The dimensions of all vectors and matrices are made clear in the context. A vector is said to be non-vanishing, if all its entries are nonzero.

We use $\Omega_\calX,\Omega_\calY$ to denote subsets of vector spaces $\calX,\calY$. The Cartesian product of two sets is denoted by $\Omega_\calX\times \Omega_\calY$. An element of $\Omega_\calX\times \Omega_\calY$ is denoted by $(x,y)$, where $x\in \Omega_\calX$ and $y\in \Omega_\calY$. We use $\scrT_\calX$ and $\scrT_\calY$ to denote transformation groups (to be defined in Section \ref{sec:tgec}). The Cartesian product of two transformation groups $\scrT_\calX,\scrT_\calY$ (also known as direct product in group theory terminology) is denoted by $\scrT_\calX\times \scrT_\calY$. Elements of the transformation groups are denoted by $\calT_\calX\in \scrT_\calX$, $\calT_\calY\in \scrT_\calY$ and $(\calT_\calX,\calT_\calY)\in \scrT_\calX\times \scrT_\calY$.

We use $j,k$ to denote indices, and $J,K$ to denote index sets. If a matrix or a vector has dimension $n$, then an index set $J$ is a subset of $\{1,2,\cdots,n\}$. We use $|J|$ to denote the cardinality of $J$, and $J^c$ to denote its complement. We use superscript letters to denote subvectors or submatrices. Thus, $x^{(J)}$ represents the subvector of $x$ consisting of the entries indexed by $J$. The scalar $x^{(j)}$ represents the $j$th entry of $x$. The submatrix $A^{(J,K)}$ has size $|J|\times |K|$ and consists of the entries indexed by $J\times K$. The vector $A^{(:,k)}$ represents the $k$th column of the matrix $A$. The colon notation is inherited from MATLAB.

We use $./$ and $\odot$ to denote entrywise division and entrywise product, respectively. Circular convolution is denoted by $\circledast$. The direct sum of two subspaces is denoted by $\oplus$. The kronecker product of two matrices is denoted by $\otimes$. The row space and column space of a matrix are denoted by $\calR(\cdot)$ and $\calC(\cdot)$, respectively.

\subsection{Problem Statement}
We formally state the general bilinear inverse problem (BIP) and a special BIP termed blind gain and phase calibration (BGPC) in this section. First, a bilinear mapping is defined as follows.

\begin{definition}
Let $\mathcal{X}$, $\mathcal{Y}$ and $\mathcal{Z}$ be three linear vector spaces.
A bilinear mapping is a function $\calF: \mathcal{X}\times \mathcal{Y}\rightarrow \mathcal{Z}$ such that for any $y\in \mathcal{Y}$ the mapping $x \mapsto \calF(x,y)$ is a linear mapping from $\mathcal{X}$ to $\mathcal{Z}$ and for any $x\in \mathcal{X}$ the mapping $y \mapsto \calF(x,y)$ is a linear mapping from $\mathcal{Y}$ to $\mathcal{Z}$ .
\end{definition}

Given the measurement $z = \calF(x_0,y_0)$, the following feasibility problem is called the unconstrained bilinear inverse problem:
\begin{align*}
\text{(Unconstrained BIP)}\qquad \text{find}~~&(x,y)\in \mathcal{X}\times \mathcal{Y},\\
\text{s.t.}~~&\calF(x,y) = z.
\end{align*}

Bilinear inverse problems are usually underdetermined, and hence do not yield unique solutions. A variety of constraints $x\in \Omega_\calX\subset \mathcal{X}, y\in \Omega_\calY\subset \mathcal{Y}$ can be imposed to reduce the search space and make the problem better-posed. The constrained bilinear inverse problem is:
\begin{align}\label{eq:19}
\begin{split}
\text{(Constrained BIP)}\qquad\text{find}~~&(x,y),\\
\text{s.t.}~~&\calF(x,y) = z,\\
& x \in \Omega_\calX,~y \in \Omega_\calY.
\end{split}
\end{align}

\emph{Blind gain and phase calibration (BGPC)} is the following constrained BIP given the measurement $Y=\diag(\lambda_0)\Phi_0$:
\begin{align*}
\text{find}~~&(\lambda,\Phi),\\
\text{s.t.}~~&\diag(\lambda)\Phi = Y,\\
& \lambda \in \Omega_\Lambda,~\Phi \in \Omega_\Phi.
\end{align*}
where $\lambda \in \Omega_\Lambda\subset\bbC^{n}$ is the unknown gain and phase vector, $\Phi\in\Omega_\Phi\subset\bbC^{n\times N}$ is the signal matrix. In this paper, we impose no constraints on $\lambda$, i.e., $\Omega_\Lambda = \bbC^n$.
As for the matrix $\Phi$, we impose subspace, joint sparsity, or sparsity constraints. In all three scenarios, $\Phi$ can be represented in the factorized form $\Phi=AX$, where the columns of $A\in \bbC^{n\times m}$ form a basis or a frame (an overcomplete dictionary), and $X\in \Omega_\calX\subset \bbC^{m\times N}$ is the matrix of coordinates. The constraint set becomes $\Omega_\Phi=\{\Phi=AX:X\in\Omega_\calX\}$. Under some mild conditions\footnote{Under a subspace constraint, $A$ is required to have full column rank. Under a joint sparsity or sparsity constraint, $A$ is required to satisfy the spark condition \cite{Donoho2003}.} on $A$, the uniqueness of $\Phi$ is equivalent to the uniqueness of $X$. For simplicity, we treat the following problem as the BGPC problem from now on.
\begin{align*}
\text{(BGPC)}\qquad\text{find}~~&(\lambda,X),\\
\text{s.t.}~~&\diag(\lambda)AX = Y,\\
& \lambda \in \bbC^n,~X \in \Omega_\calX.
\end{align*}

Next, we elaborate on the three scenarios considered in this paper:

(\rom{1}) \emph{Subspace constraints.} The signals represented by the columns of $\Phi$ reside in a low-dimensional subspace spanned by the columns of $A$. The matrix $A$ is tall ($n>m$) and has full column rank. The constraint set is $\Omega_\calX=\bbC^{m\times N}$.
	
	In inverse rendering \cite{Nguyen2013}, the columns of $Y=\diag(\lambda)\Phi$ represent images under different lighting conditions, where $\lambda$ represents the unknown albedos\footnote{In inverse rendering, albedos are real and positive. We ignore this extra information here for simplicity.}, and the columns of $\Phi$ represent the intensity maps of incident light. The columns of $A$ are the first several spherical harmonics extracted from the 3D model of the object. They form a basis of the low-dimensional subspace in which the intensity maps reside.
	
	Multichannel blind deconvolution (MBD) with the circular convolution model also falls into this category. The measurement $Y^{(:,j)}=\diag(\lambda)\Phi^{(:,j)}$ can be also written as:	
	\[
F^*Y^{(:,j)}=\frac{1}{\sqrt{n}}(F^*\lambda)\circledast(F^*\Phi^{(:,j)}).
	\]
	The vector $\lambda$ represents the DFT of the signal, and columns of $\Phi$ represent the DFT of the channels. The columns of $F^*A$ form a basis for the low-dimensional subspace in which the channels reside. For example, when the multiple channels are FIR filters that share the same support $J$, they reside in a low-dimensional subspace whose basis is $F^*A=I^{(:,J)}$. By symmetry, the roles of signals and channels can be switched.
In channel encoding, when multiple signals are encoded by the same tall matrix $E$, they reside in a low-dimensional subspace whose basis is $F^*A=E$. In this case, the vector $\lambda$ represents the DFT of the channel.
	
(\rom{2}) \emph{Joint sparsity constraints.} The columns of $\Phi$ are jointly sparse over a dictionary $A$, where $A$ is a square matrix ($n=m$) or a fat matrix ($n<m$). 
	The constraint set $\Omega_\calX$ is \[\Omega_\calX=\{X\in\bbC^{m\times N}: \text{$X$ has at most $s$ nonzero rows}\}.\]
In other words, the columns of $X$ are jointly $s$-sparse.
	
	In sensor array processing with uncalibrated sensors, the vector $\lambda$ represents unknown gain and phase for the sensors, and the columns of $\Phi$ represent array snapshots captured at different time instants. If the direction of arrival (DOA) is discretized using a grid, then each column of $A$ represents the array response of one direction on the grid. With only $s$ unknown sources, each column of $\Phi$ is the superposition of the same $s$ columns of $A$. Hence the columns of the source matrix $X$ are jointly $s$-sparse. 
	
	In synthetic aperture radar (SAR) autofocus \cite{Morrison2009}, which is a special multichannel blind deconvolution problem, $X$ represents the SAR image and $A=F$ is the 1D DFT matrix. The entries in $\lambda$ represents the phase error in the Fourier imaging data, which varies only along the cross-range dimension\footnote{In SAR autofocus, the entries of the phase error $\lambda$ have unit moduli. We ignore this extra information here for simplicity.}. If we extend the coverage of the image by oversampling the Fourier domain in the cross-range dimension, the rows of the image $X$ corresponding to the region that is not illuminated by the antenna beam are zeros. Thus, the SAR image $X$ can be modeled as a matrix with jointly sparse columns.
	
(\rom{3}) \emph{Sparsity constraints.} The matrix $\Phi$ is sparse over a dictionary $A$, where $A$ is a square matrix ($n=m$) or a fat matrix ($n<m$). The constraint set $\Omega_\calX$ is
\[\Omega_\calX=\{X\in\bbC^{m\times N}:\text{$X$ has at most $s$ nonzero entries}\}.\]
A matrix $X$ with sparse columns can be considered as a special case of this scenario.
	
	Consider the following multichannel blind deconvolution problem. An acoustic signal is transmitted under reverberant conditions and recorded by a microphone array. The DFT of the signal is $\lambda$, $A=F$ is the DFT matrix, each column of $\Phi=AX$ is the DFT of the channel of a corresponding microphone, and the corresponding column of $X$ is a sparse multipath channel that contains nonzero values at a few locations. 

In the rest of this paper, we will address the identifiability of the general BIP and the above BGPC problem. For BGPC, the constraint sets $\Omega_\Lambda$ and $\Omega_\calX$ are closed under scalar multiplication.
For any nonzero scalar $\sigma$, the pairs $(\lambda_0,X_0)$ and $(\sigma\lambda_0,\frac{1}{\sigma}X_0)$ map to the same $Y$ and hence are non-distinguishable. We say that this problem suffers from scaling ambiguity. The set $\{(\sigma\lambda_0,\frac{1}{\sigma}X_0):\lambda\in \Omega_\Lambda, X\in \Omega_\calX, \sigma\neq 0\}$ is an equivalence class of solutions generated by a group of scaling transformations. More complex ambiguities and equivalence classes will be analyzed later. Our identifiability results answer the question under what conditions the solution $(\lambda_0,X_0)$ is unique up to scaling, or up to other transformation groups.

\subsection{Related Work}
Recently, solving bilinear or quadratic inverse problems with the methodology of ``lifting'' has attracted much attention. Examples include recent works on blind deconvolution \cite{Ahmed2014} and phase retrieval \cite{Candes2013a,Candes2013,Candes2013b}. In the lifting framework, for any bilinear mapping $\calF:\bbC^m\times\bbC^n\rightarrow\mathcal{Z}$, there exists a linear operator $\mathcal{G}:\bbC^{m\times n}\rightarrow \mathcal{Z}$ such that $\mathcal{G}(xy\transpose)=\calF(x,y)$. Given the measurement $z=\mathcal{G}(x_0y_0\transpose)=\calF(x_0,y_0)$, one can recast the BIP as the recovery of the rank-$1$ matrix $x_0y_0\transpose \in \Omega_\calM=\{xy\transpose:x\in \Omega_\calX,y\in \Omega_\calY\}$.
\begin{align*}
\text{(Lifted BIP)}\qquad\text{find}~~& M,\\
\text{s.t.}~~&\mathcal{G}(M) = z,\\
& M\in \Omega_\calM.
\end{align*} 
Choudhary and Mitra \cite{Choudhary2013a} adopted this framework, and showed that the lifted BIP has a unique solution $M_0=x_0y_0\transpose$ if the null space of $\mathcal{G}$ does not contain the difference of $M_0$ and any other matrix in $\Omega_\calM$, i.e.,
\[
\mathcal{N}(\mathcal{G})\bigcap \{M_0-M:M\in \Omega_\calM\} = \{0\}.
\]
The identifiability analysis hinges on finding the set of rank-$2$ matrices in the null space of $\mathcal{G}$. They addressed the question of identifiability in an abstract BIP under the assumptions that the set of rank-$2$ matrices in $\mathcal{N}(\mathcal{G})$ has low complexity (e.g., finite cardinality or small covering number). Using this framework, they showed that blind deconvolution with a canonical sparsity prior is \emph{not} identifiable \cite{Choudhary2014a}.

In contrast, we create a more general framework for the identifiability of BIPs. We consider bilinear mappings defined on general vector spaces (not just Euclidean spaces). Besides scaling ambiguity, our framework allows other ambiguities. We extend the notion of identifiability to identifiability up to transformation groups. Our framework is amenable to BIPs with matrix multiplications, such as dictionary learning \cite{Spielman2013,Agarwal2013,Agarwal2013a,Arora2013,Arora2014} and the BGPC problem. For the BGPC problem, we are able to derive identifiability results under subspace, joint sparsity, or sparsity constraints within our framework. Furthermore, we provide an explicit enumeration-based scheme to determine, under subspace or joint sparsity constraints, the identifiability of a solution for given measurements. No such results are available within the lifting framework: under the same constraints, it is not obvious how to find the set of rank-$2$ matrices in $\mathcal{N}(\mathcal{G})$ within the lifting framework. Moreover, since the set of rank-$2$ matrices in $\mathcal{N}(\mathcal{G})$ may be infinite, it is also unclear how to check the identifiability condition for any given scenario.

Other related work has to do with instances of the BGPC problem. The structure of the BGPC problem arises in many signal processing applications. In each of these, the problem formulation and treatment were tailored to the application. Instead, we address the identifiability of all these problems within the one common framework. Nguyen et al. \cite{Nguyen2013} showed a sufficient condition for unique inverse rendering, which falls into the category of BGPC problems with subspace constraints. By examining the problem in our framework, we are able to replicate Nguyen's result and provide an alternative proof. In addition, we give a new necessary condition that features a tight lower bound. Morrison et al. \cite{Morrison2009} proposed an algorithm for SAR autofocus and showed a necessary condition for their algorithm. If the support is unknown, the SAR autofocus problem falls into the category of BGPC problems with joint sparsity constraints. Using our notion of identifiability up to a transformation group, we provide a sufficient condition for unique recovery up to an unknown scaling and a circular shift. Most works on the identifiability of MBD considered the linear convolution model \cite{Moulines1995,Abed-meraim1997}. These traditional works used finite impulse response (FIR) models, and never incorporated joint sparsity, or sparsity. In contrast, we consider the circular convolution model, which is more challenging in that the circular convolution with a vector can be non-injective, while the linear convolution with a vector is always injective. On the other hand, the circular convolution model is more general. By zero padding the signal and the channels (equivalent to Fourier domain oversampling), linear convolutions can be rewritten as circular convolutions with a support constraint. That falls into the category of BGPC with a subspace constraint. As an important extension of the theory of MBD, we study in this paper MBD with subspace, joint-sparsity, and sparsity constraints.


\section{Identifiability of Bilinear Inverse Problems}\label{sec:bip}
\subsection{Transformation Groups and Equivalence Classes}\label{sec:tgec}
An important question concerning a bilinear inverse problem is to determine when it admits a unique solution. To formulate a good answer, we need to be able to handle the ambiguities of a bilinear inverse problem. For any nonzero scalar $\sigma$ such that $\sigma x_0 \in \Omega_\calX$ and $\frac{1}{\sigma}y_0 \in \Omega_\calY$, by bilinearity, $\calF(\sigma x_0,\frac{1}{\sigma}y_0)=\calF(x_0,y_0)=z$. Therefore, the constrained BIP does not yield a unique solution if $\Omega_\calX,\Omega_\calY$ contain such scaled versions of $x_0,y_0$. That is called scaling ambiguity. When $\Omega_\calX,\Omega_\calY$ are closed under scalar multiplication (e.g., subspaces or unions of subspaces), the set $[(x_0,y_0)]=\{(\sigma x_0,\frac{1}{\sigma} y_0):x_0\in \Omega_\calX, y_0\in \Omega_\calY, \sigma\neq 0\}$ is an equivalence class with an exemplar $(x_0,y_0)$. The transformation $\calT: \Omega_\calX\times \Omega_\calY\rightarrow \Omega_\calX\times \Omega_\calY$ such that $\calT(x,y)=(\sigma x,\frac{1}{\sigma}y)$ is an equivalence transformation. The set of all such transformations
\begin{equation}
\scrT = \{\calT: \calT(x,y)=(\sigma x,\frac{1}{\sigma}y),\text{ for some nonzero $\sigma\in\bbC$}\}
\label{eq:24}
\end{equation}
forms a transformation group. In group theory terminology, the equivalence class $[(x_0,y_0)]$ is the orbit of $(x_0,y_0)$ under the action of $\scrT$ \cite{Bredon1972}. Any valid definition of unique recovery must include uniqueness up to scaling, i.e., the equivalence class $[(x_0,y_0)]$ can be uniquely identified.
There can be other ambiguities for a particular bilinear inverse problem (e.g., shift ambiguity of blind deconvolution). We need formal definitions of transformation groups and equivalence classes before proceeding towards identifiability.

\begin{definition}\label{def:tg}
A set $\scrT_\calX$ of transformations from $\Omega_\calX$ to itself is said to be a transformation group on $\Omega_\calX$, if the following properties hold:
\begin{enumerate}
	\item For any $\calT_{\calX,1},\calT_{\calX,2}\in\scrT_\calX$, the composition of the two transformations $\calT_{\calX,2}\circ\calT_{\calX,1}$ belongs to $\scrT_\calX$.
	\item $\scrT_\calX$ contains identity transformation $\mathbf{1}_\calX(x)=x$ for all $x\in \Omega_\calX$.
	\item For any $\calT_\calX \in\scrT_\calX$, there exists $\calT_\calX^{-1}\in\scrT_\calX$ such that $\calT_\calX^{-1}\circ\calT_\calX=\calT_\calX\circ\calT_\calX^{-1}=\mathbf{1}_\calX$.
\end{enumerate}
\end{definition}

If $\scrT_\calX,\scrT_\calY$ are transformation groups on $\Omega_\calX,\Omega_\calY$ respectively, then their direct product $\scrT_\calX\times \scrT_\calY$ is a transformation group on $\Omega_\calX\times \Omega_\calY$. The action of $(\calT_\calX,\calT_\calY)\in \scrT_\calX\times \scrT_\calY$ on $(x,y)\in \Omega_\calX\times \Omega_\calY$ is $(\calT_\calX(x),\calT_\calY(y))$. If there exists $\calT=(\calT_\calX,\calT_\calY)\in \scrT_\calX\times \scrT_\calY$, such that
\[
\calF(\calT(x,y))=\calF(\calT_\calX(x),\calT_\calY(y))=\calF(x,y),
\]
for all $(x,y)\in \Omega_\calX\times \Omega_\calY$, then $\calT$ maps a pair $(x,y)$ to another pair $(\calT_\calX(x),\calT_\calY(y))$ so that the two pairs cannot be distinguished by their images under $\calF$. If a set of such $\calT$'s form a subgroup of $\scrT_\calX\times\scrT_\calY$, we have a \emph{transformation group associated with the bilinear mapping} $\calF$.

\begin{definition}\label{def:tgb}
A transformation group $\scrT$ on $\Omega_\calX\times \Omega_\calY$ is said to be a transformation group associated with the bilinear mapping $\calF$ if:
\begin{enumerate}
	\item $\scrT\subset\scrT_\calX\times \scrT_\calY$ is a subgroup of the direct product of two transformation groups $\scrT_\calX$ and $\scrT_\calY$, on $\Omega_\calX$ and $\Omega_\calY$, respectively.
	\item For all $(x,y)\in \Omega_\calX\times \Omega_\calY$ and for all $\calT\in \scrT$, $\calF(x,y) = \calF(\calT(x,y))$. Or equivalently, $\calF=\calF\circ\calT$ for all $\calT\in\scrT$.
\end{enumerate}
\end{definition}

To enable an identifiability result up to a transformation group (see Section \ref{sec:iutg}), the transformation group must capture \emph{all} inherent ambiguities of the BIP. This motivates the following definition of the \emph{ambiguity transformation group of the bilinear mapping}.
\begin{definition}\label{def:atgb}
A transformation group $\scrT$ on $\Omega_\calX\times \Omega_\calY$ is said to be the ambiguity transformation group of the bilinear mapping $\calF$ if $\scrT$ is the largest transformation group associated with $\calF$, i.e., if $\scrT$ contains all transformation groups associated with $\calF$.
A transformation $\calT$ in the ambiguity transformation group $\scrT$ of the bilinear mapping $\calF$ is said to be an equivalence transformation associated with $\calF$.
\end{definition}

Next, we define an \emph{equivalence class associated with the bilinear inverse problem}.
\begin{definition}\label{def:ec}
Given the ambiguity transformation group $\scrT$ of the bilinear mapping $\calF$ on $\Omega_\calX\times \Omega_\calY$, and $(x_0,y_0)\in\Omega_\calX\times\Omega_\calY$, the set 
\[
[(x_0,y_0)]_\scrT = \{(x,y)\in \Omega_\calX\times \Omega_\calY: (x,y)=\calT(x_0,y_0)\text{ for some }\calT\in\scrT\}
\]
is called the equivalence class of $(x_0,y_0)$ associated with the bilinear inverse problem in \eqref{eq:19}. In group theory terminology, $[(x_0,y_0)]_\scrT$ is called the orbit of $(x_0,y_0)$ under the action of $\scrT$.
\end{definition}

\begin{definition}
Given the ambiguity transformation group $\scrT$ of the bilinear mapping $\calF$ on $\Omega_\calX\times \Omega_\calY$, and $x_0\in\Omega_\calX$, the set
\[
[x_0]^L_{\scrT}=\{x\in\Omega_\calX:\exists y_0,y\in\Omega_\calY, \text{ s.t. } (x,y)\in[(x_0,y_0)]_{\scrT}\}
\]
is called the left equivalence class of $x_0$. 

Similarly, given the ambiguity transformation group $\scrT$ of the bilinear mapping $\calF$ on $\Omega_\calX\times \Omega_\calY$, and $y_0\in\Omega_\calY$, the set
\[
[y_0]^R_{\scrT}=\{y\in\Omega_\calY:\exists x_0,x\in\Omega_\calX, \text{ s.t. } (x,y)\in[(x_0,y_0)]_{\scrT}\}
\]
is called the right equivalence class of $y_0$.
\end{definition}

The definition of a transformation group guarantees that the relation between elements in an orbit satisfies reflexivity, transitivity and symmetry. Therefore, an orbit is an equivalence class. If $\scrT$ is the ambiguity transformation group of the bilinear mapping $\calF$, then all the elements in the equivalence class $[(x_0,y_0)]_\scrT$ share the same image under $\calF$. Therefore, they are equivalent solutions to the bilinear inverse problem in \eqref{eq:19}. In fact, under some mild conditions on the bilinear mapping, Definitions \ref{def:tgb} and \ref{def:atgb} have additional implications.

\begin{proposition}\label{pro:linearity}
Assume that the bilinear mapping $\calF$ has no non-trivial left annihilator of $\Omega_\calY$, i.e., if $\calF(x_0,y)=0$ for all $y\in \Omega_\calY$, then $x_0=0$.
Then every equivalence transformation $\calT=(\calT_\calX,\calT_\calY)\in\scrT$ satisfies the following:
\begin{itemize}
	\item If $0\in \Omega_\calX$, then $\calT_\calX(0)=0$.
	\item For $x_1,x_2\in \Omega_\calX$ and scalars $a_1,a_2$, if $a_1x_1+a_2x_2\in \Omega_\calX$, then \[\calT_\calX(a_1x_1+a_2x_2)=a_1\calT_\calX(x_1)+a_2\calT_\calX(x_2).\] If $\Omega_\calX$ is a linear vector space, then $\calT_\calX$ is a linear transformation.
\end{itemize}

Similarly, assume that the bilinear mapping $\calF$ has no non-trivial right annihilator of $\Omega_\calX$, i.e., if $\calF(x,y_0)=0$ for all $x\in \Omega_\calX$, then $y_0=0$. Then every equivalence transformation $\calT=(\calT_\calX,\calT_\calY)\in\scrT$ satisfies the following:
\begin{itemize}
	\item  If $0\in \Omega_\calY$, then $\calT_\calY(0)=0$.
	\item For $y_1,y_2\in \Omega_\calY$ and scalars $b_1,b_2$, if $b_1y_1+b_2y_2\in \Omega_\calY$, then \[\calT_\calY(b_1y_1+b_2y_2)=b_1\calT_\calY(y_1)+b_2\calT_\calY(y_2).\] If $\Omega_\calY$ is a linear vector space, then $\calT_\calY$ is a linear transformation.
\end{itemize}
\end{proposition}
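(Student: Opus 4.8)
The plan is to exploit the defining property of an equivalence transformation, namely $\calF=\calF\circ\calT$ for every $\calT\in\scrT$ (property~2 of Definition \ref{def:tgb}), together with the bilinearity of $\calF$, and then to invoke the no-left-annihilator hypothesis to pass from an identity that holds ``against every test vector $\calT_\calY(y)$'' to an identity about $\calT_\calX$ alone. The structural fact that makes this passage legitimate is that $\calT_\calY$ is a \emph{bijection} of $\Omega_\calY$ onto itself: since $\calT=(\calT_\calX,\calT_\calY)\in\scrT\subset\scrT_\calX\times\scrT_\calY$ (property~1 of Definition \ref{def:tgb}) and $\scrT_\calY$ is a transformation group, property~3 of Definition \ref{def:tg} furnishes an inverse $\calT_\calY^{-1}\in\scrT_\calY$, whence $\calT_\calY$ is invertible and $\{\calT_\calY(y):y\in\Omega_\calY\}=\Omega_\calY$. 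This surjectivity is exactly what licenses the annihilator hypothesis.

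First I would prove $\calT_\calX(0)=0$. For every $y\in\Omega_\calY$ the equivalence property gives $\calF(\calT_\calX(0),\calT_\calY(y))=\calF(0,y)$, and linearity of $\calF$ in its first argument forces $\calF(0,y)=0$. Hence $\calF(\calT_\calX(0),\calT_\calY(y))=0$ for all $y\in\Omega_\calY$; letting $y':=\calT_\calY(y)$ sweep across $\Omega_\calY$ by the surjectivity noted above, we obtain $\calF(\calT_\calX(0),y')=0$ for every $y'\in\Omega_\calY$, and the no-left-annihilator hypothesis yields $\calT_\calX(0)=0$.

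For the additivity/homogeneity statement, fix $x_1,x_2\in\Omega_\calX$ and scalars $a_1,a_2$ with $a_1x_1+a_2x_2\in\Omega_\calX$, and set
\[
w:=\calT_\calX(a_1x_1+a_2x_2)-a_1\calT_\calX(x_1)-a_2\calT_\calX(x_2)\in\calX.
\]
For any $y\in\Omega_\calY$, linearity of $\calF(\cdot,\calT_\calY(y))$ on the ambient space $\calX$ lets me expand $\calF(w,\calT_\calY(y))$ into three terms; applying the equivalence property $\calF(\calT_\calX(x),\calT_\calY(y))=\calF(x,y)$ to each converts these into $\calF(a_1x_1+a_2x_2,y)-a_1\calF(x_1,y)-a_2\calF(x_2,y)$, which vanishes by bilinearity in the first slot. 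Sweeping $\calT_\calY(y)$ over all of $\Omega_\calY$ and invoking the no-left-annihilator hypothesis gives $w=0$, i.e. the claimed identity. The statement for a linear vector space $\Omega_\calX$ is then immediate, since closure under linear combinations makes the hypothesis $a_1x_1+a_2x_2\in\Omega_\calX$ automatic, so the identity holds for all $x_1,x_2,a_1,a_2$. The three claims about $\calT_\calY$ follow by the symmetric argument, interchanging the two factors and using the no-right-annihilator hypothesis together with the bijectivity of $\calT_\calX$.

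I expect the only delicate point to be the bookkeeping around domains: $w$ need not lie in $\Omega_\calX$, so one must invoke linearity of $\calF(\cdot,y')$ on the ambient space $\calX$ rather than on $\Omega_\calX$; and the test vectors $\calT_\calY(y)$ must be argued to \emph{exhaust} $\Omega_\calY$ rather than merely lie in it, which is precisely where the group-inverse axiom (hence bijectivity of $\calT_\calY$) is indispensable. Everything else is a routine application of bilinearity.
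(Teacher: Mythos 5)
Your proposal is correct and follows essentially the same route as the paper's proof: both use the invariance $\calF=\calF\circ\calT$ together with bilinearity, then apply the no-annihilator hypothesis after ensuring the test vectors range over all of $\Omega_\calY$ via the group-inverse of $\calT_\calY$. The only cosmetic difference is that the paper substitutes $\calT_\calY^{-1}(y)$ directly for an arbitrary $y\in\Omega_\calY$, whereas you phrase the same step as surjectivity of $\calT_\calY$ sweeping $\calT_\calY(y)$ over $\Omega_\calY$.
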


\begin{proof}
Due to the symmetry, we only need to prove the results for $\calT_\calX$.

If $0\in \Omega_\calX$, then $\calF(\calT_\calX(0),y) = \calF(\calT(0,\calT_\calY^{-1}(y)))=\calF(0,\calT_\calY^{-1}(y))=0$ for all $y\in \Omega_\calY$. By assumption, there is no non-trivial left annihilator of $\Omega_\calY$. Therefore, $\calT_\calX(0)=0$.

If $a_1x_1+a_2x_2\in \Omega_\calX$, then
\begin{align*}
& \calF(\calT_\calX(a_1x_1+a_2x_2),y)\\
= & \calF(\calT(a_1x_1+a_2x_2,\calT_\calY^{-1}(y)))\\
= & \calF(a_1x_1+a_2x_2,\calT_\calY^{-1}(y))\\
= & a_1\calF(x_1,\calT_\calY^{-1}(y))+a_2\calF(x_2,\calT_\calY^{-1}(y))\\
= & a_1\calF(\calT_\calX(x_1),y)+a_2\calF(\calT_\calX(x_2),y)\\
= & \calF(a_1\calT_\calX(x_1)+a_2\calT_\calX(x_2),y).
\end{align*}
Then $\calF(\calT_\calX(a_1x_1+a_2x_2)-(a_1\calT_\calX(x_1)+a_2\calT_\calX(x_2)),y)=0$ for all $y\in \Omega_\calY$. There is no non-trivial left annihilator of $\Omega_\calY$. Hence $\calT_\calX(a_1x_1+a_2x_2)=a_1\calT_\calX(x_1)+a_2\calT_\calX(x_2)$, and $\calT_\calX$ is a linear transformation if $\Omega_\calX$ is a linear vector space.
\end{proof}

Bilinear mappings that arise in applications usually have no non-trivial left or right annihilators. Therefore, common equivalence transformations, such as scaling and shift, are linear transformations. However, there are examples where equivalence transformations are nonlinear (cf. Appendix \ref{app:nta}).

Before proceeding to identifiability, let us consider the following blind deconvolution problem as a concrete example. The measurement is $z=x_0\circledast y_0\in \bbC^{n}$.
\begin{align*}
\text{find}~~&(x,y),\\
\text{s.t.}~~&x\circledast y = z,\\
& x \in \bbC^n,~y \in \bbC^n.
\end{align*}
Define transformation groups $\scrT_\calX,\scrT_\calY$ on $\calX=\calY=\bbC^n$:
\[
\scrT_\calX=\scrT_\calY=\{\calT_{\bbC^n}: \calT_{\bbC^n}(x)=\sigma S_\ell (x), \text{ for some nonzero $\sigma\in \bbC$ and some integer $\ell$}\},
\]
where the linear transformation $S_\ell$ is the circular shift by $\ell$, defined as follows. If $x=S_\ell(x_0)$, then $x^{(j)}=x_0^{(k)}$ for all $1\leq j,k\leq n$ where $j-k=\ell$ (modulo $n$). Then the following subgroup $\scrT\subset\scrT_\calX\times \scrT_\calY$ is a transformation group associated with circular convolution:
\begin{equation}
\scrT = \left\{\calT: \calT(x,y)=\left(\sigma S_\ell (x),\frac{1}{\sigma} S_{-\ell} (y)\right),  \text{ for some nonzero $\sigma\in \bbC$ and some integer $\ell$}\right\}.
\end{equation}
Note that $\scrT$ is a transformation group associated with circular convolution, and a subgroup of $\scrT_\calX\times\scrT_\calY$. However, it is not separable, i.e., it cannot be written as the direct product of two transformation groups. Furthermore, $\scrT$ is not the ambiguity transformation group, because it does not capture all the ambiguities of the above blind deconvolution problem. For example, there exist non-trivial vectors $u,v\in\bbC^n$ such that $u\circledast v$ is the kronecker delta. Thus, $(x\circledast u, y\circledast v)$ is an equivalent pair of $(x,y)$. The set of such transformations is not contained in $\scrT$.

\subsection{Identifiability up to a Transformation Group}\label{sec:iutg}

The concept of identifiability should be generalized to allow unique recovery up to the ambiguity transformation group. If the equivalence class containing the solution can be uniquely identified, the solution is considered identifiable.
\begin{definition}\label{def:iutg}
In the constrained BIP, the solution $(x_0,y_0)$ in which $x_0\neq 0,y_0\neq 0$ is said to be identifiable up to a transformation group $\scrT$, if every solution $(x,y)$ satisfies that $(x,y)=\calT(x_0,y_0)$ for some $\calT\in\scrT$, or equivalently, $(x,y)\in[(x_0,y_0)]_{\scrT}$.
\end{definition}

In general, the ambiguity transformation group for a certain BIP may not be known a priori. It may require some insight to capture all the ambiguities inherent in the problem. However, we can tell whether or not a given transformation group is the ambiguity transformation group by checking the identifiability. If there exists an identifiability result up to this transformation group, it has to be the largest.
If the constraint sets $\Omega_\calX$ and $\Omega_\calY$ are closed under scalar multiplication, then one can start by checking the group of scaling transformations defined in \eqref{eq:24}. For some BIPs, the ambiguities go beyond scaling ambiguity. Hence we have to choose larger transformation groups. An example is BGPC with a joint sparsity constraint (Section \ref{sec:equivalence}).

We derive a necessary and sufficient condition for identifiability in Theorem \ref{thm:ibip}, and a more intuitive sufficient condition in Corollary \ref{cor:ibip}. Here is how we interpret these results: In order to prove that certain conditions are sufficient to guarantee identifiability up to a transformation group, it suffices to first show that $x_0$ can be identified up to the transformation group; and then show that once $x_0$ is identified and substituted in the problem, $y_0$ can be identified.
By the symmetry of the problem, we can derive another sufficient condition by switching the roles of $x_0$ and $y_0$.

\begin{theorem}\label{thm:ibip}
In the constrained BIP, the pair $(x_0,y_0)$ ($x_0\neq 0,y_0\neq 0$) is identifiable up to $\scrT$ if and only if the following two conditions are met:
\begin{enumerate}
	\item If $\calF(x,y)=\calF(x_0,y_0)$, then $x\in [x_0]^L_{\scrT}$.
	\item If $\calF(x_0,y)=\calF(x_0,y_0)$, then $(x_0,y)\in [(x_0,y_0)]_{\scrT}$.
\end{enumerate}
\end{theorem}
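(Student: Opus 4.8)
The plan is to prove the biconditional in two directions. The forward direction (identifiability implies conditions 1 and 2) should be nearly immediate from the definitions, so I would dispatch it first. Assume $(x_0,y_0)$ is identifiable up to $\scrT$. For condition 1, suppose $\calF(x,y)=\calF(x_0,y_0)$; then $(x,y)$ is a solution of the constrained BIP, so by Definition \ref{def:iutg} we have $(x,y)\in[(x_0,y_0)]_\scrT$, and the definition of the left equivalence class $[x_0]^L_\scrT$ then yields $x\in[x_0]^L_\scrT$. For condition 2, apply the same reasoning to the special solution $(x_0,y)$: from $\calF(x_0,y)=\calF(x_0,y_0)$ we get that $(x_0,y)$ is a solution, hence $(x_0,y)\in[(x_0,y_0)]_\scrT$ directly. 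So the forward direction is essentially a restatement of identifiability for two nested classes of solutions.

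The substantive direction is the converse: assuming conditions 1 and 2, I must show that \emph{every} solution $(x,y)$ lies in $[(x_0,y_0)]_\scrT$. The two-stage strategy flagged in the paragraph preceding the theorem is the guide here. Take an arbitrary solution $(x,y)$, so $\calF(x,y)=\calF(x_0,y_0)$. By condition 1, $x\in[x_0]^L_\scrT$, which by the definition of the left equivalence class means there exist $y_0',y'\in\Omega_\calY$ with $(x,y')\in[(x_0,y_0')]_\scrT$. The goal is to "reduce" this solution to one with first coordinate exactly $x_0$ so that condition 2 can be invoked. The natural move is to pick the equivalence transformation $\calT=(\calT_\calX,\calT_\calY)\in\scrT$ witnessing $x\in[x_0]^L_\scrT$ (so $\calT_\calX(x_0)=x$ for an appropriate exemplar), and apply its inverse $\calT^{-1}\in\scrT$ to the pair $(x,y)$. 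Since $\scrT$ is associated with $\calF$ (Definition \ref{def:tgb}, property 2), $\calF$ is invariant under every element of $\scrT$, so $\calF(\calT^{-1}(x,y))=\calF(x,y)=\calF(x_0,y_0)$, and the first coordinate of $\calT^{-1}(x,y)$ is $\calT_\calX^{-1}(x)=x_0$. Writing $\calT^{-1}(x,y)=(x_0,\tilde y)$, invariance gives $\calF(x_0,\tilde y)=\calF(x_0,y_0)$, so condition 2 applies to yield $(x_0,\tilde y)\in[(x_0,y_0)]_\scrT$. Finally, applying $\calT$ (which is in the group $\scrT$, and orbits are $\scrT$-invariant) returns $(x,y)=\calT(x_0,\tilde y)\in[(x_0,y_0)]_\scrT$, as desired.

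The main obstacle, and the point requiring the most care, is the bookkeeping in passing from the left equivalence class back to a genuine equivalence transformation on the \emph{pair}. The statement $x\in[x_0]^L_\scrT$ only asserts the existence of \emph{some} $y_0',y'$ making $(x,y')\in[(x_0,y_0')]_\scrT$; it does not a priori tie this to the specific $y$ in our solution $(x,y)$. I must therefore argue carefully that the transformation $\calT\in\scrT$ witnessing the membership $x\in[x_0]^L_\scrT$ can be used directly on $(x,y)$ — which is legitimate because $\scrT\subset\scrT_\calX\times\scrT_\calY$ acts as a direct product, so $\calT_\calX$ acts on the first coordinate independently of the second, and the invariance of $\calF$ under all of $\scrT$ (not just on the orbit of one exemplar) is exactly property 2 of Definition \ref{def:tgb}. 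I would state explicitly that this invariance holds for \emph{all} pairs in $\Omega_\calX\times\Omega_\calY$, which is what licenses applying $\calT^{-1}$ to our particular $(x,y)$ and preserving the image. Once this independence is made precise, the chain of equalities closes cleanly and the result follows.
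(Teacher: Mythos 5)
Your proof is correct and follows essentially the same route as the paper's: necessity is read off directly from Definition \ref{def:iutg}, and for sufficiency you extract a transformation $\calT_1=(\calT_{\calX,1},\calT_{\calY,1})\in\scrT$ witnessing Condition 1, apply $\calT_1^{-1}$ to the solution $(x,y)$ to obtain a solution of the form $(x_0,\tilde y)$ with the same image under $\calF$, invoke Condition 2, and map back with $\calT_1$. The point you flag about the direct-product structure of $\scrT$ and the invariance of $\calF$ holding on all of $\Omega_\calX\times\Omega_\calY$ is precisely what licenses the paper's chain of equalities as well.
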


\begin{proof}
To prove sufficiency, we suppose Conditions 1 and 2 are met. Let $\calF(x,y)=\calF(x_0,y_0)$ for nonzero $x_0,y_0$. Then, by Condition 1, $x\in [x_0]^L_{\scrT}$. Hence, there exists $\calT_1=(\calT_{\calX,1},\calT_{\calY,1})\in\scrT$ such that $x=\calT_{\calX,1}(x_0)$. Therefore $\calF(x_0,y_0) = \calF(x,y)=\calF(\calT_1^{-1}(x,y))=\calF(x_0,\calT_{\calY,1}^{-1}(y))$. By Condition 2, there exists $\calT_2\in\scrT$ such that $(x_0,\calT_{\calY,1}^{-1}(y))=\calT_2(x_0,y_0)$. Hence $(x,y)=\calT_1(x_0,\calT_{\calY,1}^{-1}(y))=\calT_1\circ\calT_2(x_0,y_0)$, and $(x_0,y_0)$ is identifiable up to $\scrT$.

Next we prove necessity. Given that $(x_0,y_0)$ ($x_0\neq 0,y_0\neq 0$) is identifiable up to $\scrT$, by Definition \ref{def:iutg}, if $\calF(x,y)=\calF(x_0,y_0)$, then $(x,y)\in [(x_0,y_0)]_{\scrT}$. The necessity of Conditions 1 and 2 follows.
\end{proof}

\begin{corollary}\label{cor:ibip}
In the constrained BIP, the pair $(x_0,y_0)$ ($x_0\neq 0,y_0\neq 0$) is identifiable up to $\scrT$ if the following two conditions are met:
\begin{enumerate}
	\item If $\calF(x,y)=\calF(x_0,y_0)$, then $x\in [x_0]^L_{\scrT}$.
	\item If $\calF(x_0,y)=\calF(x_0,y_0)$, then $y=y_0$.
\end{enumerate}
Furthermore, if $\calF$ has no non-trivial right annihilator of $\Omega_\calX$, and for $(\calT_\calX,\calT_\calY)\in\scrT$,  $\calT_\calX(x_0)=x_0$ only if $\calT_\calX = \mathbf{1}_\calX$, then the sufficient conditions above are also necessary.
\end{corollary}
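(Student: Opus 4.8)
The plan is to deduce both directions of the corollary directly from Theorem~\ref{thm:ibip}, treating the corollary's Condition~2 (namely $y=y_0$) as a sharpened form of the theorem's Condition~2 (namely $(x_0,y)\in[(x_0,y_0)]_{\scrT}$). This reduces everything to bookkeeping plus one genuine argument in the necessity direction.

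For sufficiency I would observe that the corollary's Condition~2 immediately implies the theorem's Condition~2: if $\calF(x_0,y)=\calF(x_0,y_0)$ forces $y=y_0$, then $(x_0,y)=(x_0,y_0)=\mathbf{1}(x_0,y_0)\in[(x_0,y_0)]_{\scrT}$, since the identity $\mathbf{1}=(\mathbf{1}_\calX,\mathbf{1}_\calY)$ lies in $\scrT$ by Definition~\ref{def:tg}. Condition~1 is verbatim the same in both statements. Hence both hypotheses of Theorem~\ref{thm:ibip} are met, and identifiability up to $\scrT$ follows at once.

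For necessity I would start from identifiability up to $\scrT$ and invoke the necessity half of Theorem~\ref{thm:ibip} to obtain Condition~1 (which is already the corollary's Condition~1) together with the weaker statement that $\calF(x_0,y)=\calF(x_0,y_0)$ implies $(x_0,y)\in[(x_0,y_0)]_{\scrT}$. The task then reduces to upgrading this to $y=y_0$ using the two extra hypotheses. Writing $(x_0,y)=\calT(x_0,y_0)=(\calT_\calX(x_0),\calT_\calY(y_0))$ for some $\calT=(\calT_\calX,\calT_\calY)\in\scrT$, I would read off $\calT_\calX(x_0)=x_0$, whereupon the stabilizer hypothesis forces $\calT_\calX=\mathbf{1}_\calX$.

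The decisive step, which I expect to be the main obstacle, is concluding $\calT_\calY=\mathbf{1}_\calY$; this is the only place where the no-right-annihilator hypothesis is used, and it must be combined with the stabilizer conclusion just obtained. Since $\calT$ is an equivalence transformation associated with $\calF$ and $\calT_\calX=\mathbf{1}_\calX$, Definition~\ref{def:tgb} gives $\calF(x,\calT_\calY(y))=\calF(\calT_\calX(x),\calT_\calY(y))=\calF(x,y)$ for every $(x,y)\in\Omega_\calX\times\Omega_\calY$. Fixing $y$ and subtracting, linearity of $\calF$ in its second argument yields $\calF(x,\calT_\calY(y)-y)=0$ for all $x\in\Omega_\calX$, so $\calT_\calY(y)-y$ is a right annihilator of $\Omega_\calX$ and must vanish. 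Thus $\calT_\calY(y)=y$ for every $y\in\Omega_\calY$, i.e.\ $\calT_\calY=\mathbf{1}_\calY$, and therefore $y=\calT_\calY(y_0)=y_0$, establishing the corollary's Condition~2. The one subtlety to watch is that $\calT_\calY(y)-y$ is an element of the ambient space $\calY$ over which $\calF$ is bilinear, so the subtraction step is legitimate even though $\calT_\calY(y)-y$ need not lie in the constraint set $\Omega_\calY$.
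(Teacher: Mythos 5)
Your proposal is correct and follows essentially the same route as the paper's own proof: sufficiency by noting that $y=y_0$ implies $(x_0,y)=\mathbf{1}(x_0,y_0)\in[(x_0,y_0)]_{\scrT}$ and invoking Theorem~\ref{thm:ibip}, and necessity by using the stabilizer hypothesis to force $\calT_\calX=\mathbf{1}_\calX$, then applying $\calF=\calF\circ\calT$ and the no-right-annihilator assumption to conclude $\calT_\calY=\mathbf{1}_\calY$. Your closing remark that $\calT_\calY(y)-y$ need only lie in the ambient space $\calY$ is a careful observation that the paper leaves implicit, but the argument is otherwise identical.
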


\begin{proof}
Given that $y=y_0$, we have that $(x_0,y)=\mathbf{1}(x_0,y_0)$ and hence $(x_0,y)\in[(x_0,y_0)]_{\scrT}$. Therefore, condition 2 in Corollary \ref{cor:ibip} is more demanding than that of Theorem \ref{thm:ibip}. Sufficiency follows. 

The necessity of condition 1 also follows from Theorem \ref{thm:ibip}. Next we show that with the extra assumptions, condition 2 is also necessary. Given that $(x_0,y_0)$ ($x_0\neq 0,y_0\neq 0$) is identifiable up to $\scrT$, by Theorem \ref{thm:ibip}, if $\calF(x_0,y)=\calF(x_0,y_0)$, then there exists $\calT=(\calT_\calX,\calT_\calY)\in\scrT$ such that $(x_0,y)=\calT(x_0,y_0)$. The first argument $\calT_\calX(x_0)=x_0$, by the extra assumption, $\calT_\calX=\mathbf{1}_\calX$. Now, for all $(x_1,y_1)\in \Omega_\calX\times \Omega_\calY$, $\calF(x_1,y_1) = \calF(\calT(x_1,y_1))=\calF(\mathbf{1}_\calX(x_1),\calT_\calY(y_1))=\calF(x_1,\calT_\calY(y_1))$, or equivalently, $\calF(x_1,y_1-\calT_\calY(y_1))=0$. By the extra assumption that $\calF$ has no non-trivial right annihilator of $\Omega_\calX$, $y_1-\calT_\calY(y_1)=0$ for all $y_1\in \Omega_\calY$, or equivalently, $\calT_\calY=\mathbf{1}_\calY$. Therefore, $y=\calT_\calY(y_0)=y_0$, and condition 2 is necessary.
\end{proof}

The extra assumptions in Corollary \ref{cor:ibip} are usually satisfied, which means that Condition 2 is usually also necessary. Indeed, most bilinear mappings that arise in applications have no non-trivial annihilators. The assumption that ``$\calT_\calX(x_0)=x_0$ only if $\calT_\calX=\mathbf{1}_\calX$'' is also true in many scenarios. For example, if $\calT_\calX$ is scaling by a nonzero complex number and $\calT_\calX(x_0)=x_0$ for some nonzero $x_0$, then $\calT_\calX$ has to be identity. 
However, there are examples for which Corollary \ref{cor:ibip} is not necessary (cf. Appendix \ref{app:nta}).


Later in this paper, we repeatedly apply Corollary \ref{cor:ibip} to various scenarios of the blind gain and phase calibration problem and derive sufficient conditions for identifiability up to transformation groups.


\section{BGPC with a Subspace Constraint} \label{sec:subspace}
In this section, we consider the identifiability of the BGPC problem with a subspace constraint. The measurement in the following problem is $Y=\diag(\lambda_0)AX_0$. The known matrix $A\in\bbC^{n\times m}$ is tall ($n>m$). The columns of $\Phi = AX$ reside in a low-dimensional subspace. The constraint sets are $\Omega_\Lambda = \bbC^n$ and $\Omega_\calX=\bbC^{m\times N}$, hence the problem in unconstrained with respect to $\lambda$ and $X$.
\begin{align*}
\text{find}~~&(\lambda,X),\\
\text{s.t.}~~&\diag(\lambda)AX = Y,\\
& \lambda \in \bbC^n,~X \in \bbC^{m\times N}.
\end{align*}

\subsection{Sufficient Condition}
As was mentioned earlier, the BGPC problem suffers from scaling ambiguity. The ambiguity transformation group is defined as follows:
\begin{equation}
\scrT = \{\calT: \calT(\lambda,X)=(\sigma\lambda,\frac{1}{\sigma}X),\text{ for some nonzero $\sigma\in\bbC$}\}.
\label{eq:14}
\end{equation}

Next, we investigate identifiability up to scaling within the framework of Section \ref{sec:bip}. By applying Corollary \ref{cor:ibip}, we provide an alternative proof for the results by Nguyen et al. \cite{Nguyen2013}.
We need the following definition and lemma (See Appendix \ref{app:proofdecomposable} for the proof).
\begin{definition}
The row space of a matrix $A\in\bbC^{n\times m}$ is said to be decomposable if there exists a non-empty proper subset (neither the empty set nor the universal set) $J\subset \{1,2,\cdots,n\}$ and its complement $J^c$ such that $\calR(A)=\calR(A^{(J,:)})\oplus\calR(A^{(J^c,:)})$.
\end{definition}
\begin{lemma}\label{lem:decomposable}
\begin{enumerate}
	\item If $A$ has full row rank, then the row space of $A$ is decomposable.
	\item If $A\in\bbC^{n\times m}$ has full column rank and its row space is not decomposable, then $n>m$.
	\item The row space of $A$ is not decomposable if and only if $\dim(\calR(A))<\dim(\calR(A^{(J,:)}))+\dim(\calR(A^{(J^c,:)}))$ for all non-empty proper subsets $J\subset \{1,2,\cdots,n\}$.
\end{enumerate}
\end{lemma}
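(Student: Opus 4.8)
The plan is to reduce all three statements to the standard dimension formula for a sum of two subspaces, so I would prove part~3 first and then read off parts~1 and~2 from it. The key observation is that the rows indexed by $J$ together with those indexed by $J^c$ are \emph{all} the rows of $A$, so
\[
\calR(A) = \calR(A^{(J,:)}) + \calR(A^{(J^c,:)})
\]
holds for \emph{every} non-empty proper partition, regardless of decomposability. Thus the sum appearing in the definition is always the full row space, and the only question is whether it is a \emph{direct} sum, i.e.\ whether $\calR(A^{(J,:)})\cap\calR(A^{(J^c,:)})=\{0\}$. By the Grassmann identity $\dim(U+V)=\dim U+\dim V-\dim(U\cap V)$, this intersection is trivial exactly when $\dim\calR(A)=\dim\calR(A^{(J,:)})+\dim\calR(A^{(J^c,:)})$. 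Since $\dim(U+V)\le \dim U+\dim V$ always holds, the sum of the two dimensions can never fall below $\dim\calR(A)$; hence failure of equality for a given $J$ is the same as the strict inequality $\dim\calR(A)<\dim\calR(A^{(J,:)})+\dim\calR(A^{(J^c,:)})$. Negating the existential quantifier in the definition of decomposability (``there exists $J$ making the sum direct'') then yields exactly the ``for all $J$'' strict-inequality characterization of part~3.

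For part~1, if $A$ has full row rank its $n$ rows are linearly independent. For any non-empty proper $J$ (such a $J$ exists whenever $n\ge 2$, the only case in which decomposability is even defined), the rows indexed by $J$ and those indexed by $J^c$ are disjoint subsets of this independent family, so $\dim\calR(A^{(J,:)})=|J|$ and $\dim\calR(A^{(J^c,:)})=|J^c|=n-|J|$, and any common vector of the two spans would produce a nontrivial linear dependence among all $n$ rows. Hence the intersection is trivial and the partition decomposes $\calR(A)$; in fact every partition works, so in particular one exists and the row space is decomposable.

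For part~2, full column rank gives $\rank(A)=m$, which forces $n\ge m$, and it also gives $\dim\calR(A)=m$ since row rank equals column rank. It then remains only to exclude $n=m$. If $n=m$ then $A$ is square with full column rank, hence invertible and thus of full row rank, so part~1 shows its row space \emph{is} decomposable. Taking the contrapositive, if the row space is not decomposable then $n\ne m$, and combined with $n\ge m$ this gives $n>m$.

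I do not anticipate a genuine obstacle here: the content is entirely the Grassmann dimension formula together with the elementary rank facts $\rank A\le\min(n,m)$ and row rank $=$ column rank. The single point that needs a little care is the direction of the inequality in part~3 — one must observe that the sum of the subspace dimensions always \emph{dominates} $\dim\calR(A)$, so that ``the dimensions fail to add up'' is automatically the strict inequality rather than an inequality of a priori unknown sign.
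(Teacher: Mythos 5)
Your proof is correct and follows essentially the same route as the paper's: both rest on the observation that $\calR(A)=\calR(A^{(J,:)})+\calR(A^{(J^c,:)})$ holds for every partition, so decomposability reduces to the dimension count $\dim(\calR(A))=\dim(\calR(A^{(J,:)}))+\dim(\calR(A^{(J^c,:)}))$, with part 1 following from full row rank making all these dimensions add to $n$, and part 2 from the contrapositive of part 1 plus the contradiction that a square matrix of full column rank has full row rank. The only cosmetic differences are that you prove part 3 first and invoke the Grassmann identity explicitly, where the paper asserts the direct-sum/dimension equivalence directly.
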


Nguyen et al. \cite{Nguyen2013} referred to the property that ``$A$ has full column rank and its row space is not decomposable'' as ``nonseparable full rank''. Here is our restatement of the identifiability result followed by an alternative proof.
\begin{theorem}\label{thm:subspace}
In the BGPC problem with a subspace constraint, the pair $(\lambda_0,X_0)\in \bbC^n\times \bbC^{m\times N}$ is identifiable up to an unknown scaling if the following conditions are met:
\begin{enumerate}
	\item Vector $\lambda_0$ is non-vanishing, i.e., all the entries of $\lambda_0$ are nonzero.
	\item Matrix $X_0$ has full row rank.
	\item Matrix $A$ has full column rank and its row space is not decomposable.
\end{enumerate}
\end{theorem}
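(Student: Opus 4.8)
The plan is to invoke Corollary~\ref{cor:ibip} with the bilinear map $\calF(\lambda,X)=\diag(\lambda)AX$, treating $\lambda$ as the first variable and $X$ as the second, and with $\scrT$ the scaling group in \eqref{eq:14}. Since the orbit of $(\lambda_0,X_0)$ under $\scrT$ is $\{(\sigma\lambda_0,\frac1\sigma X_0):\sigma\neq0\}$, the left equivalence class is $[\lambda_0]^L_\scrT=\{\sigma\lambda_0:\sigma\neq0\}$, so the two hypotheses of the corollary specialize to: (i) whenever $\diag(\lambda)AX=\diag(\lambda_0)AX_0$ we must have $\lambda=\sigma\lambda_0$ for some $\sigma\neq0$; and (ii) whenever $\diag(\lambda_0)AX=\diag(\lambda_0)AX_0$ we must have $X=X_0$. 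Hypothesis (ii) is immediate: Condition~1 makes $\diag(\lambda_0)$ invertible and Condition~3 makes $A$ left-invertible, so cancelling both factors gives $X=X_0$. The real work is in establishing hypothesis (i).

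For hypothesis (i), I would first pin down the column space of $Y=\diag(\lambda_0)AX_0$. Because $X_0$ has full row rank (Condition~2), its columns span $\bbC^m$, so $\calC(Y)=\calC(\diag(\lambda_0)A)$, an $m$-dimensional space since $\diag(\lambda_0)A$ has full column rank. From $Y=\diag(\lambda)AX$ we also get $\calC(Y)\subseteq\calC(\diag(\lambda)A)$, whose dimension is at most $m$; comparing dimensions forces $\calC(\diag(\lambda)A)=\calC(\diag(\lambda_0)A)$ and shows $\diag(\lambda)A$ has full column rank as well. Two full-column-rank matrices with the same column space differ by an invertible change of basis, so there is an invertible $M\in\bbC^{m\times m}$ with $\diag(\lambda)A=\diag(\lambda_0)AM$. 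Multiplying by $\diag(\lambda_0)^{-1}$ and setting $d^{(j)}=\lambda^{(j)}/\lambda_0^{(j)}$ yields the key identity $\diag(d)A=AM$, equivalently $A^{(j,:)}M=d^{(j)}A^{(j,:)}$ for every row $j$: each row of $A$ is a left eigenvector of $M$ with eigenvalue $d^{(j)}$. It then remains to show that the vector $d$ is constant.

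The crux, and the step I expect to be the main obstacle, is extracting from non-decomposability that all the $d^{(j)}$ coincide. First, non-decomposability forbids zero rows: if $A^{(j,:)}=0$, then $J=\{j\}$ would give $\calR(A)=\{0\}\oplus\calR(A^{(J^c,:)})$, a valid decomposition. Hence every row is a genuine nonzero left eigenvector and each $d^{(j)}$ is an eigenvalue of $M$. Now suppose, for contradiction, that the $d^{(j)}$ take at least two distinct values $\nu_1,\dots,\nu_k$ with $k\geq2$, and partition the row indices into the nonempty sets $J_i=\{j:d^{(j)}=\nu_i\}$. Any vector in $\calR(A^{(J_i,:)})$ is a left eigenvector of $M$ for $\nu_i$ (or zero), and since left eigenvectors for distinct eigenvalues are linearly independent, the subspaces $\calR(A^{(J_i,:)})$ are in direct sum. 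Taking $J=J_1$ and $J^c=J_2\cup\cdots\cup J_k$ then gives $\calR(A)=\calR(A^{(J,:)})\oplus\calR(A^{(J^c,:)})$ with $J$ nonempty and proper, contradicting Condition~3.

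Therefore $k=1$, so $d$ is a constant $\sigma$ and $\lambda=\sigma\lambda_0$; finally $\sigma\neq0$ because $Y\neq0$ (as $AX_0\neq0$ and $\diag(\lambda_0)$ is invertible) forces $\lambda\neq0$. This verifies hypothesis (i), and together with hypothesis (ii) it completes the application of Corollary~\ref{cor:ibip} and hence the proof.
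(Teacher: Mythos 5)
Your proof is correct, and while it sets up the argument exactly as the paper does---invoking Corollary \ref{cor:ibip}, disposing of its Condition 2 by the invertibility of $\diag(\lambda_0)$ and left-invertibility of $A$, and partitioning the row indices by the distinct values of the entrywise ratio $\lambda./\lambda_0$---the crux of your verification of Condition 1 is genuinely different. The paper never introduces your change-of-basis matrix $M$: it instead stacks $B=[A,\diag(\gamma)A]$, shows $\rank(B)=m$, carefully selects a basis $J^b$ of rows (after first proving $X_1$ has full row rank and $\lambda_1$ is non-vanishing), and reaches a contradiction by comparing the coefficients of a particular row $A^{(j_0,:)}$ in two representations, forcing two distinct ratio values to coincide. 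You make the relation $\diag(d)A=AM$ explicit, read each nonzero row of $A$ as a left eigenvector of the invertible $M$, and use the fact that eigenspaces for distinct eigenvalues are in direct sum to produce, from any two distinct ratio values, an actual decomposition $\calR(A)=\calR(A^{(J,:)})\oplus\calR(A^{(J^c,:)})$ contradicting Condition 3 directly. Your route is shorter and more conceptual---it makes the role of non-decomposability transparent and sidesteps the paper's preliminary steps of establishing that $\lambda_1$ is non-vanishing and that $X_1$ has full row rank (your column-space dimension count needs no property of $X$ at all)---whereas the paper's argument is more elementary, using only explicit bases and uniqueness of linear representations rather than spectral structure.
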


\begin{proof}
We apply Corollary \ref{cor:ibip} to the BGPC problem, and verify that the two conditions in the corollary are satisfied. First, since the vector $\lambda_0$ is non-vanishing and the matrix $A$ has full column rank, $\diag(\lambda_0)A$ has full column rank. It follows that if $\diag(\lambda_0)AX_0=\diag(\lambda_0)AX_1$, then $X_1=X_0$. Hence, given $\lambda_0$, the recovery of $X_0$ is unique. This verifies Condition 2 in Corollary \ref{cor:ibip}. To verify Condition 1, we only need to show that $\lambda_0$ is identifiable up to scaling.

We prove by contradiction. Suppose the opposite, that there exists $(\lambda_1,X_1)$ such that $\diag(\lambda_0)A X_0=\diag(\lambda_1)AX_1$ but $\lambda_1 \notin [\lambda_0]^L_{\scrT}$. Recall that all the entries of $\lambda_0$ are nonzero, $A$ has full column rank and $X_0$ has full row rank. Therefore, $\rank(\diag(\lambda_0)A X_0)=\rank(\diag(\lambda_1)AX_1)=m$, and $X_1$ too has full row rank. Since the row space of $A$ is not decomposable, there are no zero rows in $A$. Because $X_0$ and $X_1$ have full row rank, it follows that there are no zero rows in $AX_0$ or $AX_1$. The vector $\lambda_0$ is non-vanishing, hence $\lambda_1$ too is non-vanishing. Let $\gamma=\lambda_1./\lambda_0$ denote the entrywise ratio of $\lambda_1$ over $\lambda_0$, where $\gamma^{(j)}={\lambda_1^{(j)}}/{\lambda_0^{(j)}}\neq 0, j=1,2,\cdots,n$. By the assumption that $\lambda_1\notin [\lambda_0]^L_{\scrT}$, the entrywise ratio is not the repetition of the same number, i.e., there exist $j_1,j_2$ such that $\gamma^{(j_1)}\neq \gamma^{(j_2)}$. Let $T$ denote the number of distinct values of $\gamma^{(j)}$. Create a partition of the index set $\{1,2,\cdots,n\}$, denoted by $J_1,J_2,\cdots,J_T$, such that $\gamma^{(j)}=\gamma_t$ for all $j\in J_{t}$, $t=1,2,\cdots,T$. Note that $\gamma_1,\gamma_2,\cdots,\gamma_T$ are the distinct values of $\gamma^{(j)}$.

Consider the row spaces of $A$:
\begin{equation}
\calR(A) = \sum_{t=1}^{T}{\calR(A^{(J_t,:)})}.
\label{eq:2}
\end{equation}
Denote the dimension of $\calR(A^{(J_t,:)})$ by $m_t$. Then there exists a subset $J_t^b\subset J_t$ such that $|J_t^b|=m_t$ and the rows of $A^{(J_t^b,:)}$ form a basis of $\calR(A^{(J_t,:)})$. By the condition that the row space of $A$ is not decomposable, the sum in \eqref{eq:2} is not a direct sum, hence $m=\rank(A) < \sum_{t=1}^{T}m_t$. Furthermore, by \eqref{eq:2}, there exists a subset
\[
J^b=\{j_1,j_2,\cdots,j_m\}\subset \bigcup_{t=1}^{T}J_t^b \subset \bigcup_{t=1}^{T}J_t = \{1,2,\cdots,n\}
\]
such that $|J^b|=m$ and the rows of $A^{(J^b,:)}$ form a basis of $\calR(A)$.
The set $(\bigcup_{t=1}^{T}J_t^b)\setminus J^b$ is not empty because $m<\sum_{t=1}^{T}{m_t}$. Without loss of generality, we may assume that there exists $j_0\in J_1^b\setminus J^b$. The row $A^{(j_0,:)}$ can be written as a linear combination of the rows of $A^{(J^b,:)}$ and the representation is unique. We denote the representation by:
\begin{equation}
A^{(j_0,:)}=\alpha_{j_1}A^{(j_1,:)}+\alpha_{j_2}A^{(j_2,:)}+\cdots+\alpha_{j_m}A^{(j_m,:)}. 
\label{eq:3}
\end{equation}
The rows of $A^{(J^b_1,:)}$ are linearly independent, and $j_0\notin J^b_1\bigcap J^b$, hence $A^{(j_0,:)}$ cannot be written as a linear combination of the rows of $A^{(J^b_1\bigcap J^b,:)}$; there exists at least one nonzero term in the representation \eqref{eq:3} corresponding to one of the rows of $A^{(J^b\setminus J^b_1,:)}$. Thus, without loss of generality, there exists $j_1\in J^b\bigcap J_2^b$, such that $\alpha_{j_1}\neq 0$. 

Recall that $\rank(\diag(\lambda_0)A X_0)=\rank(\diag(\lambda_1)AX_1)=m$, and $X_0$ and $X_1$ have full row rank $m$. Therefore, the column spaces satisfy:
\[
\calC(\diag(\lambda_0)A) = \calC(\diag(\lambda_0)A X_0) = \calC(\diag(\lambda_1)AX_1)= \calC(\diag(\lambda_1)A).\]
Hence $\rank([\diag(\lambda_0)A,\diag(\lambda_1)A]) = m$. Defining matrix
\[
B \coloneqq [A,\diag(\gamma)A] = [\diag(\lambda_0)]^{-1} [\diag(\lambda_0)A,\diag(\lambda_1)A].
\]
We have that
\begin{equation}
\rank(B)=\rank([\diag(\lambda_0)A,\diag(\lambda_1)A]) = m.
\label{eq:1}
\end{equation}
Then we consider the row spaces of $B$. The dimension of the row space $\calR(B^{(J_t,:)})=\calR([A^{(J_t,:)},\gamma_t A^{(J_t,:)}])$ is also $m_t$, and the rows of $B^{(J^b_t,:)}$ form a basis of the above row space. The rows of $B^{(J^b,:)}$ form a linearly independent set of cardinality $m$. By $\eqref{eq:1}$, the rows of $B^{(J^b,:)}$ form a basis of $\calR(B)$.
The row $B^{(j_0,:)}$ can be can be written as a linear combination of the rows of $B^{(J^b,:)}$. We denote the representation by:
\begin{equation}
B^{(j_0,:)}=\beta_{j_1}B^{(j_1,:)}+\beta_{j_2}B^{(j_2,:)}+\cdots+\beta_{j_m}B^{(j_m,:)}.
\label{eq:4}
\end{equation}
Recall that $j_0\in J^b_1$, $j_1\in J^b_2$, hence $\gamma^{(j_0)} = \gamma_1$, and $\gamma^{(j_1)}= \gamma_2$. Using the definition of $B$, we rewrite \eqref{eq:4} as:
\begin{align}
A^{(j_0,:)}&=\beta_{j_1}A^{(j_1,:)}+\beta_{j_2}A^{(j_2,:)}+\cdots+\beta_{j_m}A^{(j_m,:)},
\label{eq:5} \\
\gamma_1 A^{(j_0,:)}&=\beta_{j_1}\gamma_2A^{(j_1,:)}+\beta_{j_2}\gamma^{(j_2)}A^{(j_2,:)}+\cdots+\beta_{j_m}\gamma^{(j_m)}A^{(j_m,:)}.
\label{eq:6}
\end{align}
Since the representation in \eqref{eq:3} is unqiue, the representations in \eqref{eq:5} and \eqref{eq:6} must satisfy:
\[
\beta_{j_1} = \beta_{j_1}\frac{\gamma_2}{\gamma_1}=\alpha_{j_1}\neq 0.
\]
It follows that $\gamma_1=\gamma_2$, which contradicts the assumption that $\gamma_1$ and $\gamma_2$ are distinct. Hence the assumption that $\lambda_1\notin [\lambda_0]^L_{\scrT}$ is false, and $\lambda_0$ is identifiable up to an unknown scaling.
\end{proof}

For generic signals, we can show that Theorem \ref{thm:subspace} reduces to a simple condition (Corollary \ref{cor:subspace}) on the dimensions $n$, $m$ and $N$. We say that a property holds for almost all signals if the property holds for all signals but a set of measure zero.
\begin{corollary}\label{cor:subspace}
In the BGPC problem with a subspace constraint, if $n>m$ and $N\geq m$, then $(\lambda_0,X_0)$ is identifiable up to an unknown scaling for almost all $\lambda_0\in\bbC^n$, almost all $X_0\in\bbC^{m\times N}$ and almost all $A\in\bbC^{n\times m}$.
\end{corollary}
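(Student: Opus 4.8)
The plan is to reduce the corollary to the three hypotheses of Theorem~\ref{thm:subspace} and argue that each holds outside a set of measure zero, so that their conjunction fails only on a measure-zero set in the product space $\bbC^n\times\bbC^{m\times N}\times\bbC^{n\times m}$. The first two hypotheses are routine genericity statements. Condition~1 (that $\lambda_0$ be non-vanishing) fails only on the union of the $n$ coordinate hyperplanes $\{\lambda_0^{(j)}=0\}$, a measure-zero set. Condition~2 (that $X_0$ have full row rank $m$) requires $N\geq m$, which is assumed; the set of rank-deficient $X_0$ is the common zero set of all $m\times m$ minors, a proper algebraic subvariety and hence of measure zero. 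Likewise full column rank of $A$ (part of Condition~3) fails only where all $m\times m$ minors of $A$ vanish, again measure zero since $n>m$.

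The substantive part is the non-decomposability of $\calR(A)$ in Condition~3, and here I would invoke part~3 of Lemma~\ref{lem:decomposable}: the row space is not decomposable precisely when $\rank(A)<\rank(A^{(J,:)})+\rank(A^{(J^c,:)})$ for every non-empty proper $J\subset\{1,\dots,n\}$. I would show that for generic $A$ each submatrix attains its maximal possible rank, namely $\rank(A^{(J,:)})=\min(|J|,m)$, and then verify the strict inequality by a short case analysis on the sizes $|J|$ and $|J^c|$. When both $|J|,|J^c|\geq m$ the right-hand side is $2m>m$; when exactly one of them is below $m$, say $|J|<m\leq|J^c|$, the right-hand side is $|J|+m>m$; and when both are below $m$, the right-hand side equals $|J|+|J^c|=n>m$, where the final inequality is exactly the hypothesis $n>m$. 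In every case the strict inequality $\rank(A)=m<\text{RHS}$ holds, so $\calR(A)$ is not decomposable.

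It remains to control the exceptional set for Condition~3. For each fixed non-empty proper $J$, the event $\rank(A^{(J,:)})<\min(|J|,m)$ is the simultaneous vanishing of all $\min(|J|,m)\times\min(|J|,m)$ minors of $A^{(J,:)}$; since one can exhibit an $A$ for which some such minor is nonzero, this is a proper algebraic subvariety and hence measure zero. Because there are only finitely many subsets $J$ (namely $2^n-2$ of them), the union of these exceptional sets, together with the rank-deficiency set of $A$ itself, remains measure zero. Intersecting the three generic full-measure sets across the three factors, I would conclude that all hypotheses of Theorem~\ref{thm:subspace} hold simultaneously for almost all $(\lambda_0,X_0,A)$, which yields identifiability up to scaling.

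I expect the only delicate point to be the genericity bookkeeping for Condition~3 --- specifically, confirming that attaining the maximal submatrix ranks $\min(|J|,m)$ simultaneously for all $J$ is a full-measure event, and that the case analysis in Lemma~\ref{lem:decomposable}(3) genuinely uses the hypothesis $n>m$ (in the case where both $|J|$ and $|J^c|$ are smaller than $m$). The remaining steps are standard applications of the fact that the zero set of a non-trivial polynomial has Lebesgue measure zero.
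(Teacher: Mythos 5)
Your proposal is correct and follows essentially the same route as the paper's own proof: reduce to Theorem~\ref{thm:subspace}, handle the non-vanishing of $\lambda_0$ and the full-rank conditions on $X_0$ and $A$ by standard genericity, and establish non-decomposability via Lemma~\ref{lem:decomposable}(3) with a case analysis on $|J|$ and $|J^c|$ relative to $m$ (your three cases are just a reorganization of the paper's, and your explicit bookkeeping over the finitely many subsets $J$ makes precise a step the paper states more briefly).
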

\begin{proof}
Almost all $\lambda_0\in\bbC^n$ are non-vanishing.
If $N\geq m$, almost all $X_0\in\bbC^{m\times N}$ have full row rank.
If $n>m$, almost all $A\in\bbC^{n\times m}$ have full column rank. Next we show that the row spaces of almost all $A$ are not decomposable. For almost all $A$, the submatrices $A^{(J,:)}$ and $A^{(J^c,:)}$ have full rank for every non-empty proper subset $J\subset \{1,2,\cdots,n\}$. Therefore, one of the following cases has to be true.
\begin{enumerate}
	\item If $|J|<m$ and $|J^c|<m$, then for almost all $A$, $\dim(\calR(A)) =  m$, $\dim(\calR(A^{(J,:)})) = |J|$, $\dim(\calR(A^{(J^c,:)}))=|J^c|$. Hence for almost all $A$,
	\[
	\dim(\calR(A)) =  m < n = |J|+|J^c| = \dim(\calR(A^{(J,:)}))+\dim(\calR(A^{(J^c,:)})).
	\]
	\item If $|J|\geq m$, then for almost all $A$, $\dim(\calR(A^{(J,:)}))=m$. Hence for almost all $A$,
	\[
	\dim(\calR(A)) =  m < m+1 \leq \dim(\calR(A^{(J,:)}))+\dim(\calR(A^{(J^c,:)})).
	\]
	\item If $|J^c|\geq m$, then for almost all $A$, $\dim(\calR(A^{(J^c,:)}))=m$. Hence for almost all $A$,
	\[
	\dim(\calR(A)) =  m < 1+m \leq \dim(\calR(A^{(J,:)}))+\dim(\calR(A^{(J^c,:)})).
	\]
\end{enumerate}
Therefore, $\dim(\calR(A))<\dim(\calR(A^{(J,:)}))+\dim(\calR(A^{(J^c,:)}))$ for every non-empty proper subset $J\subset\{1,2,\cdots,n\}$, establishing that the row spaces of almost all $A$ are not decomposable.
By Theorem \ref{thm:subspace}, given that $N\geq m$ and $n>m$, the pair $(\lambda_0,X_0)$ is identifiable up to an unknown scaling for almost all $\lambda_0$, $X_0$ and $A$.
\end{proof}
Corollary \ref{cor:subspace} shows that, in the BGPC problem with a subspace constraint, for almost all vectors $\lambda_0$, almost all tall matrices $A$ and almost all fat matrices $X_0$, the solution $(\lambda_0,X_0)$ is identifiable up to an unknown scaling.

\subsection{Necessary Condition}
Given that $\lambda_0$ is non-vanishing, Nguyen et al. \cite{Nguyen2013} showed that ``the row space of $A$ is not decomposable'' is necessary.
Lacking however, is a necessary condition for the sample complexity.

As we demonstrate in the next subsection by construction of counter-examples, the sample complexity $N\geq m$, as required by Theorem \ref{thm:subspace} implicitly and Corollary \ref{cor:subspace} explicitly, is not necessary. Instead, a necessary condition is suggested by heuristically counting the number of degrees of freedom and the number of measurements in $Y=\diag(\lambda)AX$. The numbers of free variables in $\lambda$ and $X$ are $n$ and $mN$, respectively. The unknown scaling of $\lambda$ and $X$ is counted twice, hence $1$ is subtracted yielding $n+mN-1$ for the total number of degrees of freedom. The total number of measurements is $nN$. Heuristically, to achieve uniqueness, $nN$ must be greater than or equal to $n+mN-1$, which implies $N\geq\frac{n-1}{n-m}$. This turns out to be a valid necessary condition, as we now state and prove rigorously.

\begin{proposition}\label{pro:necessary}
In the BGPC problem with a subspace constraint, if $A$ has full column rank, and $(\lambda_0,X_0)$ (with a non-vanishing $\lambda_0$) is identifiable up to scaling, then $N\geq\frac{n-1}{n-m}$.
\end{proposition}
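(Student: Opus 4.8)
The plan is to make the degrees-of-freedom heuristic rigorous by reducing the identifiability of $\lambda_0$ to a statement about the null space of a single structured matrix built from $Y$. First I would reparametrize. Since $A$ has full column rank and $\lambda_0$ is non-vanishing, recovering $(\lambda,X)$ from $Y=\diag(\lambda)AX$ over non-vanishing $\lambda$ is equivalent to recovering $\mu:=1./\lambda$ up to scaling: once $\mu$ is known, every column of $\diag(\mu)Y$ lies in $\calC(A)$ and, by $\diag(\mu)\diag(\lambda)=I$, the matrix $X=A^\dagger\diag(\mu)Y$ is uniquely determined. Let $P\in\bbC^{(n-m)\times n}$ have full row rank with rows spanning the left null space of $A$, so $PA=0$; by a rank count $\calC(A)=\{w\in\bbC^n:Pw=0\}$. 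The constraint that each column of $\diag(\mu)Y$ lie in $\calC(A)$ then reads $P\,\diag(Y^{(:,j)})\,\mu=0$ for $j=1,\dots,N$, using $\diag(\mu)Y^{(:,j)}=\diag(Y^{(:,j)})\mu$. Stacking the $N$ blocks gives a homogeneous system $M\mu=0$ with
\[
M=\begin{bmatrix} P\diag(Y^{(:,1)})\\ \vdots\\ P\diag(Y^{(:,N)})\end{bmatrix}\in\bbC^{(n-m)N\times n}.
\]
By construction $\mu_0:=1./\lambda_0\in\ker M$, so $\dim\ker M\ge 1$.

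Next I would show that identifiability up to scaling forces $\dim\ker M=1$, equivalently $\rank M=n-1$. Suppose instead $\dim\ker M\ge 2$; the goal is to produce a solution $(\lambda_1,X_1)\notin[(\lambda_0,X_0)]_\scrT$. Choose $\nu\in\ker M$ with $\nu\notin\operatorname{span}\{\mu_0\}$ and consider the pencil $\mu_1=a\mu_0+b\nu$. For each index $k$ the equation $a\mu_0^{(k)}+b\nu^{(k)}=0$ cuts out a proper complex hyperplane in the $(a,b)$-plane (proper because $\mu_0^{(k)}\neq 0$), and $b=0$ is one further hyperplane; selecting $(a,b)$ outside this finite union of hyperplanes yields $\mu_1\in\ker M$ that is non-vanishing and not proportional to $\mu_0$. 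Then $\lambda_1:=1./\mu_1$ is non-vanishing and not a scalar multiple of $\lambda_0$, and $X_1:=A^\dagger\diag(\mu_1)Y$ satisfies $\diag(\lambda_1)AX_1=\diag(\lambda_1)\diag(\mu_1)Y=Y$. Since $Y\neq 0$ (as $\lambda_0$ is non-vanishing and $X_0\neq 0$) and $\mu_1$ is non-vanishing, $X_1\neq 0$; and because $\lambda_1$ is not proportional to $\lambda_0$, the pair $(\lambda_1,X_1)$ lies outside the scaling class $[(\lambda_0,X_0)]_\scrT$, contradicting identifiability.

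The dimension count then closes the argument: $\rank M=n-1$ together with $M$ having only $(n-m)N$ rows gives $n-1=\rank M\le (n-m)N$, hence $N\ge\frac{n-1}{n-m}$. Equivalently, in contrapositive form, $(n-m)N<n-1$ forces $\dim\ker M\ge 2$ and hence non-identifiability.

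I expect the main obstacle to be the middle step, namely converting a merely higher-dimensional null space into an honestly distinct, admissible solution. A vector in $\ker M$ is useless unless it is simultaneously non-vanishing (so that $\lambda_1=1./\mu_1$ is well defined and non-vanishing) and not proportional to $\mu_0$; the genericity argument avoiding the $n+1$ hyperplanes in the $(a,b)$-plane is precisely what guarantees such a vector exists inside any two-dimensional subspace of $\ker M$ through $\mu_0$. The remaining verifications—that $X_1$ is a legitimate nonzero solution and that non-proportionality of $\mu_1$ transfers to $\lambda_1$—are routine.
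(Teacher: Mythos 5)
Your proposal is correct and takes essentially the same route as the paper's own proof: the paper builds the same stacked annihilator system (there written as $\mathcal{G}(x)=A_\perp^*\diag(x)Y$), notes that the entrywise inverse of $\lambda_0$ lies in its null space, shows that a second null-space dimension yields a non-vanishing null vector not proportional to it (hence a solution outside the scaling orbit), and concludes by the same dimension count. The only cosmetic difference is in producing the non-vanishing combination: the paper perturbs $x_0$ by $\alpha x_1$ with an explicit bound $|\alpha|<1/(\norm{\lambda_0}_\infty\norm{x_1}_\infty)$, whereas you choose $(a,b)$ generically outside a finite union of hyperplanes.
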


\begin{proof}
We show that if $N<\frac{n-1}{n-m}$, then the recovery cannot be unique.
Let $A_\perp\in\bbC^{n\times (n-m)}$ denote a matrix whose columns form a basis for the ortho-complement of the column space of $A$. Hence $A_\perp^*$ is an annihilator of the column space of $A$. Consider the linear operator $\mathcal{G}:\bbC^n\rightarrow\bbC^{(n-m)\times N}$ defined by
\begin{equation*}
\mathcal{G}(x) \coloneqq A_\perp^*\diag(x)Y = A_\perp^*\diag(x)\diag(\lambda_0)AX_0.
\end{equation*}
We claim that every non-vanishing null vector of $\mathcal{G}$ produces a solution to the BGPC problem. Indeed, if $x\in\mathcal{N}(\mathcal{G})$, then \[A_\perp^*\diag(x)\diag(\lambda_0)AX_0 = 0,\] 
hence the columns in $\diag(x)\diag(\lambda_0)AX_0$ must reside in the column space of $A$. Let \[\diag(x)\diag(\lambda_0)AX_0=AX_1.\] If $x$ is non-vanishing, then $(\lambda_1,X_1)$ is a solution, where $\lambda_1$ is the entrywise inverse of $x$.

Let $x_0$ denote the entrywise inverse of $\lambda_0$, then $x_0\in\mathcal{N}(\mathcal{G})$. There are $N(n-m)$ equations in $\mathcal{G}(x)=0$. If $N<\frac{n-1}{n-m}$, i.e., $N(n-m)\leq n-2$, the dimension of the null space $\mathcal{N}(\mathcal{G})$ is at least $2$. Hence, there exists another vector $x_1\in\mathcal{N}(\mathcal{G})$ such that $x_0,x_1$ are linearly independent. Let $\alpha$ be a complex number such that $0<|\alpha|<\frac{1}{\norm{\lambda_0}_\infty\norm{x_1}_\infty}$. Then $x_0+\alpha x_1\in\mathcal{N}(\mathcal{G})$ is non-vanishing, because the entries of $x_0+\alpha x_1$ satisfy that
\[
\bigl|x_0^{(j)}+\alpha x_1^{(j)}\bigr|\geq\bigl|x_0^{(j)}\bigr|-|\alpha|\bigl|x_1^{(j)}\bigr|\geq \frac{1}{\norm{\lambda_0}_\infty}-|\alpha|\norm{x_1}_\infty>0,\quad \text{for every $j\in\{1,2,\cdots,n\}$.}
\]
This null vector is not a scaled version of $x_0$. Hence there exists a solution that does not belong to the equivalence class $[(\lambda_0,X_0)]_\scrT$. Therefore, $N\geq \frac{n-1}{n-m}$ is necessary.
\end{proof}

The two sample complexities $N\geq m$ and $N\geq \frac{n-1}{n-m}$ coincide when $m = 1$ or $m = n-1$. The gap between these two sample complexities when $2\leq m \leq n-2$ is analyzed next.

\subsection{Gap Between the Sufficient and the Necessary Conditions}\label{sec:gap1}
The sample complexity in the sufficient condition is $N\geq m$, which can be represented by the region to the right of a line segment. The sample complexity in the necessary condition is $N\geq \frac{n-1}{n-m}$, which can be represented by the region to the right of part of a hyperbola. The gap between the two sample complexities is the region between the line segment and the hyperbola (cf. Figure \ref{fig:ptss}).
\begin{figure}[htbp]%
\centering
\input{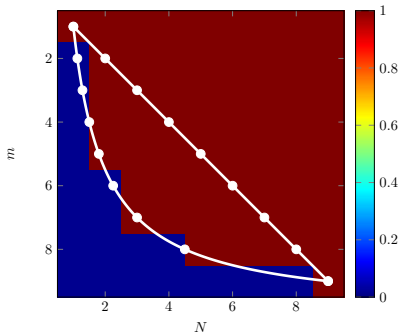}
\caption{The sample complexities for BGPC with a subspace constraint, and the ratio of identifiable pairs generated randomly.}%
\label{fig:ptss}%
\end{figure}

To explore this gap, we wish to determine whether $(\lambda_0,X_0)$, in BGPC with a subspace constraint, is identifiable up to scaling. We now show that this can be done by Algorithm 1. Given $A$ that has full column rank and $Y=\diag(\lambda_0)AX_0$ that has no zero rows, Algorithm 1 returns a boolean value indicating whether $(\lambda_0,X_0)$ is identifiable up to scaling. 
\begin{algorithm}[H]
\caption{Identifiability of the BGPC problem with a subspace constraint}
\begin{algorithmic}
\STATE \textbf{input}: $A$,$Y$ \quad \textbf{output}: identifiability of $(\lambda_0,X_0)$
\STATE $[Q,R] = \texttt{qr}(A)$ \COMMENT{\texttt{QR decomposition of $A$}}
\STATE $A_\perp \leftarrow Q^{(:,m+1:n)}$
\STATE $G \leftarrow \begin{bmatrix}[\diag(Y^{(:,1)})]^*A_\perp & [\diag(Y^{(:,2)})]^*A_\perp & \cdots & [\diag(Y^{(:,N)})]^*A_\perp \end{bmatrix}^*$
\IF{$\rank(G)\leq n-2$}
\RETURN False
\ELSE
\RETURN True
\ENDIF
\end{algorithmic}
\end{algorithm}

\begin{proposition}
Given $A$ that has full column rank and $Y=\diag(\lambda_0)AX_0$ that has no zero rows, the pair $(\lambda_0,X_0)$ is identifiable up to scaling if Algorithm 1 returns True, and not identifiable up to scaling if Algorithm 1 returns False.
\end{proposition}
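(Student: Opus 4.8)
The plan is to recognize that the matrix $G$ assembled by Algorithm 1 is simply a matrix representation of the linear operator $\mathcal{G}$ introduced in the proof of Proposition \ref{pro:necessary}. The $k$-th block of $G^*$ is $[\diag(Y^{(:,k)})]^*A_\perp$, so the $k$-th block of $Gx$ equals $A_\perp^*\diag(Y^{(:,k)})x = A_\perp^*\diag(x)Y^{(:,k)} = \mathcal{G}(x)^{(:,k)}$; hence $Gx=0$ if and only if $\mathcal{G}(x)=0$, and $\mathcal{N}(G)=\mathcal{N}(\mathcal{G})$. The entire argument then turns on relating $\dim\mathcal{N}(G)$ to identifiability. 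First I would record two preliminaries. Since $Y=\diag(\lambda_0)AX_0$ has no zero rows, $\lambda_0$ is non-vanishing (a zero entry $\lambda_0^{(j)}$ would force the $j$-th row of $Y$ to vanish), so $x_0\coloneqq \lambda_0$ inverted entrywise is well-defined and non-vanishing; because $A_\perp^*A=0$, we get $\mathcal{G}(x_0)=A_\perp^*\diag(x_0)\diag(\lambda_0)AX_0 = A_\perp^*AX_0 = 0$, whence $x_0\in\mathcal{N}(G)$. Thus $\rank(G)\leq n-1$ always, and the algorithm need only distinguish $\rank(G)=n-1$ from $\rank(G)\leq n-2$.

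For the case $\rank(G)\leq n-2$ (the algorithm returns \textbf{False}), I would reuse the construction of Proposition \ref{pro:necessary} essentially verbatim. Since $\dim\mathcal{N}(G)\geq 2$, I can pick $x_1\in\mathcal{N}(G)$ linearly independent of $x_0$ and choose $\alpha$ with $0<|\alpha|<(\norm{\lambda_0}_\infty\norm{x_1}_\infty)^{-1}$. Then $x_0+\alpha x_1$ is non-vanishing, lies in $\mathcal{N}(G)$, and is not a scalar multiple of $x_0$. As in Proposition \ref{pro:necessary}, a non-vanishing null vector $x$ yields a genuine solution $(\lambda_1,X_1)$ with $\lambda_1$ the entrywise inverse of $x$ and $\diag(x)Y=AX_1$; the solution built from $x_0+\alpha x_1$ therefore does not lie in $[(\lambda_0,X_0)]_\scrT$, so $(\lambda_0,X_0)$ is not identifiable up to scaling, exactly as \textbf{False} asserts.

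For the case $\rank(G)=n-1$ (the algorithm returns \textbf{True}) I would argue the converse, which is the genuinely new content. Here $\mathcal{N}(G)=\operatorname{span}\{x_0\}$. Let $(\lambda_1,X_1)$ be any solution, so $\diag(\lambda_1)AX_1=Y$. Because $Y$ has no zero rows, $\lambda_1$ is non-vanishing too, so its entrywise inverse $x_1$ is well-defined and $\diag(x_1)Y=AX_1$ has all columns in $\calC(A)$, giving $A_\perp^*\diag(x_1)Y=0$ and hence $x_1\in\mathcal{N}(G)=\operatorname{span}\{x_0\}$. Writing $x_1=c\,x_0$ with $c\neq 0$ gives $\lambda_1=\sigma\lambda_0$ for $\sigma=1/c$; substituting into $\diag(\lambda_1)AX_1=\diag(\lambda_0)AX_0$ and cancelling the invertible $\diag(\lambda_0)$ leaves $A(\sigma X_1-X_0)=0$, so $X_1=\tfrac{1}{\sigma}X_0$ by the full column rank of $A$. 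Thus every solution lies in $[(\lambda_0,X_0)]_\scrT$ and the pair is identifiable up to scaling.

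The main obstacle — really the only place that needs care — is the two-way bookkeeping that links the abstract identifiability statement to $\mathcal{N}(G)$ through the entrywise-inverse correspondence, together with the repeated use of the observation that ``$Y$ has no zero rows'' forces \emph{every} feasible gain vector (both $\lambda_0$ and any competitor $\lambda_1$) to be non-vanishing, so that these inverses exist and the null-vector-to-solution map is actually a bijection onto solutions. Once that correspondence is set up, the \textbf{False} direction is the ready-made perturbation argument of Proposition \ref{pro:necessary}, and the \textbf{True} direction is a short chain of rank cancellations; no further estimates are required.
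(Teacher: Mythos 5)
Your proof is correct and follows essentially the same route as the paper's: identify $\mathcal{N}(G)$ with $\mathcal{N}(\mathcal{G})$ from Proposition \ref{pro:necessary}, reuse that proposition's perturbation construction for the \textbf{False} branch, and use the one-dimensionality of the null space plus the non-vanishing of any feasible gain vector for the \textbf{True} branch. The only cosmetic difference is that where the paper closes the \textbf{True} case by invoking Corollary \ref{cor:ibip}, you inline the same two facts ($\lambda_1=\sigma\lambda_0$ and full column rank of $\diag(\lambda_0)A$ forcing $X_1=\tfrac{1}{\sigma}X_0$) directly.
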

\begin{proof}
The columns of $A_\perp$ form a basis for the ortho-complement of the column space of $A$, hence $A_\perp^*$ is an annihilator of the column space of $A$. The matrix $G\in\bbC^{N(n-m)\times n}$ satisfies that $Gx=\operatorname{vec}(A_\perp^*\diag(x)Y)$.
Given $Y$ that has no zero rows, any solution to the BGPC problem $(\lambda,X)$ satisfies that $\lambda$ is non-vanishing, and that the entrywise inverse of $\lambda$ is a null vector of $G$. On the other hand, as argued in the proof of Proposition \ref{pro:necessary}, any non-vanishing null vector of $G$ produces a solution $(\lambda,X)$.

If Algorithm 1 returns True, then $\rank(G)\geq n-1$. Given a solution $(\lambda_0,X_0)$, $G$ has at least one null vector $x_0$, which is the entrywise inverse of $\lambda_0$. Hence $\rank(G)= n-1$. All the null vectors of $G$ reside in the one-dimensional subspace spanned by $x_0$. Therefore $\lambda$ in any solution is a scaled version of $\lambda_0$, or $\lambda\in[\lambda_0]^L_\scrT$. Given non-vanishing $\lambda_0$ and $A$ with full column rank, $\diag(\lambda_0)A$ has full column rank and the recovery of $X_0$ has to be unique. By Corollary \ref{cor:ibip}, $(\lambda_0,X_0)$ is identifiable up to scaling.

If Algorithm 1 returns False, then $\rank(G)\leq n-2$. By the proof of Proposition \ref{pro:necessary}, $(\lambda_0,X_0)$ is not identifiable.
\end{proof}

We now use Algorithm 1 to construct counter-examples demonstrating that the sufficient condition in Theorem \ref{thm:subspace} is not necessary. Let $n=10$, $1\leq m\leq 9$, and $1\leq N\leq 9$. The entries of $\lambda_0\in\bbR^{n}$ and $X_0\in\bbR^{m\times N}$ are generated as iid Gaussian random variables $N(0,1)$. The matrix $A\in\bbR^{n\times m}$ is the first $m$ columns from an $n\times n$ random orthogonal matrix. Then $A_\perp$ comprises of the last $(n-m)$ columns from the same random orthogonal matrix. We use Algorithm 1 to determine whether or not $(\lambda_0,X_0)$ is identifiable up to scaling. For every value of $m$ and $N$, the numerical experiment is repeated 100 times independently. The ratio of identifiable pairs as a function of $(m,N)$ is shown in Figure \ref{fig:ptss}. As is expected, the solution $(\lambda_0,X_0)$ is identifiable when $N\geq m$, and is not identifiable when $N< \frac{n-1}{n-m}$. Meanwhile, when $\frac{n-1}{n-m}\leq N< m$, the ratio of identifiable pairs is $1$. Therefore, $N\geq m$ is not necessary. 

On the other hand, the necessary condition in Proposition \ref{pro:necessary} is not sufficient. For example, if $n=8$, $m=4$ and $\frac{n-1}{n-m}<N=2<m$, let $A$ be the structured matrix
\[
A =
\begin{bmatrix}
A_1 & \diag(\gamma)A_1
\end{bmatrix},
\]

where $A_1\in\bbC^{8\times 2}$, $\gamma\in\bbC^{8}$. There exists an $A_1$ and a $\gamma$ such that the matrix $A$ has full column rank and the row space of $A$ is not decomposable. For example, let $A_1=2\sqrt{2}F^{(:,1:2)}$ and $\gamma=2\sqrt{2}F^{(:,3)}$, then $A=2\sqrt{2}F^{(:,1:4)}$. However, $(\lambda_0,X_0)$ is not identifiable and $\lambda_0 AX_0=\lambda_1 AX_1$, if
\[
X_0 = \begin{bmatrix} 1 & 0 \\ 0 & 1 \\ 0 & 0 \\ 0 & 0 \end{bmatrix}, \quad
X_1 = \begin{bmatrix} 0 & 0 \\ 0 & 0 \\ 1 & 0 \\ 0 & 1 \end{bmatrix},
\]
\[
\lambda_0 = \gamma\odot \lambda_1. 
\]

However, according to the ratio of identifiable pairs shown in Figure \ref{fig:ptss}, the unidentifiable case does not occur even once in 100 random trials. We have the following conjecture:
\begin{conjecture}
In the BGPC problem with a subspace constraint, if $n>m$ and $N\geq \frac{n-1}{n-m}$, then $(\lambda_0,X_0)$ is identifiable up to an unknown scaling for almost all $\lambda_0\in\bbC^n$, almost all $X_0\in\bbC^{m\times N}$ and almost all $A\in\bbC^{n\times m}$.
\end{conjecture}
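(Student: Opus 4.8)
The plan is to reduce identifiability up to scaling to a single rank condition that does not involve $\lambda_0$ at all, and then to show that this rank is maximal for generic $A$ and $X_0$ by a dimension count on an incidence variety, with the threshold $N(n-m)\ge n-1$ emerging as the exact balance point. \textbf{Reduction.} As in the proof of Proposition \ref{pro:necessary} and in Algorithm 1, identifiability up to scaling (for non-vanishing $\lambda_0$, full-column-rank $A$, and $Y$ with no zero rows) is equivalent to $\rank(G)=n-1$, where $Gx=\operatorname{vec}(A_\perp^*\diag(x)Y)$. Performing the change of variables $w=\lambda_0\odot x$ and using $Y^{(:,j)}=\diag(\lambda_0)AX_0^{(:,j)}$, the null space of $G$ is carried bijectively onto
\[
W=\bigl\{w\in\bbC^n:\ A_\perp^*\diag(w)AX_0=0\bigr\}=\bigcap_{j=1}^N\ker\bigl(A_\perp^*\diag(\Phi_0^{(:,j)})\bigr),\qquad \Phi_0=AX_0,
\]
which depends only on $A$ and $X_0$. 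Since $A_\perp^*A=0$, the all-ones vector $\mathbf 1\in W$ (it encodes the scaling ambiguity), so $\dim W\ge 1$ always; the task becomes showing $\dim W=1$ for almost all $(A,X_0)$. This already reduces the problem to $(A,X_0)$ and explains the numerical observation that identifiability is insensitive to $\lambda_0$.

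\textbf{Genericity.} I would argue that the bad set $B=\{(A,X_0):\dim W\ge 2\}$ is a measure-zero, semi-algebraic set, so that it suffices to show $B$ is not the whole parameter space. The only nuisance is that $A_\perp$ is not a polynomial function of $A$; I would circumvent this by replacing ``$\diag(w)\Phi_0^{(:,j)}\in\calC(A)$'' with the vanishing of all $(m+1)\times(m+1)$ minors of the matrix $[\,A\mid\diag(w)\Phi_0^{(:,1)}\mid\cdots\mid\diag(w)\Phi_0^{(:,N)}\,]$, which are polynomials in $(A,X_0,w)$, or equivalently by working on a Zariski-open chart on which $A_\perp$ is a rational function of $A$.

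\textbf{Incidence variety and the threshold.} Consider
\[
\mathcal I=\bigl\{(A,X_0,[w]):\ [w]\in\mathbb{P}^{n-1}\setminus\{[\mathbf 1]\},\ M(A,w)X_0=0\bigr\},\qquad M(A,w):=A_\perp^*\diag(w)A\in\bbC^{(n-m)\times m},
\]
so that $B$ is the projection of $\mathcal I$ to the $(A,X_0)$ factors. Fibering over $[w]$, for generic $A$ the two $m$-dimensional subspaces $\calC(A)$ and $\diag(w)\calC(A)$ meet in dimension $\max(0,2m-n)$, whence $\rank M(A,w)=m-\max(0,2m-n)$ and each column of $X_0$ is confined to a space of dimension $\max(0,2m-n)$. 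In the regime $n<2m$ this yields $\dim\mathcal I\le(n-1)+nm+N(2m-n)$, which is strictly below $\dim(\bbC^{n\times m}\times\bbC^{m\times N})=nm+mN$ exactly when $(n-1)<N(n-m)$; then $B$ is not dense, hence has measure zero, and the conclusion follows by substituting the recovered $\lambda_0$ and invoking Corollary \ref{cor:ibip} to recover $X_0$ (as in Theorem \ref{thm:subspace}).

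\textbf{The main obstacle.} The difficulty is twofold. First, the naive fiber count above is too lossy once $n\ge 2m$: there generic $A$ forces $X_0=0$, while the members of $B$ with $X_0\neq0$ come from lower-dimensional strata of $A$ on which $\ker M(A,w)$ is larger (already for $m=1$ one checks $\dim W=1$ holds for \emph{every} non-vanishing $A$ and every $N\ge1$, far better than the count predicts). Because $A\mapsto M(A,w)$ is nonlinear through $A_\perp$, making the count tight requires controlling the codimension of these rank-deficient strata and showing that enlarging $\ker M(A,w)$ never compensates for the lost $A$-freedom; this is precisely the transversality statement that the subspaces $\{\ker(A_\perp^*\diag(\Phi_0^{(:,j)}))\}_{j=1}^N$, all pinned through $\mathbf 1$, are otherwise in general position. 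Second, and most delicate, is the borderline case $N(n-m)=n-1$ (when $(n-m)\mid(n-1)$), where the dimension count gives only $\dim\mathcal I\le\dim(\text{domain})$ with equality; here one must separately establish that the projection is not dominant, for instance by exhibiting an explicit witness $(A,X_0)$ with $\dim W=1$ in the gap region $\frac{n-1}{n-m}\le N<m$ (where Theorem \ref{thm:subspace} cannot be invoked) and then appealing to lower-semicontinuity of $\rank G$. Constructing such a witness is the crux that keeps the statement a conjecture.
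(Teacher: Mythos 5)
First, a point of order: the statement you were asked to prove is a \emph{conjecture} in the paper (it appears at the end of Section \ref{sec:gap1}); the authors do not prove it, and support it only by the numerical evidence of Figure \ref{fig:ptss} (Algorithm 1 run on random instances) together with a counterexample ($n=8$, $m=4$, $N=2$, $A=2\sqrt{2}F^{(:,1:4)}$) showing that for special structured $A$ the bound $N\geq\frac{n-1}{n-m}$ is \emph{not} sufficient. So there is no proof in the paper to compare yours against, and any complete argument would be new. Your reduction step is correct and coincides with the paper's own Algorithm-1 machinery: identifiability up to scaling is equivalent to $\rank(G)=n-1$ (the proposition validating Algorithm 1), and your substitution $w=\lambda_0\odot x$ correctly shows that the relevant null space $W=\{w:A_\perp^*\diag(w)AX_0=0\}\ni\mathbf{1}$ depends only on $(A,X_0)$; since $\{\dim W\geq 2\}$ is Zariski closed on the full-column-rank locus (your minor trick works: each $(m+1)\times(m+1)$ minor of $[\,A\mid\diag(w)AX_0^{(:,j)}\,]$ is linear in $w$), the conjecture reduces to exhibiting a single witness $(A,X_0)$ with $\dim W=1$ for each admissible $(n,m,N)$. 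That reduction is solid and useful.

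The genuine gap is that the incidence-variety dimension count cannot work in the form you state it, and the failure is structural rather than confined to the two caveats you list. (i) $\mathcal{I}$ contains the locus $\{(A,0,[w])\}$, of dimension $nm+(n-1)$, so whenever $mN<n-1$ — which is compatible with $N(n-m)\geq n-1$ throughout the regime $m<n/2$ (e.g.\ $n=10$, $m=3$, $N=2$, a gap-region case; or $n=10$, $m=2$, $N=2$, where the conclusion is already known by Theorem \ref{thm:subspace}) — one has $\dim\mathcal{I}>\dim(\bbC^{n\times m}\times\bbC^{m\times N})$. Hence no bound of the form $\dim\mathcal{I}<\dim(\text{parameter space})$ can hold; the projection $\mathcal{I}\to B$ has positive-dimensional fibers, and a viable argument must bound the dimension of the \emph{image} $B$ directly, stratum by stratum. (ii) Your fiber bound over $[w]$ is false even for generic $A$: since $A_\perp^*A=0$, $M(A,w)=A_\perp^*\diag(w-c\mathbf{1})A$ for every scalar $c$, so $\rank M(A,w)\leq|\supp(w-c\mathbf{1})|$; over the $k$-dimensional stratum of $[w]$'s at sparse distance $k$ from $[\mathbf{1}]$ the kernel has dimension at least $m-k$, not $\max(0,2m-n)$, and these strata can exceed your claimed bound on $\dim\mathcal{I}$ even when $n<2m$. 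The determinantal stratification of $(A,w)\mapsto M(A,w)$ — including the special-$A$ strata where the kernel jumps further, which is exactly where the paper's counterexample $A=[A_1,\diag(\gamma)A_1]$ lives — is therefore not a refinement that can be postponed; it is the substance of the problem.

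Combined with the borderline case $N(n-m)=n-1$, where even a correct count would yield only equality, and the missing explicit witness in the regime $\frac{n-1}{n-m}\leq N<m$ (where Theorem \ref{thm:subspace} cannot be invoked and the final appeal to Corollary \ref{cor:ibip} would have to be fed by that witness), what you have is a correct reduction plus an honest research program, not a proof. You flag the obstacles yourself, but they are not technical loose ends — they are the entire content of why the statement remains a conjecture.
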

If the above conjecture is true, the necessary condition $N\geq \frac{n-1}{n-m}$ is tight except for a set of measure zero.

\section{BGPC with a Joint Sparsity Constraint}\label{sec:jointsparsity}
Here we consider the identifiability in the BGPC problem with a joint sparsity constraint:
\begin{align*}
\text{($\mathrm{P1}$)}\quad\text{find}~~&(\lambda,X),\\
\text{s.t.}~~&\diag(\lambda)AX = Y,\\
& \lambda \in \bbC^n,~X \in \Omega_\calX =\{X\in\bbC^{n\times N}: \text{the columns of $X$ are jointly $s$-sparse}\}.
\end{align*}
The measurement in the above problem is $Y=\diag(\lambda_0)AX_0$. We only consider the case where $A\in\bbC^{n\times n}$ is an invertible square matrix. The vector $\lambda_0\in\bbC^n$ is non-vanishing. The columns of $X_0\in\bbC^{n\times N}$ are jointly $s$-sparse ($X_0$ has at most $s$ nonzero rows).
Unless otherwise stated, we assume that the sparsity level $s$ is known a priori. However, if $s$ is unknown, one can solve the following optimization problem instead:
\begin{align*}
\text{($\mathrm{P2}$)}\quad\underset{(\lambda,X)}{\text{min.}}~~&\operatorname{row-sparsity}(X),\\
\text{s.t.}~~&\diag(\lambda)AX = Y,\\
& \lambda \in \bbC^n,~X \in\bbC^{n\times N}.
\end{align*}

In this section, we define ambiguities and transformation groups that depend on the matrix $A$. For two special cases of $A$, we give sufficient conditions for identifiability up to the ambiguity transformation groups.

\subsection{Ambiguities and Transformation Groups}\label{sec:equivalence}
Geometrically, a joint sparsity constraint corresponds to a union of subspaces; hence, it is less restrictive than the previously discussed subspace constraint. This results in greater ambiguity in identifying a solution to BGPC with a joint sparsity constraint, than just the scaling ambiguity. In this case, to obtain identifiablity results, we must choose the largest transformation group associated with the BIP, which captures \emph{all} ambiguities inherent in the problem. In this section, we develop a procedure to do so.

A generalized permutation matrix is an invertible square matrix with exactly one nonzero entry in each row and each column. It preserves the joint sparsity structure. That is, if the columns of $X_0$ are jointly $s$-sparse and $P$ is a generalized permutation matrix, then the columns of $X_1=PX_0$ are also jointly $s$-sparse. Suppose there exists a vector $\gamma\in \bbC^n$ such that $P = A^{-1}\diag(\gamma)A$ is a generalized permutation matrix; then clearly $\gamma$ has to be non-vanishing. Now, given a solution $(\lambda_0,X_0)$ to the BGPC problem, there exist $\lambda_1=\lambda_0./\gamma$ and $X_1 = PX_0\in \Omega_\calX$ such that
\[
\diag(\lambda_1)AX_1 = \diag(\lambda_0)[\diag(\gamma)]^{-1}AA^{-1}\diag(\gamma)AX_0=\diag(\lambda_0)AX_0.
\]
This ambiguity is inevitable. To address this ambiguity, we define the set
\begin{equation}
\Gamma(A) = \{\gamma\in\bbC^n:A^{-1}\diag(\gamma)A \text{ is a generalized permutation matrix}\},
\label{eq:7}
\end{equation}
and the ambiguity transformation group
\begin{equation}
\scrT = \{\calT:\calT(\lambda,X)=(\lambda./\gamma,A^{-1}\diag(\gamma)AX) \text{ for some $\gamma\in \Gamma(A)$}\}.
\label{eq:8}
\end{equation}
Then $(\lambda_1,X_1)$ is in the equivalence class $[(\lambda_0,X_0)]_{\scrT}$. 

Note that the set $\Gamma(A)$ depends on $A$. In particular, when $A$ is the normalized DFT matrix $A=F\in \bbC^{n\times n}$, the matrix $F^*\diag(\gamma)F$ is a circulant matrix whose first column is $\frac{1}{\sqrt{n}}F^*\gamma$. The matrix $F^*\diag(\gamma)F$ is a generalized permutation matrix if and only if there is exactly one nonzero entry in $\frac{1}{\sqrt{n}}F^*\gamma$, which means that the circulant matrix $F^{*}\diag(\gamma)F$ is a scaled circular shift. Therefore,
\begin{equation}
\Gamma(F) = \bigl\{\gamma=\sigma \sqrt{n}F^{(:,k)}: \sigma\in\bbC \text{ is nonzero}, k\in\{1,2,\cdots,n\}\bigr\}.
\label{eq:9}
\end{equation}
\begin{equation}
\scrT = \{\calT:\calT(\lambda,X)=(\lambda./\gamma,F^{*}\diag(\gamma)FX) \text{ for some $\gamma\in \Gamma(F)$}\}.
\label{eq:10}
\end{equation}
An equivalence transformation $\calT\in\scrT$ defined in \eqref{eq:10} is a complex exponential modulation of $\lambda$ scaled by $\frac{1}{\sigma}$ and a circular shift of $X$ scaled by $\sigma$. In MBD, if we shift the signal by $1-k$ and scale it by $\frac{1}{\sigma}$, and shift the channels by $k-1$ and scale them by $\sigma$, the outputs of the channels remain unchanged.

The ambiguity transformation groups for other choices of $A$ can be figured out in a similar fashion. For more examples, please refer to Section \ref{sec:piecewiseconstant} and to Appendix \ref{app:tgforA}.


\subsection{Identifiability of Jointly Sparse Signals}\label{sec:js}
In this section, we assume that $A=F$ is the DFT matrix and the columns of $X$ are jointly $s$-sparse. In multichannel blind deconvolution, the non-vanishing vector $\lambda_0$ is the DFT of the signal and the jointly sparse columns of $X_0$ are the multiple channels. We derive a sufficient condition and a necessary condition for $(\lambda_0,X_0)$ to be identifiable up to the transformation group defined in \eqref{eq:10}.

\subsubsection{Sufficient Condition}
We can prove a sufficient condition for identifiability up to the transformation group in \eqref{eq:10} within the framework of Section \ref{sec:bip} by again invoking Corollary \ref{cor:ibip}.
We need the following definition to state this sufficient condition.
\begin{definition}
The index set $J=\{j_1,j_2,\cdots,j_s\}\subset\{1,2,\cdots,n\}$ is said to be periodic with period $\ell$ ($\ell$ being an integer such that $0<\ell<n$), if $J=\{j_1+\ell,j_2+\ell,\cdots,j_s+\ell\}$ (modulo $n$). The smallest integer $\ell$ with this property is called the fundamental period.
\end{definition}

The universal set $\{1,2,\cdots,n\}$ is always periodic with period $\ell$ ($\ell$ being any integer from $1$ to $n-1$). The fundamental period is $1$. For $n=10$ and $s=4$, the set $J=\{1,2,6,7\}$ is periodic with fundamental period $5$. Periodicity has the following property.

\begin{remark}\label{rem:periodic}
If the set $J=\{j_1,j_2,\cdots,j_s\}$ is periodic with period $\ell$, then the complement $J^c$, the flipped version $J^f=\{-j_1,-j_2,\cdots,-j_{s}\}$ (modulo $n$) and the shifted version $\{j_1+k,j_2+k,\cdots,j_s+k\}$ (modulo $n$) are all periodic with period $\ell$.
\end{remark}

Here is the sufficient condition for the identifiability of the BGPC problem with DFT matrix and a joint sparsity constraint.
\begin{theorem}\label{thm:jointsparsity}
In the BGPC problem with DFT matrix and a joint sparsity constraint at sparsity level $s$, the pair $(\lambda_0,X_0)\in \bbC^n\times \Omega_\calX$ is identifiable up to the transformation group $\scrT$ defined in \eqref{eq:10} if the following conditions are met:
\begin{enumerate}
	\item Vector $\lambda_0$ is non-vanishing.
	\item Matrix $X_0$ has exactly $s$ nonzero rows and rank $s$.
	\item The joint support of the columns of $X_0$ is not periodic.
\end{enumerate}
\end{theorem}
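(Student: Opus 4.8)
The plan is to apply Corollary \ref{cor:ibip} with $\calF(\lambda,X)=\diag(\lambda)FX$, taking $\lambda$ in the role of the first variable and $X$ in the role of the second. Condition 2 of the corollary is immediate: since $\lambda_0$ is non-vanishing and $F$ is invertible, $\diag(\lambda_0)F$ is invertible, so $\diag(\lambda_0)FX=\diag(\lambda_0)FX_0$ forces $X=X_0$. The entire difficulty lies in verifying Condition 1, namely that every $\lambda_1$ arising in a competing solution $(\lambda_1,X_1)$ with $\diag(\lambda_1)FX_1=Y$ and $X_1$ jointly $s$-sparse lies in $[\lambda_0]^L_\scrT=\{\lambda_0./\gamma:\gamma\in\Gamma(F)\}$.

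I would first establish that $FX_0$ has no zero rows. Writing $S_0$ for the joint support of $X_0$, row $j$ of $FX_0$ equals $\sum_{k\in S_0}F^{(j,k)}X_0^{(k,:)}$; by Condition 2 the $s$ rows $\{X_0^{(k,:)}:k\in S_0\}$ are linearly independent, and since every DFT entry is nonzero, this combination cannot vanish. Consequently every row of $Y=\diag(\lambda_0)FX_0$ is nonzero, and comparing with $Y=\diag(\lambda_1)FX_1$ forces $\lambda_1$ to be non-vanishing. Then $\rank(X_1)=\rank(Y)=s$, so the jointly $s$-sparse matrix $X_1$ has exactly $s$ nonzero rows, which are linearly independent. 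Setting $\gamma=\lambda_0./\lambda_1$ (non-vanishing) and $C=F^*\diag(\gamma)F$, the identity $\diag(\lambda_1)FX_1=\diag(\lambda_0)FX_0$ rearranges to $X_1=CX_0$, where $C$ is circulant (its generating sequence is $\frac{1}{\sqrt{n}}F^*\gamma$).

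The core step is to show that $C$ is a scaled circular shift. Writing $S_1$ for the joint support of $X_1$, the rows of $X_1$ indexed outside $S_1$ vanish, so $\sum_{k\in S_0}C^{(i,k)}X_0^{(k,:)}=0$ for $i\notin S_1$; linear independence of the support rows of $X_0$ yields $C^{(i,k)}=0$ for all $i\notin S_1,\ k\in S_0$, i.e. $C^{(S_1^c,S_0)}=0$. Encoding $C$ by the support $D=\{\ell:c_\ell\neq 0\}$ of its generating sequence $c$, the vanishing pattern says $(S_0+\ell)\bmod n\subseteq S_1$ for every $\ell\in D$. Since $|S_0|=|S_1|=s$ and $C\neq 0$ (so $D\neq\emptyset$), each such inclusion is an equality $S_0+\ell=S_1$; hence $S_0+(\ell-\ell')=S_0$ for any $\ell,\ell'\in D$, i.e. $S_0$ is periodic with period $\ell-\ell'$. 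By Condition 3, $S_0$ is not periodic, forcing $D$ to be a singleton. Thus $C$ has a single nonzero circulant coefficient, so $C$ is a scaling of a circular shift, which by \eqref{eq:9} is exactly the statement $\gamma\in\Gamma(F)$. Therefore $\lambda_1=\lambda_0./\gamma\in[\lambda_0]^L_\scrT$, Condition 1 holds, and the theorem follows.

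The step I expect to be the main obstacle is this combinatorial conclusion: converting the block-vanishing pattern $C^{(S_1^c,S_0)}=0$ into the Minkowski-sum inclusion $S_0+D\subseteq S_1$, and then extracting periodicity of $S_0$ from the presence of two distinct shifts in $D$. Getting the shift conventions (direction and mod-$n$ indexing) exactly right, and confirming that a single-element $D$ corresponds precisely to membership in $\Gamma(F)$, is where the care is needed; everything else rests on the non-vanishing and rank hypotheses through routine linear algebra.
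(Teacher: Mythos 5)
Your proposal is correct and follows essentially the same route as the paper's proof: verify Condition 2 of Corollary \ref{cor:ibip} via invertibility of $\diag(\lambda_0)F$, reduce Condition 1 to studying the circulant matrix $C=F^*\diag(\lambda_0./\lambda_1)F$ with $X_1=CX_0$, and use the cardinality bound $|S_1|=s$ to force any two nonzero circulant coefficients to yield coincident shifted copies of the joint support, contradicting non-periodicity. The only difference is in bookkeeping: where the paper first reduces to $s$ linearly independent columns and invokes the pseudo-inverse decomposition $P = X_1X_0^{\dagger}+QX_{0\perp}^*$ to conclude that $P^{(:,J)}$ has at most $s$ nonzero rows, you derive the equivalent fact $C^{(S_1^c,S_0)}=0$ directly from the linear independence of the $s$ nonzero rows of $X_0$, which is a slightly more elementary way to reach the same intermediate step.
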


\begin{proof}
First, given non-vanishing $\lambda_0$ and the DFT matrix $F$, the matrix $\diag(\lambda_0)F$ has full rank. If $\diag(\lambda_0)FX_0=\diag(\lambda_0)FX_1$, then $X_1=X_0$. Hence, given $\lambda_0$, the recovery of $X_0$ is unique. By Corollary \ref{cor:ibip}, to complete the proof, we only need to show that $\lambda_0$ is identifiable up to the transformation group.

By assumption, the matrix $X_0$ has rank $s$ and the joint support of the columns of $X_0$, denoted by $J=\{j_1,j_2,\cdots,j_s\}$, is not periodic. Given that $\diag(\lambda_0)FX_0=\diag(\lambda_1)FX_1$, we show that $\lambda_1\in[\lambda_0]^L_\scrT$. Now, the matrix $X_0$ has $s$ linearly independent columns, $\diag(\lambda_0)F$ has full rank, hence the corresponding columns of $X_1$ are also linearly independent. Without loss of generality, we may assume that $X_0$ and $X_1$ only have $s$ columns, which are linearly independent, by removing redundant columns at the same locations in both matrices. Then $X_0,X_1\in\bbC^{n\times s}$ have full column rank $s$ and exactly $s$ nonzero rows. Because $F$ has no zero entries, it follows that there are no zero rows in $FX_0$ or $FX_1$. The vector $\lambda_0$ is non-vanishing, hence $\lambda_1$ is also non-vanishing. We know that
\begin{equation}
P=F^*[\diag(\lambda_1)]^{-1}\diag(\lambda_0)F
\label{eq:11}
\end{equation}
is a circulant matrix and that $X_1=PX_0$.
Let $X_0^{\dagger}\in\bbC^{s\times n}$ denote the pseudo-inverse (also the left inverse) of $X_0$, and $X_{0\perp}\in\bbC^{n\times(n-s)}$ denote a matrix whose columns form a basis for the ortho-complement of the column space of $X_0$. Since $X_0$ has full column rank $s$ and exactly $s$ nonzero rows indexed by $J$, we may choose $X_0^{\dagger}$ such that its nonzero columns are indexed by $J$, and choose the columns of $X_{0\perp}$ to be the standard basis vectors $\{I^{(:,k)}: k\in J^c\}$. The matrix $P$ as in $X_1=PX_0$ satisfies
\begin{equation}
P = X_1X_0^{\dagger}+QX_{0\perp}^*,
\label{eq:12}
\end{equation}
where $Q\in \bbC^{n\times (n-s)}$ is a free matrix. Note that the nonzero columns of $QX_{0\perp}^*$ are indexed by $J^c$ and the nonzero columns of $X_1X_0^{\dagger}$ are indexed by $J$. Hence $P^{(:,J)}=X_1X_0^{\dagger(:,J)}$. The submatrix $P^{(:,J)}$ has no more than $s$ nonzero rows because $X_1$ has $s$ nonzero rows.

We prove $\lambda_1\in[\lambda_0]^L_\scrT$ by contradiction. Suppose that $\lambda_1\notin [\lambda_0]^L_\scrT$. By \eqref{eq:9} and \eqref{eq:10}, the entrywise ratio $\gamma = \lambda_0./\lambda_1\notin \Gamma(F)$, which means that $\frac{1}{\sqrt{n}}F^*\gamma$, the first column of the circulant matrix $P$ (as in \eqref{eq:11}), has more than one nonzero entry. Denote the indices of the first two nonzero entries of $P^{(:,1)}$ by $k_1$ and $k_2$. By the structure of circulant matrices, the rows of $P^{(:,J)}$ indexed by the following two sets (interpreted modulo $n$) are nonzero:
\begin{align*}
K_1 &=\{k_1+j_1-1,k_1+j_2-1,\cdots,k_1+j_s-1\},\\
K_2 &=\{k_2+j_1-1,k_2+j_2-1,\cdots,k_2+j_s-1\}.
\end{align*}
Note that $|K_1|=|K_2|=s$. Recall that $P^{(:,J)}$ has no more than $s$ nonzero rows, hence $K_1=K_2$. It follows that set $K_1$ is periodic with period $\ell = |k_2-k_1|$. By the property in Remark \ref{rem:periodic}, the set $J$ is also periodic with the same period, and we reach a contradiction. Therefore, the assumption that $\lambda_1\notin [\lambda_0]^L_\scrT$ is false, and Condition 1 of Corollary \ref{cor:ibip} is satisfied - the vector $\lambda_0$ is identifiable up to the transformation group.
\end{proof}

\begin{corollary}\label{cor:jointsparsity}
If $N\geq s$, then the conclusion of Theorem \ref{thm:jointsparsity} holds for almost all $\lambda_0\in\bbC^n$, and almost all $X_0\in\bbC^{n\times N}$ that has $s$ nonzero rows and non-periodic joint support.
\end{corollary}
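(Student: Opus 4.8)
The plan is to verify that the three hypotheses of Theorem \ref{thm:jointsparsity} hold generically, exactly in the spirit of the subspace case treated in Corollary \ref{cor:subspace}. Two of the three conditions are essentially free: Condition 3 (non-periodic joint support) is part of the hypothesis of the corollary itself and so holds by assumption, and the ``exactly $s$ nonzero rows'' clause of Condition 2 is likewise built into the hypothesis that $X_0$ has $s$ nonzero rows with non-periodic joint support. Condition 1 (non-vanishing $\lambda_0$) fails only on the finite union of coordinate hyperplanes $\bigcup_{j=1}^{n}\{\lambda\in\bbC^n:\lambda^{(j)}=0\}$, which has Lebesgue measure zero in $\bbC^n$; hence almost all $\lambda_0$ are admissible.

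The only part requiring real work is therefore the rank-$s$ requirement in Condition 2. First I would fix a non-periodic index set $J\subset\{1,2,\cdots,n\}$ with $|J|=s$. Since there are only finitely many such sets, it suffices to prove genericity for each fixed $J$ and then take the union over them. Any $X_0$ whose joint support is $J$ is determined by its nonzero block $X_0^{(J,:)}\in\bbC^{s\times N}$, so I would regard the relevant measure space as $\bbC^{s\times N}\cong\bbC^{sN}$ with Lebesgue measure. Because the complementary rows of $X_0$ vanish, $\rank(X_0)=\rank(X_0^{(J,:)})$, and so it remains to show that a generic $s\times N$ matrix has full row rank $s$.

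This is where the hypothesis $N\geq s$ enters decisively: full row rank $s$ is attainable only when $N\geq s$, and in that regime the rank-deficient $s\times N$ matrices form the common zero set of all $s\times s$ minors. Each such minor is a polynomial in the entries that is not identically zero (for $N\geq s$ some $s\times s$ submatrix can be chosen invertible), so the rank-deficient locus is a \emph{proper} algebraic subvariety of $\bbC^{sN}$ and hence has measure zero. Note also that full row rank automatically forces every one of the $s$ rows to be nonzero, so the joint support is indeed exactly $J$. Thus, for each fixed non-periodic $J$, almost every admissible $X_0$ satisfies Condition 2.

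Combining the three verifications, for almost all $\lambda_0\in\bbC^n$ and, for each of the finitely many non-periodic supports $J$, almost all $X_0$ with joint support $J$, all three conditions of Theorem \ref{thm:jointsparsity} are met, so $(\lambda_0,X_0)$ is identifiable up to the transformation group $\scrT$ of \eqref{eq:10}. The argument is a routine genericity computation with no serious obstacle; the single point demanding care is the role of $N\geq s$, which is exactly what guarantees that the rank-deficient locus is a proper subvariety (its complement being nonempty) rather than the entire space, so that its measure is genuinely zero.
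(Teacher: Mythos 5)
Your proof is correct and takes essentially the same approach as the paper's: both simply verify that the three hypotheses of Theorem \ref{thm:jointsparsity} hold generically, with non-periodicity and the $s$-nonzero-rows condition built into the corollary's hypothesis, non-vanishing of $\lambda_0$ failing only on a measure-zero set, and the rank-$s$ condition holding for almost all admissible $X_0$ when $N\geq s$. The only difference is that you spell out the genericity argument the paper leaves implicit (stratifying by the finitely many non-periodic supports $J$, identifying each stratum with $\bbC^{sN}$, and noting that the rank-deficient locus is the zero set of not-identically-zero $s\times s$ minors), which is a welcome but not essentially different elaboration.
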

\begin{proof}
Almost all $\lambda_0\in\bbC^n$ are non-vanishing.
If $N\geq s$, then almost all $X_0\in\bbC^{n\times N}$ with $s$ nonzero rows have rank $s$. In addition, the joint support of $X_0$ is not periodic. Therefore, the conditions in Theorem \ref{thm:jointsparsity} are met, and $(\lambda_0,X_0)$ is identifiable up to the transformation group $\scrT$ defined in \eqref{eq:10}.
\end{proof}

Corollary \ref{cor:jointsparsity} shows that, in the BGPC problem with DFT matrix and a joint sparsity constraint, given that $N\geq s$, the identifiability of generic signals $(\lambda_0,X_0)$ hinges on the joint support of $X_0$. If the joint support is non-periodic, $(\lambda_0,X_0)$ is almost always identifiable. Other priors may imply non-periodicity. For example, if the joint support is a contiguous block, or if $n$ and $s$ are coprime, the joint support has to be non-periodic.

\begin{corollary}\label{cor:js1}
If $N\geq s$, then the conclusion of Theorem \ref{thm:jointsparsity} holds for almost all $\lambda_0\in\bbC^n$, and almost all $X_0\in\bbC^{n\times N}$ that has $s$ nonzero rows that are contiguous.
\end{corollary}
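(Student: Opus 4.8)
The plan is to deduce this corollary directly from Corollary \ref{cor:jointsparsity}. That corollary already handles almost all $\lambda_0$ and almost all $X_0$ having $s$ nonzero rows and \emph{non-periodic} joint support; the only thing left to establish is that a contiguous block of $s$ rows (with $0<s<n$) is automatically non-periodic. Once that purely combinatorial fact is in hand, the genericity steps---almost all $\lambda_0\in\bbC^n$ being non-vanishing (Condition 1), and, for $N\ge s$, almost all $X_0$ with $s$ nonzero rows having rank $s$ (Condition 2)---carry over verbatim from the proof of Corollary \ref{cor:jointsparsity}.

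First I would make the notion of ``contiguous'' precise on the cyclic index set $\{1,2,\dots,n\}$: the joint support is an arc $J=\{j_0,j_0+1,\dots,j_0+s-1\}$ (indices interpreted modulo $n$), possibly wrapping around the end. I would then argue by contradiction that $J$ cannot be periodic. Suppose $J$ were periodic with some period $\ell$, $0<\ell<n$; then $J$ would equal its own shift $\{j_0+\ell,\dots,j_0+s-1+\ell\}$, which is again a contiguous arc of length $s$, but now starting at $j_0+\ell$. Since $0<s<n$, an arc of length $s$ on $\mathbb{Z}_n$ is uniquely determined by its starting point (the unique element $a\in J$ whose predecessor $a-1$ lies in $J^c$), so equality of the two arcs forces $j_0+\ell\equiv j_0\pmod n$, i.e. $\ell\equiv 0\pmod n$, contradicting $0<\ell<n$. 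Hence $J$ is non-periodic.

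With non-periodicity of the contiguous support established, I would invoke Corollary \ref{cor:jointsparsity}: its three hypotheses (non-vanishing $\lambda_0$; exactly $s$ nonzero rows and rank $s$ for $X_0$; non-periodic joint support) are met for almost all $\lambda_0$ and almost all such $X_0$ whenever $N\ge s$, and the conclusion of Theorem \ref{thm:jointsparsity} follows. I do not anticipate a genuine obstacle; the only points requiring care are the cyclic (wrap-around) reading of ``contiguous'' and the implicit assumption $s<n$ (for $s=n$ the support is the universal set, which is periodic and hence outside the scope of Theorem \ref{thm:jointsparsity}). In essence this corollary simply records the observation already flagged in the discussion following Corollary \ref{cor:jointsparsity}.
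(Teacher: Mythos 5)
Your proposal is correct and matches the paper's (largely implicit) argument: the paper states this corollary without a separate proof, relying precisely on the observation made right after Corollary \ref{cor:jointsparsity} that a contiguous joint support must be non-periodic, and then invoking that corollary's genericity conclusions. Your cyclic-arc argument for non-periodicity (unique starting point of an arc of length $s<n$) and your caveat about the degenerate case $s=n$ correctly fill in the details the paper leaves to the reader.
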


\begin{corollary}\label{cor:js2}
If $N\geq s$, and $n$ and $s$ are coprime, then the conclusion of Theorem \ref{thm:jointsparsity} holds for almost all $\lambda_0\in\bbC^n$, and almost all $X_0\in\bbC^{n\times N}$ that has $s$ nonzero rows.
\end{corollary}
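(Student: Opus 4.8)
The plan is to reduce Corollary \ref{cor:js2} to Corollary \ref{cor:jointsparsity} by showing that the coprimality hypothesis $\gcd(n,s)=1$ forces \emph{every} support set of size $s$ to be non-periodic. Once this is established, the third condition of Theorem \ref{thm:jointsparsity} (non-periodicity of the joint support) holds automatically, so the requirement on $X_0$ in Corollary \ref{cor:jointsparsity} collapses to merely having $s$ nonzero rows, and the conclusion follows at once.

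The heart of the argument is a short counting observation in the additive group $\mathbb{Z}_n$ of integers modulo $n$, which I identify with the index set $\{1,2,\cdots,n\}$. Suppose, toward a contradiction, that a set $J$ with $|J|=s$ is periodic with some period $\ell$, where $0<\ell<n$; that is, $J=J+\ell$ (modulo $n$). Then $J$ is invariant under translation by $\ell$, hence invariant under the entire cyclic subgroup $\langle\ell\rangle\subset\mathbb{Z}_n$ that $\ell$ generates. Consequently $J$ is a disjoint union of cosets of $\langle\ell\rangle$, and each such coset has cardinality $|\langle\ell\rangle|=n/\gcd(n,\ell)$.

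From this I would conclude that $n/\gcd(n,\ell)$ divides $s=|J|$, since $J$ is a union of cosets each of that size. At the same time $n/\gcd(n,\ell)$ plainly divides $n$. Therefore $n/\gcd(n,\ell)$ divides $\gcd(n,s)=1$, which forces $n/\gcd(n,\ell)=1$, i.e. $\gcd(n,\ell)=n$, i.e. $\ell\equiv 0$ (modulo $n$). This contradicts $0<\ell<n$. Hence no admissible period $\ell$ exists, and $J$ is non-periodic. Since $J$ was an arbitrary support of size $s$, under the assumption $\gcd(n,s)=1$ every size-$s$ joint support is non-periodic.

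Finally I would invoke Corollary \ref{cor:jointsparsity}: with $N\geq s$, for almost all $\lambda_0\in\bbC^n$ and almost all $X_0\in\bbC^{n\times N}$ that has $s$ nonzero rows and non-periodic joint support, the pair $(\lambda_0,X_0)$ is identifiable up to the transformation group $\scrT$ of \eqref{eq:10}. Because coprimality renders the non-periodicity automatic, the exceptional set where identifiability can fail is inherited verbatim from Corollary \ref{cor:jointsparsity} and remains of measure zero, giving exactly the statement of Corollary \ref{cor:js2}. I do not expect a genuine obstacle here; the only points requiring care are the orbit-size count $|\langle\ell\rangle|=n/\gcd(n,\ell)$ and the correct direction of the divisibility chain in the coprimality step.
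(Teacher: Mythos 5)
Your proposal is correct and follows the same route the paper intends: the paper states Corollary \ref{cor:js2} without proof, remarking only that coprimality of $n$ and $s$ forces any size-$s$ joint support to be non-periodic, after which Corollary \ref{cor:jointsparsity} applies verbatim. Your coset-counting argument (a period $\ell$ makes $J$ a union of cosets of $\langle\ell\rangle\subset\mathbb{Z}_n$, so $n/\gcd(n,\ell)$ divides both $s$ and $n$, hence equals $1$, contradicting $0<\ell<n$) is a correct and complete justification of exactly that unproved step.
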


Clearly, the coprimeness condition in Corollary \ref{cor:js2} is satisfied for all $s<n$ if $n$ is a prime number.

The above results are under the assumption that the sparsity level $s$ is known a priori. If $s$ is unknown, instead of solving the feasibility problem ($\mathrm{P1}$), one can solve the optimization problem ($\mathrm{P2}$). We have the following corollary, whose proof is almost identical to that of Theorem \ref{thm:jointsparsity}.
\begin{corollary}\label{cor:unknowns1}
In the BGPC problem with DFT matrix and unknown sparsity level, the pair $(\lambda_0,X_0)\in \bbC^n\times \Omega_\calX$ is the unique minimizer of ($\mathrm{P2}$) up to the transformation group $\scrT$ defined in \eqref{eq:10}, if the following conditions are met:
\begin{enumerate}
	\item Vector $\lambda_0$ is non-vanishing.
	\item Matrix $X_0$ has rank equal to the number of nonzero rows.
	\item The joint support of the columns of $X_0$ is not periodic.
\end{enumerate}
\end{corollary}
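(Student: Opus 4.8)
The plan is to reuse the proof of Theorem~\ref{thm:jointsparsity} almost verbatim, with the a priori joint-sparsity constraint $X\in\Omega_\calX$ replaced by the minimality of the objective $\operatorname{row-sparsity}(X)$ in ($\mathrm{P2}$). Let $s$ denote the number of nonzero rows of $X_0$ and let $J=\{j_1,\dots,j_s\}$ be its joint support, so $|J|=s$; by Condition~2 the rows $\{X_0^{(k,:)}:k\in J\}$ are linearly independent (which in particular forces $N\geq s$). First I would verify, exactly as in the theorem, that $Y=\diag(\lambda_0)FX_0$ has no zero rows: each row of $FX_0$ is a linear combination of the linearly independent rows $X_0^{(k,:)}$, $k\in J$, with coefficients $F^{(i,k)}\neq 0$, and is therefore nonzero. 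Consequently every feasible $(\lambda_1,X_1)$ of ($\mathrm{P2}$) must have a non-vanishing $\lambda_1$, since a zero entry of $\lambda_1$ would force a zero row in $Y$. Setting $\gamma=\lambda_0./\lambda_1$, which is non-vanishing, I obtain $X_1=PX_0$ with $P=F^*\diag(\gamma)F$ circulant, just as in \eqref{eq:11}--\eqref{eq:12}.

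The crux is to count the nonzero rows of $X_1=PX_0$ in terms of $P$ and $J$. Because the rows of $X_0$ indexed by $J$ are linearly independent, row $i$ of $X_1$ equals $\sum_{k\in J}P^{(i,k)}X_0^{(k,:)}$ and is therefore nonzero precisely when some entry $P^{(i,k)}$, $k\in J$, is nonzero; hence $\operatorname{row-sparsity}(X_1)$ equals the number of nonzero rows of the submatrix $P^{(:,J)}$. If the first column of the circulant $P$ has two distinct nonzero entries at positions $k_1,k_2$, then, exactly as in the theorem, the sets $K_1=\{k_1+j_1-1,\dots,k_1+j_s-1\}$ and $K_2=\{k_2+j_1-1,\dots,k_2+j_s-1\}$ (modulo $n$), both of cardinality $s$, index nonzero rows of $X_1$. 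Equality $K_1=K_2$ would give $J+(k_2-k_1)=J$, i.e.\ $J$ periodic with period $k_2-k_1\neq 0$ (cf.\ Remark~\ref{rem:periodic}), contradicting Condition~3; hence $K_1\neq K_2$, so $|K_1\cup K_2|\geq s+1$ and $X_1$ has at least $s+1$ nonzero rows.

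This inequality is what replaces the role played by the constraint $X_1\in\Omega_\calX$ in the theorem, and it delivers both halves of the claim simultaneously. On the one hand $(\lambda_0,X_0)$ is feasible with objective value $s$, so the minimum of ($\mathrm{P2}$) is at most $s$. On the other hand, if $(\lambda_1,X_1)$ is any minimizer, the estimate above shows that the first column of $P$ cannot have two nonzero entries—otherwise $\operatorname{row-sparsity}(X_1)\geq s+1>s$, contradicting minimality—so it has exactly one nonzero entry, i.e.\ $\gamma\in\Gamma(F)$ by \eqref{eq:9}. Then $P=F^*\diag(\gamma)F$ is a scaled circular shift, $(\lambda_1,X_1)=\calT(\lambda_0,X_0)$ for some $\calT\in\scrT$ of \eqref{eq:10}, and the minimum value is exactly $s$. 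Thus $(\lambda_0,X_0)$ is the unique minimizer of ($\mathrm{P2}$) up to $\scrT$.

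I expect the only point requiring care to be the \emph{direction} of this counting argument. In Theorem~\ref{thm:jointsparsity} the a priori bound $\operatorname{row-sparsity}(X_1)\leq s$ is used to deduce $K_1=K_2$ and hence a contradiction with non-periodicity, whereas here the implication must be run the other way: non-periodicity yields $K_1\neq K_2$, which produces at least $s+1$ nonzero rows and contradicts minimality. The genuinely new content relative to the theorem is establishing that $(\lambda_0,X_0)$ actually attains the minimum of the objective, rather than merely being isolated among matrices that are already jointly $s$-sparse; but this falls out of the very same estimate, which shows $\operatorname{row-sparsity}(X_1)\geq s$ for every feasible $(\lambda_1,X_1)$.
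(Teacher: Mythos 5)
Your proof is correct and takes essentially the same route as the paper's: the paper proves this corollary by rerunning the argument of Theorem~\ref{thm:jointsparsity}, with the a priori bound $\operatorname{row-sparsity}(X_1)\leq s$ now supplied by minimality of the objective rather than by the constraint set, and your counting argument (circulant $P$, the shifted support sets $K_1,K_2$, and Remark~\ref{rem:periodic}) is exactly that argument run in contrapositive form. The only presentational differences are that you count the nonzero rows of $X_1$ directly via linear independence of the rows of $X_0$ (Condition 2) instead of the pseudo-inverse decomposition \eqref{eq:12}, and that you explicitly verify the minimum value equals $s$ and that feasible $\lambda_1$ are non-vanishing---details the paper leaves implicit in its "almost identical" remark.
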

We can derive row sparsity minimization analogs of Corollaries \ref{cor:jointsparsity}, \ref{cor:js1} and \ref{cor:js2} in a similar fashion. These results are omitted for the sake of brevity.

\subsubsection{Necessary Condition}
Given that $\lambda_0$ is non-vanishing, ``the joint support of the columns of $X_0$ is not periodic'' is necessary. We prove this by contraposition. We assume that the joint support of the columns of $X_0$ is periodic with period $\ell$, and next show that $(\lambda_0,X_0)$ is not identifiable up to the transformation group in \eqref{eq:10}. Let $P$ be a circulant matrix whose first column has two nonzero entries $P^{(1,1)}=1$ and $P^{(\ell+1,1)}=2$. Thus, the DFT $\gamma = \sqrt{n}FP^{(:,1)}$ of the first column of $P$ is non-vanishing. Let $\lambda_1=\lambda_0./\gamma$ and $X_1=PX_0$. Then $P$ satisfies \eqref{eq:11}, and $\diag(\lambda_1)FX_1=\diag(\lambda_0)FX_0$. Since $P$ is not a generalized permutation matrix, $X_1$ is not a scaled and circularly shifted version of $X_0$. Hence $(\lambda_0,X_0)$ is not identifiable up to the transformation group in \eqref{eq:10}.

The above necessary condition does not address the sample complexity. Like Proposition \ref{pro:necessary}, we have the following necessary condition for the sample complexity.
\begin{proposition}\label{pro:necessary2}
In the BGPC problem with DFT matrix and a joint sparsity constraint, if $(\lambda_0,X_0)$ ($\lambda_0$ is non-vanishing, $X_0$ has at most $s$ nonzero rows) is identifiable up to the transformation group in \eqref{eq:10}, then $N\geq\frac{n-1}{n-s}$.
\end{proposition}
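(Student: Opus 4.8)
The plan is to prove the contrapositive, in close parallel with the proof of Proposition~\ref{pro:necessary}: assuming $N<\frac{n-1}{n-s}$, equivalently the integer bound $N(n-s)\leq n-2$, I will exhibit a feasible solution of $(\mathrm{P1})$ lying outside the equivalence class $[(\lambda_0,X_0)]_{\scrT}$. Let $J$ denote the joint support of the columns of $X_0$ (at most $s$ rows), and enlarge it if necessary to a set $\widetilde{J}\supseteq J$ with $|\widetilde{J}|=s$. Mimicking the construction there, I would introduce the linear operator $\mathcal{G}:\bbC^n\rightarrow\bbC^{(n-s)\times N}$ defined by
\[
\mathcal{G}(x)\coloneqq \bigl(F^{(:,\widetilde{J}^c)}\bigr)^*\diag(x)Y,
\]
which extracts the rows of $F^*\diag(x)Y$ indexed by $\widetilde{J}^c$. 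Here the invertibility of $F$ plays the role that the ortho-complement $A_\perp$ played before: rather than forcing the columns of $\diag(x)Y$ into $\calC(A)$, we force $F^*\diag(x)Y$ to be supported on $\widetilde{J}$. The key observation is that for any non-vanishing $x\in\mathcal{N}(\mathcal{G})$, the matrix $X\coloneqq F^*\diag(x)Y$ is supported on $\widetilde{J}$, hence jointly $s$-sparse, and the pair $(\lambda,X)$ with $\lambda\coloneqq 1./x$ satisfies $\diag(\lambda)FX=\diag(1./x)FF^*\diag(x)Y=Y$, so it is a feasible solution.

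The dimension count then proceeds as in Proposition~\ref{pro:necessary}. Setting $x_0\coloneqq 1./\lambda_0$, we have $F^*\diag(x_0)Y=X_0$, supported on $J\subseteq\widetilde{J}$, so $x_0\in\mathcal{N}(\mathcal{G})$. Since $\mathcal{G}$ imposes $(n-s)N$ linear equations, $\dim\mathcal{N}(\mathcal{G})\geq n-(n-s)N\geq 2$ under the standing assumption. Thus $\mathcal{N}(\mathcal{G})$ is a complex subspace of dimension at least two containing $x_0$.

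The crux is to select from $\mathcal{N}(\mathcal{G})$ a non-vanishing vector yielding a solution \emph{outside} the equivalence class. Using \eqref{eq:9} and \eqref{eq:10}, I would first characterize membership: the solution $(\lambda,X)$ built from a non-vanishing $x$ lies in $[(\lambda_0,X_0)]_{\scrT}$ if and only if $\lambda_0\odot x=\sigma\sqrt{n}F^{(:,k)}$ for some nonzero $\sigma$ and some $k$, that is, if and only if $x$ lies in one of the $n$ lines $L_k\coloneqq\{\sigma\sqrt{n}F^{(:,k)}\odot x_0:\sigma\in\bbC\}$ (with $L_1=\operatorname{span}(x_0)$). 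The main obstacle, absent in Proposition~\ref{pro:necessary}, is that the enlarged ambiguity group here combines scaling \emph{and} circular shift, so one must avoid not merely the single scaling line $L_1$ but all $n$ lines $L_k$ at once, while also avoiding the coordinate hyperplanes $\{x:x^{(j)}=0\}$ that would make $x$ vanish. I would resolve this by the standard fact that a vector space over the infinite field $\bbC$ is never a finite union of proper subspaces: inside $\mathcal{N}(\mathcal{G})$, each $L_k\cap\mathcal{N}(\mathcal{G})$ has dimension at most one, and each $\{x:x^{(j)}=0\}\cap\mathcal{N}(\mathcal{G})$ is proper because $x_0$ is non-vanishing, so these finitely many proper subspaces cannot cover the at-least-two-dimensional $\mathcal{N}(\mathcal{G})$. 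Any $x$ in the complement is non-vanishing and avoids every $L_k$, so the associated $(\lambda,X)$ is a feasible solution not equivalent to $(\lambda_0,X_0)$, which by Definition~\ref{def:iutg} contradicts identifiability. (Alternatively, one can follow Proposition~\ref{pro:necessary} almost verbatim, taking $x=x_0+\alpha x_1$ for a second null vector $x_1$ linearly independent of $x_0$ and all but finitely many small $\alpha$, and verifying directly that $\lambda_0\odot x$ fails to be a scaled column of $F$ for every $k$.)
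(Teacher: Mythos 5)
Your proposal is correct and follows essentially the same route as the paper: the paper reduces to the known-support (subspace) case and invokes the null-space dimension count of Proposition \ref{pro:necessary}, which is exactly your operator $\mathcal{G}$ with $\bigl(F^{(:,\widetilde{J}^c)}\bigr)^*$ playing the role of $A_\perp^*$. If anything, your write-up is more careful at the one point the paper glosses over with ``almost identical'': since the ambiguity group in \eqref{eq:10} combines scaling with circular shifts, a non-equivalent solution must avoid all $n$ lines $L_k$ (not just $\operatorname{span}(x_0)$, as in Proposition \ref{pro:necessary}) as well as the coordinate hyperplanes, and your finite-union-of-proper-subspaces argument handles this cleanly.
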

\begin{proof}
The matrix $X_0$ has at least $n-s$ zero rows. If we know the locations of $n-s$ zero rows, the problem becomes a BGPC problem with a subspace constraint. The columns of $AX_0$ reside in an $s$-dimensional subspace. If $N<\frac{n-1}{n-s}$, the pair $(\lambda_0,X_0)$ is not identifiable up to scaling and circular shift. The proof is almost identical to that of Proposition \ref{pro:necessary}.

The pair $(\lambda_0,X_0)$ cannot be identified even if we know the locations of $n-s$ zero rows. Hence it is not identifiable without knowing the locations of zero rows.
\end{proof}

The above necessary condition gives a tight lower bound on sample complexity. Morrison et al. \cite{Morrison2009} showed the same necessary condition for SAR autofocus (in the case of known row support of $X_0$). The two sample complexities, $N\geq s$, as is required by Theorem \ref{thm:jointsparsity} implicitly and Corollary \ref{cor:jointsparsity} explicitly, and $N\geq \frac{n-1}{n-s}$, coincide when $s=1$ or $s=n-1$. The gap between the sufficient condition and the necessary condition is analyzed next.

\subsubsection{Gap Between the Sufficient and the Necessary Conditions}\label{sec:gap2}
The sample complexity $N\geq s$ in the sufficient condition and the sample complexity $N\geq \frac{n-1}{n-s}$ in the necessary condition can be represented by the regions to the right of the line segment and the hyperbola (cf. Figure \ref{fig:ptjs}).
\begin{figure}[htbp]%
\centering
\input{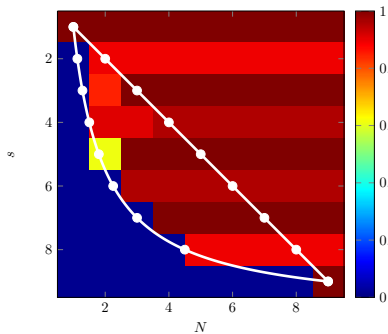}
\caption{The sample complexities for BGPC with DFT matrix and a joint sparsity constraint, and the ratio of identifiable pairs generated randomly.}%
\label{fig:ptjs}%
\end{figure}

Algorithm 2 can be used to check the identifiability of BGPC with DFT matrix and a joint-sparsity constraint. Given $Y=\diag(\lambda_0)FX_0$ that has no zero rows and joint support of $X_0$ that has cardinality $s$, Algorithm 2 returns a boolean value indicating whether or not $(\lambda_0,X_0)$ is identifiable up to the transformation group in \eqref{eq:10}. The procedure enumerates all joint supports of cardinality $s$.
\begin{algorithm}[H]
\caption{Identifiability of the BGPC problem with DFT matrix and a joint sparsity constraint}
\begin{algorithmic}
\STATE \textbf{input}: $Y$, the joint support $J$ \quad \textbf{output}: identifiability of $(\lambda_0,X_0)$
\FORALL{support $J'$ such that $|J'|=s$}
\STATE $G_{J'} \leftarrow \begin{bmatrix}[\diag(Y^{(:,1)})]^*F^{(:,J'^c)} & [\diag(Y^{(:,2)})]^*F^{(:,J'^c)} & \cdots & [\diag(Y^{(:,N)})]^*F^{(:,J'^c)} \end{bmatrix}^*$
\IF{$\rank(G_{J'})\leq n-2$}
\RETURN False
\ENDIF
\IF{$\rank(G_{J'})= n-1$ \AND $J'$ is not a shifted version of $J$}
\RETURN False
\ENDIF
\ENDFOR
\RETURN True
\end{algorithmic}
\end{algorithm}

\begin{proposition}
Given $Y=\diag(\lambda_0)FX_0$ that has no zero rows and the joint support of $X_0$ that has cardinality $s$, the pair $(\lambda_0,X_0)$ is identifiable (up to the transformation group in \eqref{eq:10}) if Algorithm 2 returns True, and not identifiable otherwise.
\end{proposition}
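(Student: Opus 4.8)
The plan is to adapt the argument used for the subspace test (the proposition following Algorithm~1, together with Proposition~\ref{pro:necessary}) to the union-of-supports setting. First I would record the basic dictionary between null vectors and solutions. For a candidate support $J'$ with $|J'|=s$, the matrix assembled in Algorithm~2 satisfies $G_{J'}x=\operatorname{vec}\bigl(F^{(:,J'^c)*}\diag(x)Y\bigr)$, so that $G_{J'}x=0$ precisely when the rows of $X:=F^*\diag(x)Y$ indexed by $J'^c$ vanish, i.e. when $\supp(X)\subseteq J'$. Because $Y$ has no zero rows, every solution $(\lambda,X)$ has a non-vanishing $\lambda$; letting $x$ be the entrywise inverse of $\lambda$, the pair $(\lambda,X)$ is a solution whose joint support is contained in $J'$ if and only if $x$ is a non-vanishing null vector of $G_{J'}$. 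In particular the entrywise inverse $x_0$ of $\lambda_0$ is a non-vanishing null vector of $G_J$, where $J=\supp(X_0)$, and the $n$ scaled circular shifts of $(\lambda_0,X_0)$ contribute the non-vanishing null vectors $\sqrt{n}\,F^{(:,k)}\odot x_0$ of the matrices $G_{J+k-1}$; these are exactly the $x$-vectors of the equivalence class $[(\lambda_0,X_0)]_{\scrT}$ by \eqref{eq:9}--\eqref{eq:10}.

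The forward direction is then clean. If Algorithm~2 returns True, take any solution and its non-vanishing null vector $x\in\mathcal N(G_{J'})$, where $|J'|=s$ and $J'\supseteq\supp(X)$. Since the algorithm did not return False, $\rank(G_{J'})\ge n-1$; as $x\ne0$ this forces $\rank(G_{J'})=n-1$, so $\mathcal N(G_{J'})$ is one-dimensional, and moreover $J'$ must be a shift of $J$. The corresponding scaled circular shift of $(\lambda_0,X_0)$ is itself a solution supported in $J'$, so its non-vanishing $x$-vector also spans $\mathcal N(G_{J'})$; hence $x$ is a scalar multiple of it and $(\lambda,X)\in[(\lambda_0,X_0)]_{\scrT}$. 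Every solution therefore lies in the equivalence class, which is identifiability up to $\scrT$ (Definition~\ref{def:iutg}); note that this step needs only the existence of the shift solution, not any rank hypothesis on $X_0$.

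For the converse I would reformulate the null spaces through circulants. Writing $w=x\odot\lambda_0$ and $C_w=F^*\diag(w)F$ (circulant, with first column $c=\frac{1}{\sqrt n}F^*w$), the condition $x\in\mathcal N(G_{J'})$ becomes ``the rows of $C_wX_0$ indexed by $J'^c$ vanish.'' Since row $i$ of $C_wX_0$ equals $\sum_{j\in J}c_{(i-j)}X_0^{(j,:)}$, linear independence of the support rows $\{X_0^{(j,:)}:j\in J\}$ converts this into $c_m=0$ for all $m\in(J'^c-J)\bmod n$, equivalently $c_m\ne0\Rightarrow J+m=J'$. This identifies $\mathcal N(G_{J'})$ as $\{0\}$ for a non-shift $J'$ and as the one-dimensional, non-vanishing, shift--modulation line for a shift $J'$. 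Using this, each False branch yields a genuine counterexample: when $\rank(G_{J'})\le n-2$ the null space has dimension at least two, so perturbing a non-vanishing seed (the vector $x_0$ at $J'=J$, or a shift--modulation at a shift $J'$) as in Proposition~\ref{pro:necessary} produces a non-vanishing null vector off the finitely many equivalence lines, hence a solution outside $[(\lambda_0,X_0)]_{\scrT}$; and when $\rank(G_{J'})=n-1$ with $J'$ not a shift of $J$, the one-dimensional null space gives a solution whose support, being contained in $J'$, cannot by cardinality be a shift of $J$, so it too lies outside the equivalence class.

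The main obstacle is exactly this converse, and specifically guaranteeing that a rank drop of $G_{J'}$ always reflects a non-vanishing null vector lying off the equivalence lines rather than a spurious, all-vanishing null space. The circulant structural computation above is the tool I would rely on, but it uses linear independence of the rows $\{X_0^{(j,:)}:j\in J\}$, i.e. $X_0$ having full row rank $s$; since the hypotheses assume only $|\supp(X_0)|=s$ and that $Y$ has no zero rows, I must also reconcile the rank-deficient and periodic cases. I expect to show that a periodic joint support or a rank-deficient $X_0$ each inflates $\dim\mathcal N(G_J)$ beyond one anchored at the non-vanishing seed $x_0$, so that Algorithm~2 returns False while, simultaneously, the extra (non-permutation circulant) null directions furnish explicit solutions outside $[(\lambda_0,X_0)]_{\scrT}$ --- matching the periodicity obstruction already isolated in the necessary-condition analysis preceding Proposition~\ref{pro:necessary2}. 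Carrying out this bookkeeping so that the rank test and the transformation group \eqref{eq:10} agree on every degenerate support is where the bulk of the care is required.
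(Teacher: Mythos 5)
Your forward direction (True $\Rightarrow$ identifiable) is correct and coincides with the paper's own argument, including the useful observation that it needs no rank hypothesis on $X_0$: every solution's inverse gain vector is a non-vanishing null vector of some $G_{J'}$, the True branch forces that null space to be one-dimensional with $J'$ a shift of $J$, the equivalence-class member supported on $J'$ spans it, and uniqueness of $X$ given $\lambda$ finishes via Corollary~\ref{cor:ibip}.

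The converse, however, is left unproved, and the route you sketch for closing it would fail. Your structural identification of $\mathcal{N}(G_{J'})$ ($\{0\}$ at non-shifts, a single non-vanishing line at shifts) is valid only when the $s$ nonzero rows of $X_0$ are linearly independent, which the proposition does not assume. You propose to absorb the remaining cases by showing that rank deficiency of $X_0$ inflates $\dim\mathcal{N}(G_J)$ beyond one, so that a non-vanishing seed can be perturbed as in Proposition~\ref{pro:necessary}. That claim is false: take $n=6$, $J=\{1,2,3\}$, and let the nonzero rows of $X_0$ be $v_1,v_2,v_1+v_2$ with $v_1,v_2$ linearly independent. The left null space of $X_0^{(J,:)}$ is spanned by $(1,1,-1)$, and the null-space conditions in circulant coordinates, namely $(c^{(i)},c^{(i-1)},c^{(i-2)})\propto(1,1,-1)$ for $i=4,5,6$, force $c^{(2)}=\cdots=c^{(6)}=0$; hence $\dim\mathcal{N}(G_J)=1$ even though $\rank(X_0)=2<s$. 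One can check that all shifted supports likewise give dimension one and all non-shifted supports give $\{0\}$, so Algorithm~2 returns True and this rank-deficient pair is identifiable by your own forward direction. Thus rank deficiency forces neither a False branch nor non-identifiability, and your bookkeeping plan starts from a wrong premise. What actually must be shown in the converse is the case you set aside: a False trigger at a support $J'$ that is not a shift of $J$ (possible only when the rows are dependent) comes with no a priori non-vanishing seed in $\mathcal{N}(G_{J'})$, and if every nonzero vector there has a vanishing entry it produces no solution at all, so non-identifiability does not follow from the rank drop without further argument. You correctly flagged this ``spurious null space'' issue as the main obstacle, but flagging is not resolving; note that the paper's own proof slides over the very same point by asserting that \emph{any} null vector of $G_{J'}$ produces a solution supported on $J'$, a statement that is literally true only for non-vanishing null vectors.
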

\begin{proof}
The matrix $G_{J'}\in\bbC^{N(n-s)\times n}$ satisfies that $G_{J'}x=\operatorname{vec}(F^{(:,J'^c)*}\diag(x)Y)$, where $F^{(:,J'^c)*}$ is an annihilator of the column space of $F^{(:,J')}$.
Given $Y$ that has no zero rows, any solution to the BGPC problem $(\lambda,X)$ satisfies that $\lambda$ is non-vanishing, and that the entrywise inverse of $\lambda$ is a null vector of $G_{J'}$, where $J'$ is the joint support of $X$. On the other hand, any null vector of $G_{J'}$ produces a solution $(\lambda,X)$, where $X$ is supported on $J'$.

If Algorithm 2 returns False, then at least one of the following two cases happens:
\begin{enumerate}
	\item $\rank(G_{J'})\leq n-2$ for some $|J'|=s$. By the proof of Proposition \ref{pro:necessary}, the solution is not identifiable even if the support $J'$ is known.
	\item $\rank(G_{J'})= n-1$ for some $J'$ that is not a shifted version of $J$. There exists a solution $(\lambda,X)$, for which $X\notin [X_0]^R_\scrT$. Therefore $(\lambda_0,X_0)$ is not identifiable.
\end{enumerate}
In either case, $(\lambda_0,X_0)$ is not identifiable up to the transformation group in \eqref{eq:10}.

If Algorithm 2 returns True, then $\rank(G_{J'})\geq n-1$ for all $J'$ of cardinality $s$, and $\rank(G_{J'})= n-1$ only if $J'$ is a shifted version of  $J$. Hence any solution $(\lambda,X)$ must satisfy that the joint support $J'$ is a shifted version of $J$. Now, given any shifted joint support $J'$, there exists a solution $(\lambda_{J'},X_{J'})\in[(\lambda_0,X_0)]_\scrT$. Therefore $G_{J'}$ has at least one null vector $x_{J'}$, which is the entrywise inverse of $\lambda_{J'}$. Hence $\rank(G_{J'})= n-1$, and the null vectors of $G_{J'}$ reside in the one-dimensional subspace spanned by $x_{J'}$. It follows that given the joint support $J'$, $\lambda$ in any solution must be a scaled version of $\lambda_{J'}$. Therefore $\lambda\in[\lambda_{J'}]^L_\scrT=[\lambda_0]^L_\scrT$. On the other hand, given non-vanishing $\lambda_0$, $\diag(\lambda_0)F$ has full rank and the recovery of $X_0$ has to be unique. Hence, by Corollary \ref{cor:ibip}, $(\lambda_0,X_0)$ is identifiable up to the transformation group in \eqref{eq:10}.
\end{proof}

The sufficient condition in Theorem \ref{thm:jointsparsity} is not necessary, as shown by the following numerically constructed counter-examples.
Let $n=10$, $1\leq s\leq 9$, and $1\leq N\leq 9$. The joint support $J$ of the columns of $X_0\in\bbR^{n\times N}$ is chosen uniformly at random. The entries of $\lambda_0\in\bbR^{n}$ and the nonzero entries of $X_0$ are generated as iid Gaussian random variables $N(0,1)$. We use Algorithm 2 to determine whether $(\lambda_0,X_0)$ is identifiable up to the transformation group in \eqref{eq:10}. For every value of $s$ and $N$, and every support $J$ of cardinality $s$, the numerical experiment is repeated independently. The ratio of identifiable pairs as a function of $(s,N)$ is shown in Figure \ref{fig:ptjs}. When $\frac{n-1}{n-s}\leq N<s$ (between the line and the hyperbola), the ratio of identifiable pairs is nonzero. Therefore, $N\geq s$ is not necessary.

The necessary condition in Proposition \ref{pro:necessary2} is not sufficient. This too can be demonstrated by Figure \ref{fig:ptjs}. The ratio of identifiable pairs is less than 1 in some regions to the right of the hyperbola. Unidentifiable examples of $(\lambda_0,X_0)$ that satisfy the necessary condition can be found in Appendix \ref{app:necens}.

As shown by Figure \ref{fig:ptjs}, when $N<\frac{n-1}{n-s}$ (to the left of the hyperpola), the pairs are not identifiable. When $N\geq s$ (to the right of the line segment), the identifiability hinges on the joint support of the columns of $X_0$. Most supports are not periodic, hence most pairs are identifiable. When $\frac{n-1}{n-s}\leq N<s$ (between the line and the hyperbola), the situation is more complicated. Besides periodic supports, other joint supports of $X_0$ can also cause non-identifiability. However, given some ``good'' joint support of $X_0$ that depends on both $s$ and $N$, a randomly chosen $(\lambda_0,X_0)$ is identifiable almost surely. Recall that non-periodicity of the joint support is necessary, hence ``good'' supports are a subset of non-periodic supports when $\frac{n-1}{n-s}\leq N< s$. For example, when $s=5$ and $N=2$, about 60\% of the non-periodic supports are ``good''. When $s=7$ and $N=3$, there is no ``good'' support. When $s=7$ and $N=4$, all non-periodic supports are ``good''. We have the following conjecture:
\begin{conjecture}
In the BGPC problem with DFT matrix and a joint sparsity constraint, if $N\geq \frac{n-1}{n-s}$, then for almost all $\lambda_0\in\bbC^n$ and almost all $X_0\in\bbC^{n\times N}$ that has $s$ nonzero rows and some ``good'' joint support, the pair $(\lambda_0,X_0)$ is identifiable up to the transformation group $\scrT$ defined in \eqref{eq:10}.
\end{conjecture}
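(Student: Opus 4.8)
The plan is to fix the joint support $J=\supp(X_0)$ and reduce identifiability to the vanishing of a finite collection of \emph{constructible} ``bad'' conditions on the free entries $X_0^{(J,:)}\in\bbC^{sN}$, and then to invoke a zero--one dichotomy for constructible sets. First I would use that for non-vanishing $\lambda_0$ the matrix $\diag(\lambda_0)F$ is invertible, so recovery of $X_0$ given $\lambda_0$ is automatic; by Corollary \ref{cor:ibip} identifiability up to $\scrT$ then reduces to identifiability of $\lambda_0$ up to $\scrT$. I would reformulate the latter exactly as in the proof of Theorem \ref{thm:jointsparsity}: a genuine spurious solution exists if and only if there is a non-vanishing $\gamma\in\bbC^n$ with $\gamma\notin\Gamma(F)$ such that the circulant matrix $P_\gamma=F^{*}\diag(\gamma)F$ sends $X_0$ to a joint-$s$-sparse matrix $P_\gamma X_0$. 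Crucially this condition involves $X_0$ alone, since the matching $\lambda_1=\lambda_0./\gamma$ is automatically non-vanishing. For each size-$s$ candidate support $J'$ I would define the bad set $B_{J'}\subset\bbC^{sN}$ of those $X_0^{(J,:)}$ admitting such a $\gamma$ with $\supp(P_\gamma X_0)\subseteq J'$, and set $B=\bigcup_{J'}B_{J'}$, so that $(\lambda_0,X_0)$ is identifiable precisely when $X_0^{(J,:)}\notin B$.

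The engine of the proof is that $B$ is constructible. Each $B_{J'}$ is the image under the projection $(\gamma,X_0)\mapsto X_0$ of the incidence set cut out by the bilinear equations $(P_\gamma X_0)^{({J'}^{c},:)}=0$ intersected with the locally closed constraints ``$\gamma$ non-vanishing'' and ``$\gamma\notin\Gamma(F)$''; by Chevalley's theorem this image is constructible, hence so is the finite union $B$. Because the space $\bbC^{sN}$ of free entries is irreducible, any constructible $B$ obeys the dichotomy: either $B$ lies in a proper subvariety and is Lebesgue-null, or $B$ contains a dense Zariski-open set and its complement is null. I would then simply \emph{define} a support $J$ to be ``good'' to mean that the first alternative holds. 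For good $J$ the identifiable signals form the full-measure set $B^{c}\cap\{\lambda_0\text{ non-vanishing}\}$, and a Fubini argument over $\lambda_0$ yields identifiability for almost all $(\lambda_0,X_0)$ with support $J$, which is exactly the statement of the conjecture.

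What this reduction does \emph{not} settle, and what keeps the statement a conjecture, is deciding which alternative of the dichotomy actually occurs as a function of $J,s,N$, i.e.\ characterizing the good supports and matching them to the numerics. In the benign regime $N(n-s)\ge n$ the argument is essentially a dimension count: for a wrong $J'$ the block $P_\gamma^{({J'}^{c},J)}$ is generically of full column rank, so the matrices $G_{J'}$ of Algorithm 2 generically have full column rank $n$, the relevant conditions become Zariski-open rank conditions, and exhibiting a single witness $X_0$ free of spurious solutions certifies that $J$ is good. The hard part will be the boundary $N(n-s)=n-1$, the extreme of the necessary condition $N\ge\frac{n-1}{n-s}$ of Proposition \ref{pro:necessary2}: there every $G_{J'}$ is \emph{forced} to have a one-dimensional null space, so whether a genuine spurious solution exists hinges on whether that null direction is non-vanishing --- a non-open, projection-defined condition rather than a rank condition.

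Concretely, the main obstacle is to control the dimension of the incidence variety $\{(\gamma,X_0):(P_\gamma X_0)^{({J'}^{c},:)}=0\}$ after excising the loci where $\gamma$ vanishes or is a scaled circular shift, and to prove that its projection to $X_0$-space has dimension strictly below $sN$ exactly for the good supports. Equivalently, for good $J$ one must rule out that the generic null vector of $G_{J'}$ is non-vanishing for some non-shift $J'$, which amounts to a combinatorial question about when a non-shift circulant can carry a generic matrix with row support $J$ into a matrix whose row support has size $s$. I expect this structured-linear-algebra-plus-combinatorics step --- together with a clean intrinsic description of the good supports that reproduces the empirical facts (e.g.\ no good support at $s=7,N=3$ but all non-periodic supports good at $s=7,N=4$ for $n=10$) --- to be the crux, and it is precisely the part that is currently out of reach.
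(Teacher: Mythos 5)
First, a point of calibration: this statement is a \emph{conjecture} that the paper itself does not prove --- its only support there is numerical (Algorithm 2 experiments summarized in Figure \ref{fig:ptjs}), and the notion of a ``good'' support is left informal. So there is no paper proof to compare against, and your proposal must be judged on its own terms. On those terms, the reduction you build is sound: fixing the support $J$ and using Corollary \ref{cor:ibip} exactly as in Theorem \ref{thm:jointsparsity}, a spurious solution corresponds to a non-vanishing $\gamma\notin\Gamma(F)$ such that the circulant $P_\gamma=F^{*}\diag(\gamma)F$ maps $X_0$ to a jointly $s$-sparse matrix; each bad set $B_{J'}$ is the projection of a locally closed incidence set (bilinear equations $(P_\gamma X_0)^{({J'}^{c},:)}=0$ intersected with the open conditions that $\gamma$ is non-vanishing and avoids the finitely many lines whose union is the closure of $\Gamma(F)$), hence constructible by Chevalley; and a constructible subset of the irreducible space $\bbC^{sN}$ is either contained in a proper subvariety (null) or contains a dense Zariski-open set (co-null). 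One caveat you gloss over: to force every spurious $\lambda_1$ to be non-vanishing you need $Y$ to have no zero rows, which holds only for generic $X_0$ with support $J$; this is easily absorbed into the exceptional null set, but it should be stated.

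The genuine gap is that, having established the dichotomy, you \emph{define} ``good'' to mean that the favorable alternative holds, at which point the conjecture's conclusion is true by definition rather than by argument. Read in the context of Section \ref{sec:gap2}, the content of the conjecture is precisely what your definition sidesteps: that whenever $N\geq\frac{n-1}{n-s}$ the class of good supports is nonempty and admits an intrinsic combinatorial characterization consistent with the experiments (no good support at $n=10$, $s=7$, $N=3$, yet all non-periodic supports good at $s=7$, $N=4$), i.e., that the necessary bound of Proposition \ref{pro:necessary2} is actually attained by well-chosen supports. Your proposal proves nothing about which supports are good, and you concede that the decisive step --- bounding the dimension of the projection of the incidence variety at the boundary $N(n-s)=n-1$, where the bad locus is defined by elimination rather than by closed rank conditions, so that a single witness $X_0$ does not certify goodness --- is out of reach. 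In short: the constructibility dichotomy is a correct and genuinely useful partial result (it explains why the identifiable fraction for a fixed support in Figure \ref{fig:ptjs} is $0$ or $1$, and it gives ``good'' a precise meaning), but under your proposal the conjecture remains open, exactly as it does in the paper.
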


\subsubsection{Extensions of the Model}

The results in Section \ref{sec:js} apply to $A=F$. This corresponds to MBD where the multiple channels are jointly sparse in the standard basis. Since the product of two circulant matrices is still a circulant matrix, we can easily show that the above results also apply to $A=FC$, where $C$ is a known invertible circulant matrix. This corresponds to MBD where the multiple channels are jointly sparse in the basis formed by the columns of $C$. In fact, results such as Theorem \ref{thm:jointsparsity} can also be derived for other matrices. In Section \ref{sec:piecewiseconstant}, we derive a sufficient condition for the identifiability of piecewise constant signals.

Although the results in Section \ref{sec:js} deal with 1D circular convolutions, extensions to higher-dimensional circular convolutions are straightforward. Let us consider a 2D MBD problem with a joint sparsity constraint as an example, and present a sufficient condition analogous to Theorem \ref{thm:jointsparsity}. Here $A=F\otimes F\in\bbC^{n\times n}$ is the 2D DFT matrix, where $F\in\bbC^{\sqrt{n}\times\sqrt{n}}$ is the 1D DFT matrix. In the 2D problem, the row index of $X$ can be represented by a pair of vertical and horizontal indices. For example, the $j$-th row of $X$ corresponds to the following index pair:
\[
(j^v,j^h)=\biggl( j-\sqrt{n}\bigl\lfloor\frac{j-1}{\sqrt{n}}\bigr\rfloor, \bigl\lfloor\frac{j-1}{\sqrt{n}}\bigr\rfloor+1 \biggr),
\]
where $\lfloor \cdot \rfloor$ denotes the floor operation. Repeating the procedure in Section \ref{sec:equivalence}, the transformation group for the 2D problem is defined by: 
\begin{equation}
\Gamma(F\otimes F) = \bigl\{\gamma=\sigma\sqrt{n} (F\otimes F)^{(:,k)}: \sigma\in\bbC \text{ is nonzero}, k\in\{1,2,\cdots,n\}\bigr\}.
\label{eq:25}
\end{equation}
\begin{equation}
\scrT = \{\calT:\calT(\lambda,X)=(\lambda./\gamma,(F\otimes F)^{*}\diag(\gamma)(F\otimes F)X) \text{ for some $\gamma\in \Gamma(F\otimes F)$}\}.
\label{eq:26}
\end{equation}
An equivalence transformation $\calT\in\scrT$ maps $X$ into a scaled 2D circular shift version of itself. The periodicity is defined as follows:
\begin{definition}
The index set $J=\{(j^v_1,j^h_1),(j^v_2,j^h_2),\cdots,(j^v_s,j^h_s)\}\subset\{1,2,\cdots,\sqrt{n}\}^2$ is said to be periodic with period $(\ell^v,\ell^h)$ ($\ell^v$ and $\ell^h$ being integers such that $0\leq\ell^v,\ell^h<\sqrt{n}$ and at least one of the two integers is nonzero), if $J=\{(j^v_1+\ell^v,j^h_1+\ell^h),(j^v_2+\ell^v,j^h_2+\ell^h),\cdots,(j^v_s+\ell^v,j^h_s+\ell^h)\}$ (modulo $(\sqrt{n},\sqrt{n})$).
\end{definition}
For example, if $\sqrt{n}=6$, then the index set $\{(1,1),(1,4)\}$ is periodic with period $(0,3)$. The index set $\{(1,1),(4,4)\}$ is periodic with period $(3,3)$. The index set $\{(1,1),(4,1),(1,4),(4,4)\}$ is periodic with period $(3,0)$, $(0,3)$, or $(3,3)$. The index set $\{(1,1),(5,3),(3,5)\}$ is periodic with period $(4,2)$ or $(2,4)$. The last two examples are shown in Figure \ref{fig:2dperiodic}.
\begin{figure}[htbp]
\centering
%
%
\begin{tikzpicture}[scale=0.5]

\begin{axis}[%
width=2.5in,
height=2in,
axis on top,
scale only axis,
xmin=0.5,
xmax=8,
y dir=reverse,
ymin=0.5,
ymax=6.5,
hide axis,
name=plot1
]
\addplot [forget plot] graphics [xmin=0.5,xmax=6.5,ymin=0.5,ymax=6.5] {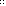};
\addplot [color=white,solid,forget plot]
  table[row sep=crcr]{%
6.5	0.5\\
8	0.5\\
};
\addplot [color=blue,solid,forget plot]
  table[row sep=crcr]{%
0.5	0.5\\
6.5	0.5\\
};
\addplot [color=blue,solid,forget plot]
  table[row sep=crcr]{%
0.5	0.5\\
0.5	6.5\\
};
\addplot [color=blue,solid,forget plot]
  table[row sep=crcr]{%
0.5	1.5\\
6.5	1.5\\
};
\addplot [color=blue,solid,forget plot]
  table[row sep=crcr]{%
1.5	0.5\\
1.5	6.5\\
};
\addplot [color=blue,solid,forget plot]
  table[row sep=crcr]{%
0.5	2.5\\
6.5	2.5\\
};
\addplot [color=blue,solid,forget plot]
  table[row sep=crcr]{%
2.5	0.5\\
2.5	6.5\\
};
\addplot [color=blue,solid,forget plot]
  table[row sep=crcr]{%
0.5	3.5\\
6.5	3.5\\
};
\addplot [color=blue,solid,forget plot]
  table[row sep=crcr]{%
3.5	0.5\\
3.5	6.5\\
};
\addplot [color=blue,solid,forget plot]
  table[row sep=crcr]{%
0.5	4.5\\
6.5	4.5\\
};
\addplot [color=blue,solid,forget plot]
  table[row sep=crcr]{%
4.5	0.5\\
4.5	6.5\\
};
\addplot [color=blue,solid,forget plot]
  table[row sep=crcr]{%
0.5	5.5\\
6.5	5.5\\
};
\addplot [color=blue,solid,forget plot]
  table[row sep=crcr]{%
5.5	0.5\\
5.5	6.5\\
};
\addplot [color=blue,solid,forget plot]
  table[row sep=crcr]{%
0.5	6.5\\
6.5	6.5\\
};
\addplot [color=blue,solid,forget plot]
  table[row sep=crcr]{%
6.5	0.5\\
6.5	6.5\\
};
\end{axis}

\begin{axis}[%
width=2in,
height=2in,
axis on top,
scale only axis,
xmin=0.5,
xmax=6.5,
y dir=reverse,
ymin=0.5,
ymax=6.5,
hide axis,
at=(plot1.right of south east),
anchor=left of south west
]
\addplot [forget plot] graphics [xmin=0.5,xmax=6.5,ymin=0.5,ymax=6.5] {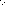};
\addplot [color=blue,solid,forget plot]
  table[row sep=crcr]{%
0.5	0.5\\
6.5	0.5\\
};
\addplot [color=blue,solid,forget plot]
  table[row sep=crcr]{%
0.5	0.5\\
0.5	6.5\\
};
\addplot [color=blue,solid,forget plot]
  table[row sep=crcr]{%
0.5	1.5\\
6.5	1.5\\
};
\addplot [color=blue,solid,forget plot]
  table[row sep=crcr]{%
1.5	0.5\\
1.5	6.5\\
};
\addplot [color=blue,solid,forget plot]
  table[row sep=crcr]{%
0.5	2.5\\
6.5	2.5\\
};
\addplot [color=blue,solid,forget plot]
  table[row sep=crcr]{%
2.5	0.5\\
2.5	6.5\\
};
\addplot [color=blue,solid,forget plot]
  table[row sep=crcr]{%
0.5	3.5\\
6.5	3.5\\
};
\addplot [color=blue,solid,forget plot]
  table[row sep=crcr]{%
3.5	0.5\\
3.5	6.5\\
};
\addplot [color=blue,solid,forget plot]
  table[row sep=crcr]{%
0.5	4.5\\
6.5	4.5\\
};
\addplot [color=blue,solid,forget plot]
  table[row sep=crcr]{%
4.5	0.5\\
4.5	6.5\\
};
\addplot [color=blue,solid,forget plot]
  table[row sep=crcr]{%
0.5	5.5\\
6.5	5.5\\
};
\addplot [color=blue,solid,forget plot]
  table[row sep=crcr]{%
5.5	0.5\\
5.5	6.5\\
};
\addplot [color=blue,solid,forget plot]
  table[row sep=crcr]{%
0.5	6.5\\
6.5	6.5\\
};
\addplot [color=blue,solid,forget plot]
  table[row sep=crcr]{%
6.5	0.5\\
6.5	6.5\\
};
\end{axis}
\end{tikzpicture}%
\caption{Examples of 2D periodic index sets.}%
\label{fig:2dperiodic}%
\end{figure}

Here is the sufficient condition for the 2D problem, whose proof is almost identical to that of Theorem \ref{thm:jointsparsity}.
\begin{theorem}\label{thm:2djointsparsity}
In the BGPC problem with 2D DFT matrix $F\otimes F\in\bbC^n$ and a joint sparsity constraint at sparsity level $s$, the pair $(\lambda_0,X_0)\in \bbC^n\times \Omega_\calX$ is identifiable up to the transformation group $\scrT$ defined in \eqref{eq:26} if the following conditions are met:
\begin{enumerate}
	\item Vector $\lambda_0$ is non-vanishing.
	\item Matrix $X_0$ has exactly $s$ nonzero rows and rank $s$.
	\item The joint support of the columns of $X_0$, represented in the index pair form, is not periodic.
\end{enumerate}
\end{theorem}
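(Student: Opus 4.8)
The plan is to follow the proof of Theorem \ref{thm:jointsparsity} essentially verbatim, replacing the $1$D circulant structure of $F^*\diag(\gamma)F$ by the block-circulant-with-circulant-blocks (BCCB) structure of $(F\otimes F)^*\diag(\gamma)(F\otimes F)$, and replacing $1$D periodicity by the $2$D periodicity just defined. First I would reduce, via Corollary \ref{cor:ibip}, to showing that $\lambda_0$ is identifiable up to $\scrT$. Since $\lambda_0$ is non-vanishing and $F\otimes F$ is invertible, $\diag(\lambda_0)(F\otimes F)$ has full rank, so $\diag(\lambda_0)(F\otimes F)X=\diag(\lambda_0)(F\otimes F)X_0$ forces $X=X_0$; this verifies Condition 2 of the corollary. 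It then remains to verify Condition 1: whenever $\diag(\lambda_0)(F\otimes F)X_0=\diag(\lambda_1)(F\otimes F)X_1$, one has $\lambda_1\in[\lambda_0]^L_\scrT$.

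The reduction to full column rank transfers unchanged. Using that $X_0$ has rank $s$ equal to its number of nonzero rows and that $\diag(\lambda_0)(F\otimes F)$ has full rank, I would restrict to $s$ linearly independent columns, so that $X_0,X_1\in\bbC^{n\times s}$ have full column rank $s$ and exactly $s$ nonzero rows supported on the $2$D index set $J$. Because $F\otimes F$ has no zero entries, $\lambda_1$ is non-vanishing, and the matrix $P:=(F\otimes F)^*[\diag(\lambda_1)]^{-1}\diag(\lambda_0)(F\otimes F)$ satisfies $X_1=PX_0$. The same pseudo-inverse decomposition $P=X_1X_0^{\dagger}+QX_{0\perp}^*$ (with $X_0^{\dagger}$ having nonzero columns indexed by $J$ and $X_{0\perp}$ spanned by standard basis vectors indexed by $J^c$) shows $P^{(:,J)}=X_1X_0^{\dagger(:,J)}$ has at most $s$ nonzero rows.

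The contradiction step is where the one genuinely new ingredient enters. The structural claim to establish is that $P=(F\otimes F)^*\diag(\gamma)(F\otimes F)$ is BCCB, that its action is $2$D circular convolution with the array whose vectorization is $\frac{1}{\sqrt n}(F\otimes F)^*\gamma$ (the first column of $P$), and that such a matrix is a generalized permutation exactly when this generating array is a single scaled delta, which is precisely the content of $\Gamma(F\otimes F)$ in \eqref{eq:25}. Assuming $\lambda_1\notin[\lambda_0]^L_\scrT$ means $\gamma=\lambda_0./\lambda_1\notin\Gamma(F\otimes F)$, so by \eqref{eq:25} the first column of $P$ has at least two nonzero entries, at $2$D index pairs $k_1$ and $k_2$. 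By the BCCB structure, for each nonzero entry at $2$D location $k$ the rows of $P^{(:,J)}$ indexed by the $2$D shift $K(k)=\{\,j+k-(1,1)\ (\mathrm{mod}\ (\sqrt n,\sqrt n)):j\in J\,\}$ are nonzero, with $|K(k)|=s$. Since $P^{(:,J)}$ has at most $s$ nonzero rows, $K(k_1)=K(k_2)$, which forces $J$ to be periodic with the nonzero $2$D period $k_2-k_1$ modulo $(\sqrt n,\sqrt n)$ — exactly the $2$D periodicity of the preceding definition. This contradicts Condition 3, verifying Condition 1 of Corollary \ref{cor:ibip} and completing the argument.

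The main obstacle I anticipate is not the logical skeleton, which transfers mechanically, but the $2$D bookkeeping: confirming that $(F\otimes F)^*\diag(\gamma)(F\otimes F)$ is BCCB, that a BCCB matrix is a generalized permutation if and only if its generating array is a single scaled delta (equivalently, its first column has exactly one nonzero entry), and that the row support of $P^{(:,J)}$ is precisely the $2$D circular shift of $J$ by each support location of the generating array. Once the correspondence between $\Gamma(F\otimes F)$, BCCB matrices, and scaled $2$D circular shifts is pinned down, the periodicity contradiction and the appeal to the $2$D analog of Remark \ref{rem:periodic} proceed identically to the $1$D case.
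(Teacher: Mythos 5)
Your proposal is correct and matches the paper's intended argument exactly: the paper does not write out a separate proof, stating only that it is ``almost identical to that of Theorem \ref{thm:jointsparsity},'' and your adaptation---replacing circulant by block-circulant-with-circulant-blocks structure, $\Gamma(F)$ by $\Gamma(F\otimes F)$, and 1D shifts/periodicity by their 2D index-pair analogs---is precisely that intended transfer. The 2D bookkeeping you flag (BCCB structure of $(F\otimes F)^*\diag(\gamma)(F\otimes F)$, the generalized-permutation-iff-scaled-delta equivalence, and the row-support shift $K(k)=J+k-(1,1)$) is all sound as you describe it.
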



\subsection{Identifiability of Piecewise Constant Signals}\label{sec:piecewiseconstant}
Define the finite difference matrix $D\in\bbC^{n\times n}$ and its inverse as:
\[
D=
\begin{bmatrix}
1 & & & \\
-1 & 1 & & \\
& \ddots & \ddots & \\
& & -1 & 1
\end{bmatrix},
\quad
D^{-1}=
\begin{bmatrix}
1 & & &\\
1 & 1 & &\\
\vdots & \vdots & \ddots &\\
1 & 1 & \cdots & 1
\end{bmatrix}.
\]
A piecewise constant signal $u$ can be sparsified by the finite difference operator $D$. Equivalently, $u$ has the representation $u=D^{-1}x$ in which $x$ is sparse. If $U=D^{-1}X$ in which the columns of  $X$ are jointly sparse, then the columns of $U$ are piecewise constant and the discontinuities are at the same locations.

In this section, we consider the following blind deconvolution problem. The observation model is $Y=\diag(\lambda_0)FD^{-1}X_0$, where the matrix $X_0$ has at most $s$ nonzero rows. The non-vanishing vector $\lambda_0$ is the DFT of the filter. The columns of $D^{-1}X_0$ are the signals, which are piecewise constant and share the same discontinuities. An example is deblurring of hyperspectral images. The recovery of $(\lambda_0,X_0)$ is the BGPC problem with $A=FD^{-1}$ and a joint sparsity constraint.

First, we need to figure out the ambiguity transformation group. The structured matrix $P = A^{-1}\diag(\gamma)A= DF^*\diag(\gamma)FD^{-1} = DCD^{-1}$ is
\begin{equation}
P = 
\begin{bmatrix}
\sum_{j=1}^{n}c^{(j)} & \sum_{j=2}^{n}c^{(j)} & \sum_{j=2}^{n-1}c^{(j)} & \sum_{j=2}^{n-2}c^{(j)} & \cdots & c^{(2)}\\
0 & c^{(1)}-c^{(2)} & c^{(n)}-c^{(2)} & c^{(n-1)}-c^{(2)} & \cdots & c^{(3)}-c^{(2)} \\
0 & c^{(2)}-c^{(3)} & c^{(1)}-c^{(3)} & c^{(n)}-c^{(3)} & \cdots & c^{(4)}-c^{(3)} \\
0 & c^{(3)}-c^{(4)} & c^{(2)}-c^{(4)} & c^{(1)}-c^{(4)} & \cdots & c^{(5)}-c^{(4)} \\
\vdots & \vdots & \vdots & \vdots & \ddots & \vdots \\
0 & c^{(n-1)}-c^{(n)} & c^{(n-2)}-c^{(n)} & c^{(n-3)}-c^{(n)} & \cdots & c^{(1)}-c^{(n)}
\end{bmatrix},
\label{eq:13}
\end{equation}
where $C=F^*\diag(\gamma)F$ is a circulant matrix whose first column is
\[
\frac{1}{\sqrt{n}}F^*\gamma=c=[c^{(1)},c^{(2)},\cdots,c^{(n)}]\transpose.
\]
For $P$ to be a generalized permutation matrix, we must have $c^{(2)}=c^{(3)}=\cdots=c^{(n)}=0$, and $c^{(1)}\neq 0$. Hence $\gamma = \sqrt{n}Fc = c^{(1)}[1,1,\cdots,1]\transpose$. The ambiguity transformation group in \eqref{eq:8} becomes \eqref{eq:14}. We only allow an unknown scaling in the recovery.

Next we investigate identifiability up to scaling within the framework of Section \ref{sec:bip} and derive a sufficient condition. As in Theorem \ref{thm:jointsparsity}, one of the requirements is in terms of the joint support of the columns of $X_0$. We need the following definitions to state this sufficient condition.
\begin{definition}
Let the index sets $J_1,J_2,\cdots,J_T$ be the nodes of an undirected graph. There is an edge between $J_{t_1}$ and $J_{t_2}$ ($1\leq t_1<t_2\leq T$) if $J_{t_1}\bigcap J_{t_2}\neq\emptyset$. The index sets $J_1,J_2,\cdots,J_T$ are said to be connected if the above graph is connected.
\end{definition}
\begin{definition}\label{def:friendly}
The index set $J=\{j_1,j_2,\cdots,j_s\}\subset\{1,2,\cdots,n\}$ is said to be ``friendly'' if for any $0\leq k_1<k_2<\cdots<k_{n-s}\leq n-1$, the circularly shifted index sets $J_1,J_2,\cdots,J_{n-s}$, defined by  $J_t = \{j_1+k_t,j_2+k_t,\cdots,j_s+k_t\}$ (modulo $n$), satisfy that
\begin{enumerate}
	\item $|\bigcup_{t=1}^{n-s}J_t|\geq n-1$.
	\item $J_1,J_2,\cdots,J_{n-s}$ are connected.
\end{enumerate}
We make the convention that $\{1,2,\cdots,n\}$ is friendly.
\end{definition}

If the index set $J$ is friendly, and the entries indexed by its circularly shifted version $J_t$ ($1\leq t\leq n-s$) are equivalent in some sense, then due to transitivity of the equivalence relation, and the connectivity of the circularly shifted index sets, at least $n-1$ out of $n$ entries are equivalent. This property is used in the proof of Theorem \ref{thm:piecewiseconstant}.

\begin{remark}
If the index set $J$ is friendly, then its flipped and shifted versions are also friendly.
\end{remark}

We have the following propositions regarding the ``friendliness'' of an index set. Proposition \ref{pro:friendly0} shows that, for a non-trivial problem, a friendly index set must have cardinality at least $3$, which helps to avoid degeneracy in the proof of Theorem \ref{thm:piecewiseconstant}. Propositions \ref{pro:friendly1} and \ref{pro:friendly2} give two sufficient conditions for friendliness, which makes the property more readily interpretable. Corollary \ref{cor:cover} gives an alternative characterization of Condition 1 in Definition \ref{def:friendly}. See Appendix \ref{app:prooffriendly} for the proofs.
\begin{proposition}\label{pro:friendly0}
If $n\geq 4$ and the index set $J$ is friendly, then $|J|\geq 3$.
\end{proposition}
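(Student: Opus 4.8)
The plan is to prove the contrapositive: I will show that every index set $J$ with $n\geq 4$ and $|J|\in\{1,2\}$ fails to be friendly, by exhibiting for each such $J$ an explicit choice of shifts $k_1<\cdots<k_{n-s}$ that violates one of the two conditions of Definition \ref{def:friendly}. Throughout I use the invariance of friendliness under shifts and flips (the Remark preceding this proposition) to normalize $J$, together with the fact that each shifted set $J_t$ depends only on the cyclic gap pattern of $J$, so that the problem becomes a purely combinatorial statement about subsets of $\mathbb{Z}/n$.

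For $s=|J|=1$, write $J=\{a\}$, so that $n-s=n-1\geq 3$. For any choice of $n-1$ distinct shifts the sets $J_t=\{a+k_t\}$ (modulo $n$) are pairwise disjoint singletons, hence no two of them intersect. The connectivity graph of Condition 2 of Definition \ref{def:friendly} then consists of $n-1\geq 3$ isolated nodes and is disconnected, so $J$ is not friendly.

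For $s=|J|=2$, write $J=\{a,\,a+d\}$ with $d=(j_2-j_1)\bmod n\in\{1,\dots,n-1\}$; by flip-invariance I may assume $1\leq d\leq n/2$. Each shifted set is a two-point ``domino'' $J_t=\{a+k_t,\,a+k_t+d\}$, and two dominoes at shifts $k,k'$ intersect precisely when $k-k'\equiv\pm d\pmod n$. Thus the connectivity graph of Condition 2 is exactly the subgraph induced on the chosen shift-set $S$ (with $|S|=n-2$) of the circulant graph $\mathrm{Circ}(n,d)$ on $\mathbb{Z}/n$ with connection set $\{\pm d\}$, which decomposes into $g=\gcd(n,d)$ disjoint cycles each of length $L=n/g$. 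I now split on $d$. If $d\neq n/2$, then $L\geq 3$, and I violate Condition 2: when $g\geq 2$ the circulant graph already has $\geq 2$ components, and deleting any two vertices leaves each cycle with at least $L-2\geq 1$ vertices, so at least two nonempty components survive; when $g=1$ the graph is a single $n$-cycle, and deleting two vertices that are non-adjacent along this cycle (say the $0$th and $2$nd vertices in the cyclic order $0,d,2d,\dots$) splits it into two nonempty arcs since $n-3\geq 1$. In either situation the induced graph is disconnected.

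The only delicate case is $d=n/2$ (with $n$ even), where $\mathrm{Circ}(n,d)$ degenerates into $n/2$ length-$2$ components and the connectivity argument above breaks down — most visibly at $n=4$, where deleting two vertices can leave a single connected component. I expect this to be the main obstacle, and I resolve it by falling back on the covering condition (Condition 1) instead: choosing $S$ to be the complement of an antipodal pair $\{x,\,x+n/2\}$ makes $S$ invariant under the shift by $d$, so that $\bigcup_t J_t=a+\bigl(S\cup(S+d)\bigr)=a+S$ has only $n-2<n-1$ elements, violating Condition 1. Combining the three cases shows that no set of cardinality $1$ or $2$ is friendly when $n\geq 4$, hence $|J|\geq 3$; the remaining bookkeeping — checking that the surviving arcs and cycles are genuinely nonempty for all $n\geq 4$ — is routine once the circulant-graph decomposition is in place.
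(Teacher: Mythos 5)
Your proof is correct, and while it follows the same overall strategy as the paper (contraposition, exhibiting a violating choice of shifts, with the same treatment of singletons), the way you dispose of the two-element case is genuinely different. The paper normalizes $J=\{1,r\}$ with $2\leq r\leq \frac{n}{2}+1$ and splits only on periodicity: when $r=\frac{n}{2}+1$ the distinct shifted copies are pairwise disjoint, and when $r<\frac{n}{2}+1$ it takes all $n$ shifts and deletes exactly the two sets meeting $J$, namely $\{r,2r-1\}$ and $\{n-r+2,1\}$, so that $J$ becomes an isolated node among the remaining $n-2\geq 2$ sets. This ``isolate one node'' trick is uniform over all non-periodic gaps, needs no arithmetic on $n$ and $d$, and violates Condition 2 of Definition \ref{def:friendly} in every case. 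You instead identify the intersection pattern of the shifted dominoes with the circulant graph $\mathrm{Circ}(n,d)$ and its decomposition into $\gcd(n,d)$ cycles of length $n/\gcd(n,d)$, then cut cycles when $d\neq \frac{n}{2}$, and when $d=\frac{n}{2}$ you abandon connectivity and violate the covering Condition 1 by deleting an antipodal pair of shifts. The gcd case split is precisely the bookkeeping the paper's isolation trick avoids, but your route buys something in return: in the $\gcd(n,d)\geq 2$ sub-case you show that \emph{every} admissible choice of shifts fails (not merely some), and your $d=\frac{n}{2}$ argument shows that the covering condition alone already rules out periodic two-element sets, a different violation from the paper's (which for that case argues disconnectedness of the distinct disjoint copies). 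One pedantic omission: the contrapositive of the proposition also includes $|J|=0$, which the paper disposes of together with singletons; your argument starts at $|J|=1$, though $J=\emptyset$ is immediate (no edges and an empty union, so both conditions fail for $n\geq 4$).
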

\begin{proposition}\label{pro:friendly1}
The index set $J$ is friendly if $|J|\geq 3$ and $J$ is contiguous\footnote{Index sets like $\{n,1,2\}$ are considered contiguous due to the circularity.}.
\end{proposition}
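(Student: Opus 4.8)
The plan is to use the shift-invariance of Definition \ref{def:friendly} to reduce to a canonical contiguous set, and then verify the two defining conditions by a direct combinatorial argument on length-$s$ arcs of the cycle $\{1,2,\dots,n\}$ with arithmetic modulo $n$. First I would note that friendliness is preserved under circular shifts: if $J' = \{j+c : j\in J\}$, then its shifted copies $J'_t = J'+k_t = J + (c+k_t)$, and as the tuple $(k_1,\dots,k_{n-s})$ ranges over all admissible choices so does $(c+k_1,\dots,c+k_{n-s})$ modulo $n$; since the two conditions depend only on the (unordered) collection of copies, $J'$ is friendly iff $J$ is. Hence it suffices to treat the representative $J=\{1,2,\dots,s\}$. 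The case $s=n$ is friendly by the stated convention, so I assume $3\le s\le n-1$, i.e.\ there are $n-s\ge 1$ shifts. Each $J_t=\{1+k_t,\dots,s+k_t\}$ (mod $n$) is then a contiguous arc of length $s$, the $n-s$ distinct values $1+k_t$ are the ``chosen'' starting positions, and I write $L\subset\{1,\dots,n\}$ for the complementary set of $s$ ``left-out'' starting positions.

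For Condition 1, I would count uncovered elements. An element $i$ lies in some $J_t$ precisely when the starting position of $J_t$ belongs to the length-$s$ arc $\{i-s+1,\dots,i\}$; thus $i$ is missed by every chosen block iff all $s$ positions of $\{i-s+1,\dots,i\}$ lie in $L$, i.e.\ iff $\{i-s+1,\dots,i\}=L$. Since the left-hand side is a contiguous arc, this is possible only when $L$ is itself contiguous, and then for exactly one $i$ (its terminal point). Therefore at most one element is uncovered, giving $|\bigcup_{t=1}^{n-s}J_t|\ge n-1$, with equality exactly when the left-out set is contiguous.

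For Condition 2, I would analyze the overlap graph through the gaps between circularly consecutive chosen starts. Two length-$s$ arcs intersect iff their circular distance is less than $s$; in particular, if consecutive chosen starts $p_i,p_{i+1}$ are separated by $g_i$ left-out positions (so $p_{i+1}=p_i+g_i+1$), their blocks intersect iff $g_i\le s-2$. Because $\sum_i g_i = |L| = s$ and $s\ge 3$ forces $2(s-1)=2s-2>s$, two distinct gaps cannot both be $\ge s-1$; hence at most one gap is ``large''. Consequently the circularly consecutive blocks are joined by overlap edges everywhere except possibly across a single large gap, so they form a connected cycle (no large gap) or a connected path (one large gap). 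Either way the blocks are connected, establishing Condition 2 and thus friendliness.

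The main obstacle is Condition 2, and the crux is the counting inequality $\sum_i g_i = s < 2(s-1)$: this is exactly where the hypothesis $|J|=s\ge 3$ enters, and it is what prevents the overlap graph from splitting into two pieces (consistent with Proposition \ref{pro:friendly0}). The remaining care is purely bookkeeping on the cycle---defining ``overlap'' and ``gap'' consistently modulo $n$ and handling the boundary case $n-s=1$ (a single block, trivially connected, with union of size $s=n-1$)---which is routine once the arc picture is fixed.
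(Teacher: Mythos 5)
Your proof is correct and follows essentially the same route as the paper's: both verify Condition 1 by a direct covering bound and establish Condition 2 by ordering the shifted blocks cyclically and showing that at most one consecutive overlap can fail, so the blocks form a path or a cycle and are therefore connected. The only difference is dual bookkeeping---the paper counts covered indices (two missing edges would force coverage at least $s+s+(n-s-2)=n+s-2\geq n+1>n$), whereas you count left-out starting positions (two large gaps would force $\sum_i g_i\geq 2(s-1)>s$)---which is the same inequality, and the same use of $|J|\geq 3$, viewed from complementary sides.
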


\begin{proposition}\label{pro:friendly2}
The index set $J$ is friendly if $|J|>\frac{n}{2}$ and $J$ is not periodic.
\end{proposition}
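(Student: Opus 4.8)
The plan is to verify the two defining conditions of friendliness (Definition \ref{def:friendly}) directly, exploiting the fact that $s := |J| > n/2$ forces every shifted copy $J_t = \{j_1+k_t,\dots,j_s+k_t\}$ (modulo $n$) to occupy more than half the index set. Write $m := n - s$ for the number of shifts; since $J$ is not periodic it cannot be the universal set $\{1,\dots,n\}$ (which is always periodic), so $1 \le m = |J^c|$, and there is at least one shift to consider.

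Condition 2 (connectedness) is immediate and does not use non-periodicity. For any two shifts, $|J_{t_1}| + |J_{t_2}| = 2s > n$, so $|J_{t_1} \cap J_{t_2}| \ge 2s - n > 0$; thus every pair of nodes is joined by an edge, the graph is complete, and hence connected.

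The substance lies in Condition 1, namely $\bigl|\bigcup_{t=1}^{m} J_t\bigr| \ge n-1$. I would pass to complements: because shifting by $k_t$ is a bijection modulo $n$, we have $(J_t)^c = J^c + k_t$, so the complement of $\bigcup_t J_t$ equals $\bigcap_{t=1}^{m}(J^c + k_t)$, and Condition 1 is equivalent to showing that this intersection has at most one element. An element $x$ lies in the intersection iff $x - k_t \in J^c$ for every $t$; since the $m$ values $x - k_t$ are distinct and $|J^c| = m$, membership is only possible when the set equality $\{x - k_t : t=1,\dots,m\} = J^c$ holds, i.e. $J^c = x - K$ where $K := \{k_1,\dots,k_m\}$.

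The key step, and the only place the non-periodicity hypothesis enters, is ruling out two distinct intersection elements. If $x \ne y$ both lie in $\bigcap_t (J^c + k_t)$, then $x - K = J^c = y - K$, which forces $K + d = K$ for the nonzero shift $d := x - y \pmod n$; consequently $J^c = x - K$ is itself invariant under $+d$, i.e. $J^c$ is periodic with period $d$, where $0 < d < n$. Applying Remark \ref{rem:periodic} to $J^c$, its periodicity implies that $J = (J^c)^c$ is periodic with the same period, contradicting the hypothesis. Hence the intersection is empty or a singleton, giving $\bigl|\bigcup_t J_t\bigr| \ge n-1$. I expect the main obstacle to be purely the bookkeeping in this last paragraph: carefully deducing the set equality $\{x-k_t\} = J^c$ from the pigeonhole count, deriving $K = K + d$ cleanly, and confirming that $d$ lies strictly between $0$ and $n$ so that Remark \ref{rem:periodic} is genuinely applicable.
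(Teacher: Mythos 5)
Your proof is correct and takes essentially the same approach as the paper's: both verify Condition 1 by passing to complements and showing, via the pigeonhole identity $J^c=\{x-k_t\}$, that two distinct elements in $\bigcap_t J_t^c$ would make $J^c$ (and hence $J$, by Remark \ref{rem:periodic}) periodic, and both verify Condition 2 by noting that $|J|>\frac{n}{2}$ makes the shift graph complete. The only cosmetic difference is your intermediate step $K+d=K$ before concluding $J^c+d=J^c$, where the paper reads the periodicity of $J^c$ directly from the two set equalities.
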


\begin{corollary}\label{cor:cover}
Let $|J|=s<n$. Then $|\bigcup_{t=1}^{n-s}J_t|\geq n-1$ for all choices of $n-s$ shifted index sets $J_t$ if and only if $J$ is not periodic.
\end{corollary}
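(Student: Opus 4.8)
The plan is to work in the cyclic group $\mathbb{Z}_n$ (identifying the index set $\{1,\dots,n\}$ with residues modulo $n$), to pass to complements, and thereby to reduce the union condition to a statement about a single algebraic invariant of $J$: its subgroup of periods $\mathrm{Per}(J) = \{d\in\mathbb{Z}_n : J+d=J\}$. Write $K=J^c$, so that $|K|=n-s$, and note that circular shifting commutes with complementation. Hence the complement of $\bigcup_{t=1}^{n-s}(J+k_t)$ is $\bigcap_{t=1}^{n-s}(K+k_t)$, giving
\[
\Bigl|\bigcup_{t=1}^{n-s}(J+k_t)\Bigr| = n - \Bigl|\bigcap_{t=1}^{n-s}(K+k_t)\Bigr|.
\]
I would take as the working target the equivalent reformulation: \emph{$|\bigcup_t (J+k_t)|\geq n-1$ for every choice of $n-s$ distinct shifts if and only if $|\bigcap_{k\in K'}(K+k)|\leq 1$ for every $(n-s)$-subset $K'=\{k_1,\dots,k_{n-s}\}$ of $\mathbb{Z}_n$}.

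The next step is to compute the intersection explicitly. A residue $p$ lies in $\bigcap_{k\in K'}(K+k)$ precisely when $p-k\in K$ for all $k\in K'$, i.e. when $\{p-k:k\in K'\}\subseteq K$. Since $|K'|=|K|=n-s$ and the shifted elements $p-k$ are distinct, this containment forces equality $\{p-k:k\in K'\}=K$, equivalently $K'=p-K$. Thus, viewed as a set of admissible $p$'s, the intersection equals $\{p:\,p-K=K'\}$; this is empty unless $K'$ is a translate of the reflection $-K$, and in that case it is a coset of $\mathrm{Per}(K)$, so it has exactly $|\mathrm{Per}(K)|$ elements.

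With this computation in hand, I would finish both implications using two elementary facts about periods: that complementation preserves periods, $\mathrm{Per}(K)=\mathrm{Per}(J)$, and that reflection does too, $\mathrm{Per}(-K)=\mathrm{Per}(K)$. By definition, $J$ is periodic exactly when $\mathrm{Per}(J)$ contains a nonzero element, i.e. $|\mathrm{Per}(J)|\geq 2$, and non-periodic exactly when $\mathrm{Per}(J)=\{0\}$. If $J$ is non-periodic, then every intersection has at most $|\mathrm{Per}(K)|=1$ element, so the lower bound $n-1$ holds for all choices of shifts. Conversely, if $J$ is periodic, I would exhibit the witnessing shift set $K'=-K$, an admissible $(n-s)$-subset, for which the above computation gives $\bigcap_{k\in K'}(K+k)=\mathrm{Per}(K)$ of cardinality at least $2$; the corresponding union then has size at most $n-2<n-1$, violating the condition.

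The main obstacle, such as it is, lies entirely in the two structural lemmas underpinning the intersection count: verifying that $\{p:\,p-K=K'\}$ is a coset of $\mathrm{Per}(K)$ (hence has cardinality $|\mathrm{Per}(K)|$ whenever nonempty), and that $\mathrm{Per}(-K)=\mathrm{Per}(K)=\mathrm{Per}(J)$. I would also take care to confirm that the witness $K'=-K$ really is a set of $n-s$ distinct residues and to handle the boundary case $s=n-1$ (where $n-s=1$ and the ``union'' is a single shifted copy), which the argument covers automatically since a single shift never produces a union of size $n-1$ for a periodic $J$ only when $\mathrm{Per}(J)$ is large enough; once the period subgroup is correctly identified, these checks are routine.
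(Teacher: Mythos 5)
Your proposal is correct, and at its core it runs on the same engine as the paper's proof: pass to complements, observe that $p\in\bigcap_t(J^c+k_t)$ forces $\{p-k_t : t\}\subseteq J^c$ and hence, by injectivity of $k\mapsto p-k$ and the cardinality match $|K'|=|J^c|=n-s$, forces equality $\{p-k_t : t\}=J^c$; and, for the converse, take as witness a translate of the reflected complement (the paper's shifts $k_t=k'-j_t^c$ in the necessity part are exactly your $K'=p_0-K$ with $p_0=k'$). Where you genuinely differ is in organization: the paper gives two separate minimal arguments --- sufficiency by showing that two distinct points $k',k''\in\bigcap_t J_t^c$ yield $J^c=J^c+(k''-k')$ (this is inside the proof of Proposition \ref{pro:friendly2}), and necessity by the explicit witness construction --- whereas you prove one stronger structural lemma, namely that $\bigcap_{k\in K'}(K+k)$ is empty unless $K'$ is a translate of $-K$, in which case it is a coset of the period subgroup $\mathrm{Per}(K)=\mathrm{Per}(J)$, so that both directions drop out of the single equivalence ``$J$ non-periodic iff $|\mathrm{Per}(J)|=1$.'' Your version buys an exact count (the union has size $n$ for non-extremal shift choices and exactly $n-|\mathrm{Per}(J)|$ for the extremal ones, which are precisely the translates of $-K$), while the paper's version is shorter and needs no subgroup machinery. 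One small caveat: your closing remark about the boundary case $s=n-1$ is garbled as written, but no case analysis is needed there --- a set with $|J|=n-1<n$ can never be periodic, since its one-point complement would have to be shift-invariant, so both sides of the equivalence hold and your general argument already covers it.
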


Here is the sufficient condition for identifiability of piecewise constant signals.
\begin{theorem}\label{thm:piecewiseconstant}
Consider the BGPC problem with $A=FD^{-1}$ and two constraints: $\lambda$ is non-vanishing, and the columns of $X$ are jointly $s$-sparse. The pair $(\lambda_0,X_0)\in \bbC^n\times \Omega_\calX$ is identifiable up to an unknown scaling, if the following conditions are met (assume that $n\geq 4$ and $J=\{j_1,j_2,\cdots,j_{s}\}$ denotes the joint support of the columns of $X_0$):
\begin{enumerate}
	\item The vector $\lambda_0$ is non-vanishing.
	\item The matrix $X_0$ has exactly $s$ nonzero rows, and has rank $s$.
	\item $1\notin J$.
	\item $\{1\}\bigcup J$ is friendly.
\end{enumerate}
\end{theorem}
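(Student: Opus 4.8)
The plan is to apply Corollary \ref{cor:ibip} with the scaling group $\scrT$ of \eqref{eq:14}, which was shown just above to be the ambiguity group for $A=FD^{-1}$. Condition 2 of the corollary is immediate: since $\lambda_0$ is non-vanishing and $A=FD^{-1}$ is invertible, $\diag(\lambda_0)A$ has full column rank, so $X_0$ is the unique matrix with $\diag(\lambda_0)AX_0=Y$. It remains to verify Condition 1, that $\lambda_0$ is identifiable up to scaling. So I suppose $\diag(\lambda_1)AX_1=\diag(\lambda_0)AX_0=Y$ with the columns of $X_1$ jointly $s$-sparse. As in Theorem \ref{thm:jointsparsity}, I would first argue that $Y$ has no zero rows, so that $\lambda_1$ is non-vanishing and $\gamma:=\lambda_0./\lambda_1$ is well defined; writing $c=\tfrac1{\sqrt n}F^*\gamma$ and $P=A^{-1}\diag(\gamma)A=DCD^{-1}$ with $C=F^*\diag(\gamma)F$ circulant, the measurement identity becomes $X_1=PX_0$, where $P$ has the explicit form \eqref{eq:13}. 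Restricting to $s$ linearly independent columns of $X_0$ makes $X_0^{(J,:)}$ square and invertible, so a row of $X_1=PX_0$ vanishes exactly when the corresponding row of $P^{(:,J)}$ vanishes. Thus joint $s$-sparsity of $X_1$ says precisely that $P^{(:,J)}$ has at most $s$ nonzero rows, and the goal reduces to showing this forces $c$ to be a multiple of $I^{(:,1)}$, i.e.\ $\gamma$ constant, i.e.\ $\lambda_1=\sigma\lambda_0$.

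The central translation uses \eqref{eq:13}. For $i\ge2$ one reads off $P^{(i,k)}=c^{((i-k)\bmod n)+1}-c^{(i)}$, so row $i$ of $P^{(:,J)}$ is zero iff $c^{((i-k)\bmod n)+1}=c^{(i)}$ for every $k\in J$. Since $c^{(i)}$ is itself the value obtained from the index $k=1$, this is equivalent to $c$ being constant on the set $S_i=\{((i-k)\bmod n)+1:k\in\widetilde J\}$, where $\widetilde J:=\{1\}\cup J$. Each $S_i$ is a circular shift of the flip of $\widetilde J$. This is exactly where Condition 4 (friendliness of $\widetilde J$) enters, and it explains the otherwise mysterious adjunction of the index $1$ to $J$ in Conditions 3 and 4.

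Because $P^{(:,J)}$ has at most $s$ nonzero rows, at least $n-s$ rows vanish, of which at least $n-s-1=n-|\widetilde J|$ lie in $\{2,\dots,n\}$. These yield at least $n-|\widetilde J|$ distinct shifts of the flipped $\widetilde J$ on each of which $c$ is constant. Since flips and shifts of a friendly set are again friendly, friendliness applies to these shifted copies: they are connected and their union has cardinality at least $n-1$, so transitivity of equality forces $c$ to take a single value $v$ on at least $n-1$ of its $n$ coordinates. The fully constant case $c\equiv v$ is excluded because then $\gamma=\sqrt n\,Fc=nv\,I^{(:,1)}$ would vanish off its first coordinate, contradicting that $\gamma$ is non-vanishing; hence $c$ equals $v$ except at a single index $p_0$, with deviation $\delta\neq0$.

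The finishing step, which I expect to be the main obstacle, pins down $p_0$ and $v$. With $c$ deviating only at $p_0$, a row $i\ge2$ of $P^{(:,J)}$ is nonzero exactly when $p_0\in S_i$, and the set of indices $i$ with $p_0\in S_i$ has cardinality $|\widetilde J|=s+1$; for the total count to stay $\le s$ one is forced to conclude both that this set meets the row index $1$ and that row $1$ of $P^{(:,J)}$ is itself zero. Row $1$ of $P$ consists of the partial sums in \eqref{eq:13}, namely $P^{(1,k)}=\sum_{j=2}^{n+2-k}c^{(j)}$ for $k\ge2$ (every $k\in J$ has $k\ge2$ because $1\notin J$ by Condition 3). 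If $p_0=1$ these partial sums reduce to $(n+1-k)v$, and vanishing for any $k\le n$ forces $v=0$, giving $c\propto I^{(:,1)}$ as desired. If $p_0\neq1$, the requirement that the nonzero-row set meet the index $1$ identifies a specific column $k_0=n+2-p_0\in J$, and evaluating the partial-sum conditions $P^{(1,k_0)}=0$ together with $P^{(1,k_1)}=0$ for a second element $k_1\in J$ (which exists since $|J|=s\ge2$ by Proposition \ref{pro:friendly0}) forces $v=0$ and hence $\delta=0$, a contradiction. Thus $c$ is a multiple of $I^{(:,1)}$, $\gamma$ is constant, and $\lambda_1\in[\lambda_0]^L_\scrT$, so Corollary \ref{cor:ibip} yields identifiability up to scaling. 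The two technical points needing the most care are (i) the omitted lemma that $Y=\diag(\lambda_0)FD^{-1}X_0$ has no zero rows, which reduces to showing $A^{(i,J)}\neq0$ for all $i$ and fails only when $\widetilde J$ lies in a proper periodic coset, a configuration ruled out by the connectivity half of friendliness; and (ii) the bookkeeping in the counting and partial-sum argument just sketched.
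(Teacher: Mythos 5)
Your proof is correct and follows essentially the same route as the paper's: apply Corollary \ref{cor:ibip}, reduce to $X_1=PX_0$ with $P=DF^*\diag(\gamma)FD^{-1}$ structured as in \eqref{eq:13}, translate joint $s$-sparsity of $X_1$ into $P^{(:,J)}$ having at most $s$ nonzero rows, use friendliness of $\{1\}\cup J$ (via connectivity and the $n-1$ covering bound) to force $c$ to be constant on all but at most one entry, and then eliminate the cases (fully constant $c$; deviation at an index $p_0\neq 1$; deviation at index $1$) through the row-one partial-sum conditions, exactly as in the paper, with only cosmetic differences (you use invertibility of $X_0^{(J,:)}$ where the paper uses the pseudo-inverse decomposition \eqref{eq:12}, and your unified count of the $s+1$ rows hit by $p_0$ replaces the paper's bookkeeping with \eqref{eq:15}--\eqref{eq:16}). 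Your technical point (i) is unnecessary: the theorem's constraint set already restricts $\lambda$ to non-vanishing vectors, so any competing solution has $\lambda_1$ non-vanishing by assumption, and no lemma about zero rows of $Y$ is needed.
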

\begin{proof}
First, given non-vanishing $\lambda_0$ and $A=FD^{-1}$, the matrix $\diag(\lambda_0)FD^{-1}$ has full rank. If $\diag(\lambda_0)FD^{-1}X_0 =\diag(\lambda_0)FD^{-1}X_1$, then $X_1=X_0$. Hence, given $\lambda_0$, the recovery of $X_0$ is unique. By Corollary \ref{cor:ibip}, to establish the result, we only need to show that $\lambda_0$ is identifiable up to an unknown scaling.

Assuming that Conditions 1-4 of the theorem are satisfied, we show that $\lambda_1$ is a scaled version of $\lambda_0$, if $\diag(\lambda_0)FD^{-1}X_0=\diag(\lambda_1)FD^{-1}X_1$ for $(\lambda_1,X_1)$ that satisfies the two constraints. The matrix $\diag(\lambda_0)FD^{-1}$ has full rank, hence both $X_0$ and $X_1$ have rank $s$. 
Without loss of generality, we may assume that $X_0$ and $X_1$ only have $s$ columns, which are linearly independent, by removing redundant columns at the same locations in both matrices. They both have full column rank $s$ and exactly $s$ nonzero rows. 
By assumption, the vectors $\lambda_0$ and $\lambda_1$ are non-vanishing. Write $X_1$ in terms of $X_0$, $X_1=PX_0$, where
\[
P=DF^*[\diag(\lambda_1)]^{-1}\diag(\lambda_0)FD^{-1}=DF^*\diag(\gamma)FD^{-1}.
\]
The matrix $P$ has the structure in \eqref{eq:13} where $c=\frac{1}{\sqrt{n}}F^*\gamma=\frac{1}{\sqrt{n}}F^*(\lambda_0./\lambda_1)$. Furthermore, $P$ satisfies \eqref{eq:12} in the proof of Theorem \ref{thm:jointsparsity}. The submatrix $P^{(:,J)}=X_1X_0^{\dagger(:,J)}$ has at most $s$ nonzero rows and at least $n-s$ zero rows. The submatrix $P^{(2:n,J)}$ has at least $n-s-1$ zero rows. We denote the corresponding index set by $K = \{k_1,k_2,\cdots,k_{n-s-1}\}$. By \eqref{eq:13}, the row $P^{(k,J)}$ ($k\in K$) is:
\[
P^{(k,J)}=\left[c^{(k+1-j_1)}-c^{(k)}, c^{(k+1-j_2)}-c^{(k)}, \cdots , c^{(k+1-j_s)}-c^{(k)}\right].
\]
The index set $J_k=\{k,k+1-j_1,k+1-j_2,\cdots,k+1-j_s\}$ is a flipped and shifted version of $\{1\}\bigcup J = \{1,j_1,j_2,\cdots,j_s\}$. The above row $P^{(k,J)}$ is zero, which means all the entries of the subvector $c^{(J_k)}$ are equal. By the assumption that $\{1\}\bigcup J$ is friendly, the index sets $J_{k_1},J_{k_2},\cdots,J_{k_{n-s-1}}$ are connected. That means all the entries of $c$ indexed by $\bigcup_{t=1}^{n-s-1}J_{k_t}$ are equal. Besides, $|\bigcup_{t=1}^{n-s-1}J_{k_t}|\geq n-1$. That means either all the entries of $c$ are equal or there is one entry with a different value. There are three different cases:
\begin{enumerate}
	\item All the entries of $c$ are equal. Then the vector $\lambda_0./\lambda_1 = \sqrt{n}Fc$ has $n-1$ zeros, which contradicts the assumption that $\lambda_0,\lambda_1$ are non-vanishing.
	\item All but the $k_0$-th entry of $c$ are equal, where $k_0\neq 1$. Then all the entries of $P^{(2:n,J)}$ that do not contain $c^{(k_0)}$ are zeros, and all the entries that contain $c^{(k_0)}$ are nonzeros. The rows indexed by $K$ are zeros, hence they do not contain $c^{(k_0)}$. The row indexed by $k_0$ is shown in \eqref{eq:15}, and is nonzero. The rows that contain any of the $s$ entries in \eqref{eq:16} are also nonzeros. 
\begin{equation}
c^{(k_0-j_1+1)}-c^{(k_0)},c^{(k_0-j_2+1)}-c^{(k_0)},\cdots,c^{(k_0-j_s+1)}-c^{(k_0)}
\label{eq:15}
\end{equation}
\begin{equation}
c^{(k_0)}-c^{(k_0+j_1-1)},c^{(k_0)}-c^{(k_0+j_2-1)},\cdots,c^{(k_0)}-c^{(k_0+j_s-1)}
\label{eq:16}
\end{equation}
Note that no two entries in \eqref{eq:16} can belong to the same row; no entry in \eqref{eq:16} belongs to the row in \eqref{eq:15}. If every entry in \eqref{eq:16} belonged to a row in $P^{(2:n,J)}$, there would be $s+1$ nonzero rows in $P^{(2:n,J)}$. The number of nonzero rows in $P^{(:,J)}$ is at most $s$. Hence, one of the $s$ entries in \eqref{eq:16} is not in any row of $P^{(2:n,J)}$. By observation, the only entry that could be missing is $c^{(k_0)}-c^{(1)}$. Assume that, without loss of generality, $c^{(k_0)}-c^{(k_0+j_1-1)}$ is not in any row of $P^{(2:n,J)}$. That implies $k_0+j_1-1=1$ (modulo $n$). Hence there exists an entry in the first row $P^{(1,j_1)}=\sum_{j=2}^{n+2-j_1}{c^{(j)}}=\sum_{j=2}^{k_0}{c^{(j)}}$. Since $n\geq 4$ and $\{1\}\bigcup J$ is friendly, by Proposition \ref{pro:friendly0}, $|J|\geq 2$. Hence there exists another entry in the first row $P^{(1,j_2)}=\sum_{j=2}^{n+2-j_2}{c^{(j)}}$. Since there are $s$ nonzero rows in $P^{(2:n,J)}$, the first row $P^{(1,J)}$ must be zero. Hence,
\[
\sum_{j=2}^{k_0}{c^{(j)}}=\sum_{j=2}^{n+2-j_2}{c^{(j)}}=0.
\]
Recall that all the entries of $c$ are equal except for $c^{(k_0)}$. It follows that $c^{(1)}=c^{(2)}=\cdots = c^{(n)}=0$, resulting in a contradiction.
	\item All but the first entry of $c$ are equal. Then all the entries of $P^{(2:n,J)}$ that do not contain $c^{(1)}$ are zeros, and all the entries that contain $c^{(1)}$ are nonzeros. In particular, the entries $c^{(1)}-c^{(j_1)},c^{(1)}-c^{(j_2)},\cdots,c^{(1)}-c^{(j_s)}$ in the rows indexed by $j_1,j_2,\cdots,j_s$ are nonzeros. Hence the first row $P^{(1,J)}$ must be zero. Therefore, $c^{(2)}=c^{(3)}=\cdots=c^{(n)}=0$, and $c^{(1)}\neq 0$.
\end{enumerate}
The only case that does not cause a contradiction is the third, which leads to $c=[c^{(1)},0,0,\cdots,0]\transpose$ and $\lambda_0./\lambda_1=\sqrt{n}Fc=c^{(1)}[1,1,\cdots,1]\transpose$. Therefore, $\lambda_1=\frac{1}{c^{(1)}}\lambda_0$ is a scaled version of $\lambda_0$.
\end{proof}

A result for generic signals, analogous to Corollary \ref{cor:jointsparsity}, follows immediately.

The requirement $N \geq s$, implied by Theorem \ref{thm:piecewiseconstant}, is not necessary. We have the following necessary condition, which can be proved similarly to Proposition \ref{pro:necessary2}.
\begin{proposition}\label{pro:necessary3}
In the BGPC problem with $A=FD^{-1}$ and a joint sparsity constraint, if $(\lambda_0,X_0)$ ($\lambda_0$ is non-vanishing, $X_0$ has at most $s$ nonzero rows) is identifiable up to scaling, then $N\geq\frac{n-1}{n-s}$.
\end{proposition}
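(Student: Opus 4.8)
The plan is to argue by contraposition, mirroring the proof of Proposition \ref{pro:necessary2}: I will assume $N<\frac{n-1}{n-s}$ and exhibit a solution that is not a scaled copy of $(\lambda_0,X_0)$, thereby showing that $(\lambda_0,X_0)$ is not identifiable up to scaling. The governing idea is that revealing side information can only make recovery easier, so it suffices to show non-identifiability of a reduced \emph{subspace}-constrained instance obtained by disclosing the locations of some of the zero rows of $X_0$.

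Concretely, let $J$ denote the support of the $\leq s$ nonzero rows of $X_0$ and enlarge it, if necessary, to an index set $J'\supseteq J$ with $|J'|=s$; this is possible because $|J|\leq s\leq n$. Disclosing that the rows outside $J'$ vanish turns the problem into a BGPC instance with the subspace constraint associated with the submatrix $A^{(:,J')}=(FD^{-1})^{(:,J')}\in\bbC^{n\times s}$, and the columns of $Y=\diag(\lambda_0)AX_0$ then reside in the $s$-dimensional subspace $\calC(\diag(\lambda_0)A^{(:,J')})$. Since $A=FD^{-1}$ is invertible, every column subset is linearly independent, so $A^{(:,J')}$ has full column rank $s$, and $\lambda_0$ is non-vanishing by hypothesis. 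The crucial bookkeeping point is that I reveal exactly $n-s$ zero rows rather than all of them, so the effective subspace dimension is precisely $s$; this is what delivers the sharp threshold $\frac{n-1}{n-s}$ from Proposition \ref{pro:necessary} applied with $m=s$.

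With $N<\frac{n-1}{n-s}$, Proposition \ref{pro:necessary} then guarantees that this reduced subspace instance is not identifiable up to scaling, producing a solution $(\lambda_1,X_1)$ with $\lambda_1\notin[\lambda_0]^L_\scrT$ and $X_1$ supported on $J'$; in particular $X_1$ has at most $s$ nonzero rows, so $(\lambda_1,X_1)$ is a legitimate feasible point of the joint-sparsity problem $Y=\diag(\lambda_1)FD^{-1}X_1$. Since, for $A=FD^{-1}$, the ambiguity transformation group collapses to scaling alone (the group in \eqref{eq:14}, as established in Section \ref{sec:piecewiseconstant}), the condition $\lambda_1\notin[\lambda_0]^L_\scrT$ forces $(\lambda_1,X_1)\notin[(\lambda_0,X_0)]_\scrT$. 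Hence $(\lambda_0,X_0)$ is not identifiable up to scaling, which is the contrapositive of the claim. I expect the only delicate step to be this last one: one must invoke the scaling-only structure of the ambiguity group for $A=FD^{-1}$ to ensure that the alternative solution genuinely lies in a different equivalence class, rather than being absorbed by a shift-type ambiguity as happens for $A=F$; the remaining reduction and the linear-algebra count are direct transcriptions of Propositions \ref{pro:necessary} and \ref{pro:necessary2}.
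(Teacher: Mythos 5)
Your proof is correct and follows essentially the same route as the paper's: the paper proves this proposition exactly as it proves Proposition \ref{pro:necessary2}, by revealing the locations of $n-s$ zero rows, reducing to a subspace-constrained BGPC instance with $m=s$ (where $A^{(:,J')}$ has full column rank since $FD^{-1}$ is invertible), invoking the counterexample construction of Proposition \ref{pro:necessary}, and noting that side information can only help. Your closing worry about the ambiguity group is unnecessary, though harmless: since the statement concerns identifiability up to scaling, producing a solution whose $\lambda_1$ is not a scalar multiple of $\lambda_0$ already places it outside $[(\lambda_0,X_0)]_{\scrT}$ for the scaling group $\scrT$ in \eqref{eq:14}, irrespective of what the full ambiguity group of $FD^{-1}$ turns out to be.
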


An analysis of the gap between the sufficient and the necessary conditions, similar to Section \ref{sec:gap2}, can be carried out for these results too. It is omitted for brevity.


\section{Universal Sufficient Condition for BGPC with a Sparsity Constraint}\label{sec:sparsity}
In this section, we consider the BGPC problem with a sparsity constraint on the total number of nonzero entries in the matrix $X$, denoted by $\norm{X}_0$.
Consider the following problem:
\begin{align*}
\text{($\mathrm{P3}$)}\quad\text{find}~~&(\lambda,X),\\
\text{s.t.}~~&\diag(\lambda)AX = Y,\\
& \lambda \in \bbC^n,~X \in \Omega_\calX=\{X\in\bbC^{n\times N}: \norm{X}_0\leq s\}.
\end{align*}
The measurement is $Y=\diag(\lambda_0)AX_0$. We only consider the case where $A\in\bbC^{n\times n}$ is an invertible square matrix. The vector $\lambda_0\in\bbC^n$ is non-vanishing. The matrix $X_0\in\bbC^{n\times N}$ has at most $s$ nonzero entries.

The ambiguity transformation group $\scrT$ associated with the matrix $A$ is the same as in Section \ref{sec:equivalence}. 
In Theorem \ref{thm:universal}, we show that $X_0$ is identifiable up to a generalized permutation in the ambiguity transformation group associated with $A$ if the rows of $X_0$ form the most sparse basis of its row space. This is a universal sufficient condition for BGPC with a sparsity constraint, which applies to every invertible square matrix $A$. This universal result is derived using the general framework in Section \ref{sec:bip}.

\begin{theorem}\label{thm:universal}
In the BGPC problem with a sparsity constraint at sparsity level $s$, the pair $(\lambda_0,X_0)$ is identifiable up to the ambiguity transformation group $\scrT$ associated with $A$, if the following conditions are met:
\begin{enumerate}
	\item Vector $\lambda_0$ is non-vanishing.
	\item If an invertible matrix $P\in\bbC^{n\times n}$ satisfies that $\norm{PX_0}_0 \leq \norm{X_0}_0$, then $P$ is a generalized permutation matrix.
	\item $\norm{X_0}_0= s$.
\end{enumerate}
\end{theorem}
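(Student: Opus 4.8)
The plan is to invoke Corollary~\ref{cor:ibip} with $\lambda_0$ in the role of $x_0$ and $X_0$ in the role of $y_0$, exactly as in the proofs of Theorems~\ref{thm:subspace} and~\ref{thm:jointsparsity}. Condition~2 of that corollary will be immediate: since $\lambda_0$ is non-vanishing and $A$ is invertible, $\diag(\lambda_0)A$ is invertible, so $\diag(\lambda_0)AX=\diag(\lambda_0)AX_0$ forces $X=X_0$, meaning $X_0$ is uniquely recovered once $\lambda_0$ is fixed. All the real work therefore goes into Condition~1, i.e.\ showing that \emph{every} feasible $(\lambda,X)$ (with $\|X\|_0\le s$ and $\diag(\lambda)AX=\diag(\lambda_0)AX_0$) satisfies $\lambda\in[\lambda_0]^L_\scrT$.

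The crucial preliminary step I would establish is that Condition~2 of the theorem forces $X_0$ to have full row rank $n$. Indeed, if the rows of $X_0$ were linearly dependent there would be a nonzero left null vector $v$, i.e.\ $v\transpose X_0=0$. Choosing $j$ with $v^{(j)}\ne0$ and taking $P=I+\epsilon v^{(j)^{-1}}e_j v\transpose$ for a small $\epsilon\ne0$ (or, in the degenerate case where $X_0$ has a zero row, the transvection that adds a multiple of that zero row to another row) produces an invertible matrix with $PX_0=X_0$, hence $\|PX_0\|_0=\|X_0\|_0$, that is nevertheless not a generalized permutation. This contradicts Condition~2, so $X_0$ must have full row rank and in particular no nonzero left null vector.

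Full row rank then excludes a vanishing $\lambda$. Suppose $\lambda^{(i)}=0$ for some $i$; then the $i$th row of $Y=\diag(\lambda)AX$ is zero, while $Y=\diag(\lambda_0)AX_0$ with $\lambda_0^{(i)}\ne0$ gives $A^{(i,:)}X_0=0$, making $A^{(i,:)}$ a left null vector of $X_0$. By full row rank this forces $A^{(i,:)}=0$, contradicting invertibility of $A$. Hence every feasible $\lambda$ is non-vanishing.

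With $\lambda$ non-vanishing I would finish as follows. Set $\gamma=\lambda_0./\lambda$, which is non-vanishing, and $P=A^{-1}\diag(\gamma)A$, which is invertible. Left-multiplying $\diag(\lambda)AX=\diag(\lambda_0)AX_0$ by $\diag(\lambda)^{-1}$ gives $AX=\diag(\gamma)AX_0$, i.e.\ $X=PX_0$. Consequently $\|PX_0\|_0=\|X\|_0\le s=\|X_0\|_0$, where Condition~3 supplies the final equality, so Condition~2 of the theorem yields that $P$ is a generalized permutation, i.e.\ $\gamma\in\Gamma(A)$. Then $(\lambda,X)=(\lambda_0./\gamma,\,A^{-1}\diag(\gamma)AX_0)=\calT(\lambda_0,X_0)$ for the corresponding $\calT\in\scrT$, so $\lambda\in[\lambda_0]^L_\scrT$, which verifies Condition~1 of Corollary~\ref{cor:ibip} and completes the argument. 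The step I expect to be the main obstacle is the pair of deductions in the two middle paragraphs: converting the abstract sparsest-basis hypothesis (Condition~2) into the concrete statement that $X_0$ has full row rank, and then leveraging that to rule out vanishing entries of $\lambda$, since the clean generalized-permutation argument of the last paragraph only becomes available once $\lambda$ is known to be non-vanishing.
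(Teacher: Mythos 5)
Your proof is correct and follows essentially the same route as the paper's: invoke Corollary~\ref{cor:ibip}, deduce that $X_0$ has full row rank $n$ from Condition~2, show any feasible $\lambda$ is non-vanishing, and then apply Condition~2 to $P=A^{-1}\diag(\gamma)A$ (using $\norm{X}_0\leq s=\norm{X_0}_0$ via Condition~3) to conclude $P$ is a generalized permutation, so $\lambda\in[\lambda_0]^L_\scrT$. The only difference is cosmetic: you spell out the rank-one perturbation/transvection construction behind the full-row-rank step, which the paper asserts as ``clearly'' violating Condition~2, and you rule out vanishing entries of $\lambda$ directly via a left null vector of $X_0$ rather than through $\rank(X_1)=n$ as the paper does.
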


\begin{proof}
Given non-vanishing $\lambda_0$ and invertible $A$, the matrix $\diag(\lambda_0)A$ is invertible. Hence given $\lambda_0$, the matrix $X_0$ is identifiable. By Corollary \ref{cor:ibip}, we only need to show that $\lambda_0$ is identifiable. Suppose that $\diag(\lambda_0)AX_0=\diag(\lambda_1)AX_1$ and $\norm{X_1}_0\leq s =\norm{X_0}_0$. By the above Condition 2, $X_0$ has full row rank $n$. Otherwise, there exists an invertible matrix $P$ that is not a permutation matrix and satisfies $PX_0=X_0$, which clearly violates Condition 2. The matrix $\diag(\lambda_0)A$ is invertible, hence $\rank(X_1)=\rank(X_0)=n$. There are no zero rows in $AX_0$ or $AX_1$. Hence $\lambda_1$ is also non-vanishing. Write $X_1$ in terms of $X_0$, $X_1=PX_0$, where $P=A^{-1}[\diag(\lambda_1)]^{-1}\diag(\lambda_0)A$. By the above Condition 2, $P$ has to be a generalized permutation matrix. By \eqref{eq:7} and \eqref{eq:8}, $\gamma = \lambda_0./\lambda_1\in \Gamma(A)$ and $\lambda_1\in[\lambda_0]^L_\scrT$. Therefore, $\lambda_0$ is identifiable.
\end{proof}

If the sparsity level is not known a priori, we can solve the following optimization problem ($\mathrm{P4}$). Under the above Conditions 1 and 2, the minimizer in ($\mathrm{P4}$) is unique up to the same transformation group. If the minimizer to ($\mathrm{P4}$) has sparsity $s$, then it is the solution to ($\mathrm{P3}$) as well.
\begin{align*}
\text{($\mathrm{P4}$)}\quad\underset{(\lambda,X)}{\text{min.}}~~&\norm{X}_0,\\
\text{s.t.}~~&\diag(\lambda)AX = Y,\\
& \lambda \in \bbC^n,~X \in \bbC^{n\times N}.
\end{align*}
The following universal sufficient condition follows by combining Theorem \ref{thm:universal} with results about the distribution of non-zero elements in random matrices and in the products of such matrices with vectors \cite{Spielman2013}.

\begin{theorem}
Suppose that the vector $\lambda_0$ is non-vanishing, the matrix $X_0\in\bbC^{n\times N}$ is Bernoulli-Gaussian random matrix, where $X_0=B\odot G$, the entries of $B$ are iid Bernoulli random variables $B(1,\theta)$, and the entries of $G$ are iid Gaussian random variables $N(0,1)$. If $\frac{1}{n}<\theta<\frac{1}{4}$ and $N>Cn\log n$ for a sufficiently large absolute constant C, then the pair $(\lambda_0,X_0)$ is identifiable in ($\mathrm{P4}$), up to the ambiguity transformation group $\scrT$ associated with $A$, with probability at least $1-\exp(-c\theta N)$ for some absolute constant $c$.
\end{theorem}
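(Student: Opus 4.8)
The plan is to derive this result by combining Theorem~\ref{thm:universal} with the analysis of Bernoulli--Gaussian coefficient matrices in \cite{Spielman2013}. Condition~1 of Theorem~\ref{thm:universal} is satisfied by hypothesis, since $\lambda_0$ is non-vanishing. Because the sparsity level is unknown here and we solve the minimization $(\mathrm{P4})$ rather than the feasibility problem $(\mathrm{P3})$, it suffices (by the remark following Theorem~\ref{thm:universal}) to verify Conditions~1 and~2; the minimizer of $(\mathrm{P4})$ is then unique up to $\scrT$. Thus the entire probabilistic content of the statement reduces to showing that Condition~2 holds with probability at least $1-\exp(-c\theta N)$: namely, that with this probability every invertible $P\in\bbC^{n\times n}$ with $\norm{PX_0}_0\leq\norm{X_0}_0$ is a generalized permutation matrix.

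First I would recast Condition~2 as a statement about the sparsest basis of the row space $\calR(X_0)$. Since $\frac{1}{n}<\theta$ and $N>Cn\log n$, standard concentration shows that with overwhelming probability $X_0$ has full row rank $n$, so $\calR(X_0)$ is an $n$-dimensional subspace of $\bbC^N$ and every basis of it is given by the rows of $PX_0$ for a unique invertible $P$. Under this identification, minimizing $\norm{PX_0}_0$ over invertible $P$ is exactly minimizing the total number of nonzeros among all bases of $\calR(X_0)$, and Condition~2 is the assertion that the rows of $X_0$ themselves, up to scaling and permutation, form the unique sparsest basis.

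Next I would invoke the structural result underlying the dictionary-recovery guarantee of \cite{Spielman2013}. The key fact is that, for the Bernoulli--Gaussian model with $\frac{1}{n}<\theta<\frac{1}{4}$ and $N>Cn\log n$, with probability at least $1-\exp(-c\theta N)$ the sparsest nonzero vectors in $\calR(X_0)$ are precisely the scalar multiples of the individual rows of $X_0$: each row carries $\Theta(\theta N)$ nonzeros, whereas any vector in $\calR(X_0)$ that is not proportional to a single row is strictly denser. Granting this, a basis of $\calR(X_0)$ consists of $n$ linearly independent vectors, and the total sparsity is minimized only by choosing each basis vector to be a nonzero scalar multiple of a distinct row of $X_0$; any such choice corresponds to $P$ being a generalized permutation matrix. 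Finally, if $\norm{PX_0}_0\leq\norm{X_0}_0$, then $PX_0$ achieves this minimum, so $PX_0=QX_0$ for some generalized permutation $Q$, and full row rank of $X_0$ forces $P=Q$, establishing Condition~2.

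The main obstacle is the structural result quoted in the previous paragraph, which is the technical heart of \cite{Spielman2013}. Establishing it requires an anti-concentration (``no cancellation'') argument showing that a linear combination of two or more rows of $X_0$ has strictly more nonzeros than a single row, together with a union bound over all candidate supports and combinations; it is here that the upper bound $\theta<\frac{1}{4}$ (to control the density of pairwise support overlaps) and the sample complexity $N>Cn\log n$ (to make the union bound summable and to produce the factor $\theta N$ in the exponent) are needed. A secondary technical point is that \cite{Spielman2013} is stated for real Bernoulli--Gaussian matrices, so one must either note that the Gaussian entries of $G$ here are real or lift the argument to the complex case; the concentration and anti-concentration estimates carry over with only cosmetic changes.
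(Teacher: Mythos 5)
Your proposal is correct and follows essentially the same route as the paper: reduce the claim to verifying Condition 2 of Theorem~\ref{thm:universal} (Condition 1 holding by hypothesis, and the remark after that theorem handling the $(\mathrm{P4})$ formulation), and then establish Condition 2 with probability $1-\exp(-c\theta N)$ from the Bernoulli--Gaussian estimates of \cite{Spielman2013}, namely that every row of $X_0$ has at most about $\tfrac{10}{9}\theta N$ nonzeros while any linear combination involving two or more rows has more than $\tfrac{11}{9}\theta N$ nonzeros. The only difference is bookkeeping: the paper pairs each row of $PX_0$ with a distinct row of $X_0$ via a nonzero transversal of the invertible matrix $P$ (Lemmas 17 and 18 of \cite{Spielman2013}) and sums the row-wise inequalities, whereas you package the same two estimates as uniqueness of the sparsest basis of $\calR(X_0)$ --- an equivalent argument resting on identical external results.
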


\begin{proof}
We prove the identifiability by showing that Condition 2 in Theorem \ref{thm:universal} is satisfied with probability at least $1-\exp(-c\theta N)$ given the above Bernoulli-Gaussian model. Assume that $P\in\bbC^{n\times n}$ is an invertible matrix but not a generalized permutation matrix. Since $P$ is invertible, there exists a permutation of $1,2,\cdots,n$, denoted by $j_1,j_2,\cdots,j_n$, such that the support of the $k$th row $P^{(k,:)}$ contain the index $j_k$, i.e., $P^{(k,j_k)}\neq 0$, for $1\leq k\leq n$. Since $P$ is not a generalized permutation matrix, there exists at least one row with more than one nonzero entries. If the row $P^{(k,:)}$ has only one nonzero entry $P^{(k,j_k)}$, then $\norm{(PX_0)^{(k,:)}}_0=\norm{P^{(k,:)}X_0}_0=\norm{X_0^{(j_k,:)}}_0$. Next, we show that if $P^{(k,:)}$ has more than one nonzero entries, then $\norm{(PX_0)^{(k,:)}}_0>\norm{X_0^{(j_k,:)}}_0$ with high probability.

By Lemma 17 in \cite{Spielman2013}, if the Bernoulli-Gaussian matrix $X_0$ satisfies that $\frac{1}{n}<\theta<\frac{1}{4}$ and $N>Cn\log n$ for a sufficiently large constant $C$, then the probability that there exists a vector $v\in \bbC^n$ with more than one nonzero entries such that $\norm{v^*X_0}_0\leq \frac{11}{9}\theta N$ is at most $\exp(-c_1\theta N)$, for some absolute constant $c_1$. Therefore, with probability at least $1-\exp(-c_1\theta N)$, 
\begin{equation}
\norm{(PX_0)^{(k,:)}}_0>\frac{11}{9}\theta N
\label{eq:17}
\end{equation} 
for every index $k$ such that $P^{(k,:)}$ has more than one nonzero entries.

By Lemma 18 in \cite{Spielman2013}, the probability that any row of the Bernoulli-Gaussian matrix $X_0$ has more than $\frac{10}{9}\theta N$ nonzero entries is at most $n\exp(-\theta N/243)$. Since $N>Cn\log n$ for a sufficiently large constant $C$, the probability $n\exp(-\theta N/243)\leq \exp(-c_2\theta N)$ for some absolute constant $c_2$. Therefore, with probability at least $1-\exp(-c_2\theta N)$, 
\begin{equation}
\norm{X_0^{(j_k,:)}}_0 \leq \frac{10}{9}\theta N
\label{eq:18}
\end{equation}
for every $k$.

Combining \eqref{eq:17} and \eqref{eq:18}, $\norm{(PX_0)^{(k,:)}}_0>\norm{X_0^{(j_k,:)}}_0$ for every index $k$ such that $P^{(k,:)}$ has more than one nonzero entries, with probability at least $1-\exp(-c\theta N)$ for some absolute constant $c$. Therefore, with the same probability,
\[
\norm{PX_0}_0=\sum_{k=1}^{n}\norm{(PX_0)^{(k,:)}}_0 >\sum_{k=1}^{n}\norm{X_0^{(j_k,:)}}_0 =\norm{X_0}_0.
\]
Equivalently, Condition 2 in Theorem \ref{thm:universal} is satisfied with probability at least $1-\exp(-c\theta N)$. 
\end{proof}


\section{Discussion} \label{sec:conclusions}
We defined identifiability of a bilinear inverse problem up to transformation groups. A general framework for proving identifiability was proposed. The framework was applied to the problem of BGPC. We showed sufficient conditions for unique recovery up to a transformation group under three scenarios, with a subspace constraint, with a joint sparsity constraint, and with a sparsity constraint, respectively. We also provided necessary conditions for the scenarios with a subspace constraint or a joint sparsity constraint. We developed a procedure to determine the ambiguity transformation groups for BGPC with joint sparsity or with sparsity constraints. We also designed algorithms that can check the identifiability for BGPC with subspace or with joint sparsity constraints, and demonstrated the tightness of our sample complexity bounds by numerical experiments.

The analysis in this paper is not always optimal. In certain cases, there exist gaps between the sufficient conditions and the necessary conditions. For example, in the scenario with DFT matrix and a joint sparsity constraint, the gap between the sample complexities in the sufficient and the necessary conditions is $N\geq s$ versus $N\geq \frac{n-1}{n-s}$. However, we believe that it would be possible to bridge these gaps by introducing more stringent assumptions (e.g., generic vectors and matrices).

One goal of this paper is to motivate more research into the identifiability of bilinear inverse problems. For BGPC, additional identifiability results can be obtained for different bases $A$ and different constraint sets $\Omega_\Lambda,\Omega_\calX$. For example, exploiting the extra information regarding $\lambda$ (positivity in inverse rendering, unit-modulus entries in SAR autofocus), is expected to provide less demanding conditions for identifiability. The merit of the framework in this paper for identifiability in bilinear inverse problems is not restricted to the demonstrated exemplary applications. It will be useful for analyzing a wider class of practical applications, including blind deconvolution (with a single channel and/or the linear convolution model), phase retrieval, dictionary learning, etc.

\appendix
\section{Example of a Non-trivial Annihilator}\label{app:nta}
Most bilinear mappings that arise in applications do not have non-trivial left or right annihilators, however this is not universally true. Here is an example in which the bilinear mapping does have a non-trivial right annihilator.
Assume that $z=x_0y_0^{(1)}\in \bbC^2$ in the following BIP:
\begin{align*}
\text{find}~~&(x,y),\\
\text{s.t.}~~&x y^{(1)} = z,\\
& x \in \bbC^2,~y \in \bbC^2.
\end{align*}
Then $(x_0,y_0)$ is identifiable up to the following transformation group:
\[
\scrT = \left\{\calT: \calT(x,y)=\left(\frac{1}{\sigma} x,[\sigma y^{(1)},y^{(2)}+\tau]\transpose\right)\text{ for some nonzero $\sigma\in\bbC$ and some $\tau\in\bbC$}\right\}.
\]
Let $\calT=(\calT_\calX,\calT_\calY)$, where $\calT_\calX(x)=\frac{1}{\sigma}x$, $\calT_\calY(y)=[\sigma y^{(1)},y^{(2)}+\tau]\transpose$. Note that $\calT_\calY$ is not a linear transformation if $\tau\neq 0$. In addition, Condition 2 in Corollary \ref{cor:ibip} is not necessary. Given $\calF(x_0,y_0)=\calF(x_0,y)$, i.e., $x_0y_0^{(1)}=x_0y^{(1)}$, it is not necessary that $y=y_0$. The reason is that the bilinear mapping $\calF$ has a non-trivial right annihilator $y=[0,1]\transpose$.

\section{Proof of Lemma \ref{lem:decomposable}}\label{app:proofdecomposable}
\begin{enumerate}
	\item If $A\in\bbC^{n\times m}$ has full row rank, then the rows of $A$ form a basis for $\calR(A)$ whose dimension is $n$. For every non-empty proper subset $J$ and its complement $J^c$, $\calR(A^{(J,:)})$ and $\calR(A^{(J^c,:)})$ are two subspaces whose dimensions are $|J|$ and $|J^c|$ respectively. Therefore,
\begin{align*}
\calR(A) &=\calR(A^{(J,:)})+\calR(A^{(J^c,:)}),\\
\dim(\calR(A)) &=n=|J|+|J^c|=\dim(\calR(A^{(J,:)}))+\dim(\calR(A^{(J^c,:)})).
\end{align*}
Therefore, the sum of two subspaces is a direct sum, and the row space of $A$ is decomposable.
	
	\item If the row space of $A$ is not decomposable, then $A$ does not have full row rank. If the matrix $A$ has full column rank, then $n\geq m$. 
	
Next, we prove $n>m$ by contradiction. Suppose that $n=m$. Since square matrix $A$ has full column rank, it must have full row rank, which causes a contradiction. Therefore, the assumption is false, and $n$ has to be greater than $m$.
	
	\item The row space of $A$ is not decomposable, if and only if the sum $\calR(A)=\calR(A^{(J,:)})+\calR(A^{(J^c,:)})$ is not a direct sum for any non-empty proper subset $J\subset\{1,2,\cdots,n\}$, or equivalently,
\[
\dim(\calR(A))<\dim(\calR(A^{(J,:)}))+\dim(\calR(A^{(J^c,:)})),
\]
for all non-empty proper subsets $J\subset\{1,2,\cdots,n\}$.
\end{enumerate}
\section{Examples of Ambiguity Transformation Groups}\label{app:tgforA}
In the BGPC problem with a joint sparsity constraint, the ambiguity transformation groups for $A$ can be figured out with the method in Section \ref{sec:equivalence}. The ambiguity transformation groups associated with $A=F$ and $A=FD^{-1}$ are shown in Section \ref{sec:equivalence} and Section \ref{sec:piecewiseconstant} respectively. We give more examples here.

The matrix $A$ introduces some ``mixing'' to the rows of $X$. If $A=I$, there is no mixing. The structured matrix $I^{-1}\diag(\gamma)I=\diag(\gamma)$ is a diagonal matrix. It is a generalized permutation matrix provided that $\gamma$ is non-vanishing. The set of $\gamma$ which produces a generalized permutation matrix is $\Gamma(I) = \{\gamma\in\bbC^{n}: \text{$\gamma$ is non-vanishing}\}$. The ambiguity transformation group is
\[
\scrT = \{\calT:\calT(\lambda,X)=(\lambda./\gamma,\diag(\gamma)X)\text{ for some non-vanishing $\gamma$}\}.
\]
In this case, any non-vanishing $\lambda$ is considered equivalent to $\lambda_0$. The identifiability of $(\lambda_0,X_0)$ with this transformation group is not an interesting problem.

For some $A$, the structured matrix $A^{-1}\diag(\gamma)A$ is already studied in the literature. For example, if $A$ is a DFT matrix, $A^{-1}\diag(\gamma)A$ is a circulant matrix. If $A$ is the discrete cosine transform (DCT) matrix, $A^{-1}\diag(\gamma)A$ is the sum of a symmetric Toeplitz matrix and a Hankel matrix \cite{Sanchez1995}. For other matrices, the structure of $A^{-1}\diag(\gamma)A$ can be figured out by symbolic computation. The matrix $A=FD^{-1}$ in Section \ref{sec:piecewiseconstant} is an example.
Another example is the Haar matrix $H_n$, corresponding to a wavelet transform.
The matrix $H_4$ and the structured matrix $H_4^{-1}\diag(\gamma)H_4$ are
\begin{equation*}
H_4 =\begin{bmatrix}1 & 1 & 1 & 1\\1 & 1 & -1& -1\\ 1 & -1 & 0 & 0\\ 0 & 0 & 1 & -1\end{bmatrix},
\end{equation*}
\begin{equation*}
H_4^{-1}\diag(\gamma)H_4 =\frac{1}{4}\begin{bmatrix}
\gamma^{(1)}+\gamma^{(2)}+2\gamma^{(3)} & \gamma^{(1)}+\gamma^{(2)}-2\gamma^{(3)} & \gamma^{(1)}-\gamma^{(2)} & \gamma^{(1)}-\gamma^{(2)}\\
\gamma^{(1)}+\gamma^{(2)}-2\gamma^{(3)} & \gamma^{(1)}+\gamma^{(2)}+2\gamma^{(3)} & \gamma^{(1)}-\gamma^{(2)} & \gamma^{(1)}-\gamma^{(2)}\\
\gamma^{(1)}-\gamma^{(2)} & \gamma^{(1)}-\gamma^{(2)} & \gamma^{(1)}+\gamma^{(2)}+2\gamma^{(4)} & \gamma^{(1)}+\gamma^{(2)}-2\gamma^{(4)}  \\
\gamma^{(1)}-\gamma^{(2)} & \gamma^{(1)}-\gamma^{(2)} & \gamma^{(1)}+\gamma^{(2)}-2\gamma^{(4)} & \gamma^{(1)}+\gamma^{(2)}+2\gamma^{(4)} 
\end{bmatrix}.
\end{equation*}
The structured matrix $H_4^{-1}\diag(\gamma)H_4$ is a generalized permutation matrix if and only if $\gamma^{(2)}=\gamma^{(1)}$, $\gamma^{(3)}=\pm\gamma^{(1)}$ and $\gamma^{(4)}=\pm\gamma^{(1)}$. The set $\Gamma(H_4)$ and the ambiguity transformation group $\scrT$ are
\[
\Gamma(H_4)=\{\gamma:\gamma^{(1)}=\gamma^{(2)}=\sigma, \gamma^{(3)}=\pm\sigma, \gamma^{(4)}=\pm\sigma, \text{ for some nonzero $\sigma\in\bbC$}\},
\]
\[
\scrT = \{\calT:\calT(\lambda,X)=(\lambda./\gamma,H_4^{-1}\diag(\gamma)H_4X) \text{ for some $\gamma\in \Gamma(H_4)$}\}.
\]

\section{Insufficiency of the Condition in Proposition \ref{pro:necessary2}}\label{app:necens}
The necessary condition in Proposition \ref{pro:necessary2} is not sufficient, even when the locations of the zero rows are known a priori. For example, when $n=7$, $s=4$, $\lambda_0\in\bbC^7$ is non-vanishing and
\[
X_0 = \begin{bmatrix}1&0&0\\0&1&0\\0&0&1\\0&0&0\\0&0&0\\0&0&0\\0&0&0\end{bmatrix},
\]
the pair $(\lambda_0,X_0)$ is not identifiable, even if we know that the last three rows of $X_0$ are zeros. There exists a cirulant matrix $P$ whose first column is $[1,2,0,0,0,0,0]\transpose$ such that
\[
X_1 = PX_0 = \begin{bmatrix}1&0&0\\2&1&0\\0&2&1\\0&0&2\\0&0&0\\0&0&0\\0&0&0\end{bmatrix},
\]
and $\lambda_1=\lambda_0./\gamma$, where $\gamma = \sqrt{n}F[1,2,0,0,0,0,0]\transpose$ is non-vanishing.

The above example is a degenerate case where the actual joint sparsity of $X_0$ is less than $s=4$. A non-degenerate $X_0$ may also not be identifiable, if there is no extra knowledge of the locations of the zero rows. For example,
\[
X_0 = \begin{bmatrix}1&3&2\\2&1&3\\3&2&1\\-29&-28.5&-17.5\\0&0&0\\0&0&0\\0&0&0\end{bmatrix}.
\]
There exists a circulant matrix $P$ whose first column is $[2,16,1,8,0.5,4,32]\transpose$, such that
\[
X_1=PX_0=\begin{bmatrix}63.5&31.75&95.25\\0&0&0\\-889&-889&-508\\0&0&0\\-444.5&-444.5&-254\\0&0&0\\-190.5&-127&-63.5\end{bmatrix},
\]
and $\lambda_1=\lambda_0./\gamma$, where $\gamma=\sqrt{n}F[2,16,1,8,0.5,4,32]\transpose$ is non-vanishing.

The above pathological examples reside in a set of measure zero. Next, we show that when $\rank(X_0)=s$ but the joint support of the columns of $X_0$ is periodic, the pair $(\lambda_0,X_0)$ is not identifiable. This set of unidentifiable $X_0$ has nonzero measure. Recall the proof of Theorem \ref{thm:jointsparsity}. Assume that the joint support of the columns of $X_0$ is periodic with period $\ell$. There exists a circulant matrix $P$ with two nonzero entries in the first column, indexed by $k_1$ and $k_2$, such that $k_2-k_1=\ell$ and $\gamma = \sqrt{n}FP^{(:,1)}$ is non-vanishing. Hence there exists $X_1=PX_0$ and $\lambda_1=\lambda_0./\gamma$ such that $\diag(\lambda_0)FX_0=\diag(\lambda_1)FX_1$ and $\lambda_1\notin[\lambda_0]^L_\scrT$. Therefore, $(\lambda_0,X_0)$ is not identifiable.

\section{Proof of the Propositions Regarding ``Friendliness''} \label{app:prooffriendly}
\begin{proof}[Proof of Proposition \ref{pro:friendly0}]
We prove by contraposition, i.e., if $n\geq 4$ and $|J|\leq 2$, then $J$ is not friendly.
First, if $J=\emptyset$ or $|J|=1$, then the circularly shifted index sets $J_1,J_2,\cdots,J_{n-s}$ are not connected.

Next, we show that if $n\geq 4$ and $|J|=2$, then $J_1,J_2,\cdots,J_{n-s}$ are not connected. Since all the circularly shifted index sets are equivalent, without loss  of generality, we may assume that $J=\{1,r\}$, where $2\leq r\leq \frac{n}{2}+1$. Then all the sets $\{r_1,r_2\}$ such that $r_1-r_2 = r-1$ (modulo $n$) or $r_2-r_1 = r-1$ (modulo $n$) are circularly shifted versions of $J$. There are a total of $n$ circularly shifted index sets.

If $r=\frac{n}{2}+1$, then $J$ is periodic. The sets like $\{r_1,r_1+\frac{n}{2}\}$ ($1\leq r_1\leq \frac{n}{2}$) are counted twice because $\frac{n}{2}=-\frac{n}{2}$ (modulo $n$). And these index sets are not connected.

If $n\geq 4$ and $r<\frac{n}{2}+1$, the $n$ index sets are $\{1,r\},\{2,r+1\},\cdots,\{n-r+1,n\},\{n-r+2,1\},\cdots,\{n,r-1\}$. By removing $\{r,2r-1\}$ and $\{n-r+2,1\}$, there are $n-2\geq 2$ index sets left. These circularly shifted versions of $J$ are not connected because $J=\{1,r\}$ is not connected to the rest.
\end{proof}

\begin{proof}[Proof of Proposition \ref{pro:friendly1}]
First, if $J$ is contiguous and $|J|=s$, then $n-s$ shifted contiguous index sets cover at least $s+(n-s-1)=n-1$ indices. Therefore, $|\bigcup_{t=1}^{n-s}J_t|\geq n-1$.

Next, we prove that the shifted index sets $J_1,J_2,\cdots,J_{n-s}$ are connected by showing that they form a cycle or a path in the graph. To this end, we show that between the $n-s$ pairs $(J_1,J_2),(J_2,J_3),\cdots,(J_{n-s},J_1)$, there are at least $n-s-1$ edges. Suppose the opposite, that there are fewer edges, for example two edges are missing in the above cyle. Then $n-s$ shifted contiguous index sets cover at least $s+s+(n-s-2)=n+s-2\geq n+1$ indices, a contradiction. 
\end{proof}

\begin{proof}[Proof of Proposition \ref{pro:friendly2}]
We first show that if $J$ is not periodic, then $|\bigcup_{t=1}^{n-s}J_t|\geq n-1$, or equivalently $|\bigcap_{t=1}^{n-s}J_t^c|\leq 1$. We prove the contrapositive, if there are two distinct indices $k',k''\in \bigcap_{t=1}^{n-s}J_t^c$ then $J$ is periodic. Note that $J_1^c,J_2^c,\cdots,J_{n-s}^c$ are all circularly shifted versions of the same index set $J^c=\{j^c_1,j^c_2,\cdots,j^c_{n-s}\}$.
\[
J^c=\{j^c_1,j^c_2,\cdots,j^c_{n-s}\}=\{k'-k_1,k'-k_2,\cdots,k'-k_{n-s}\}=\{k''-k_1,k''-k_2,\cdots,k''-k_{n-s}\} \quad \text{(modulo $n$)},
\]
Hence $J^c$ is periodic with period $\ell = |k''-k'|$, so is $J$.

Next we show that the shifted index sets are connected. If $|J|>\frac{n}{2}$, then $J_{t_1}\bigcap J_{t_2}\neq \emptyset$ for any $t_1,t_2$. There is an edge between every pair of nodes, hence the graph is a complete graph, which is connected.
\end{proof}

\begin{proof}[Proof of Corollary \ref{cor:cover}]
The sufficiency is shown in the proof of Proposition \ref{pro:friendly2}. 

Next we prove necessity. If $J$ is periodic with period $\ell$ and $|J|=s<n$, then for any $k',k''$ such that $k''-k'=\ell$, we can always apply the proper shifts $k_1=k'-j_1^c,k_2=k'-j_2^c,\cdots,k_{n-s}=k'-j_{n-s}^c$ such that $k',k''\in \bigcap_{t=1}^{n-s}J_t^c$. Hence we can pick $n-s$ shifted index sets such that $|\bigcup_{t=1}^{n-s}J_t|\leq n-2$. 
\end{proof}

\bibliographystyle{myIEEEtran}
\bibliography{IEEEabrv,D:/Research/library_organized}

\begin{thebibliography}{10}
\providecommand{\url}[1]{#1}
\csname url@samestyle\endcsname
\providecommand{\newblock}{\relax}
\providecommand{\bibinfo}[2]{#2}
\providecommand{\BIBentrySTDinterwordspacing}{\spaceskip=0pt\relax}
\providecommand{\BIBentryALTinterwordstretchfactor}{4}
\providecommand{\BIBentryALTinterwordspacing}{\spaceskip=\fontdimen2\font plus
\BIBentryALTinterwordstretchfactor\fontdimen3\font minus
  \fontdimen4\font\relax}
\providecommand{\BIBforeignlanguage}[2]{{%
\expandafter\ifx\csname l@#1\endcsname\relax
\typeout{** WARNING: IEEEtran.bst: No hyphenation pattern has been}%
\typeout{** loaded for the language `#1'. Using the pattern for}%
\typeout{** the default language instead.}%
\else
\language=\csname l@#1\endcsname
\fi
#2}}
\providecommand{\BIBdecl}{\relax}
\BIBdecl

\bibitem{Kundur1996}
\BIBentryALTinterwordspacing
D.~Kundur and D.~Hatzinakos, ``Blind image deconvolution,'' \emph{{IEEE} Signal
  Process. Mag.}, vol.~13, no.~3, pp. 43--64, May 1996.
\BIBentrySTDinterwordspacing

\bibitem{Abed-meraim1997}
\BIBentryALTinterwordspacing
K.~Abed-Meraim, W.~Qiu, and Y.~Hua, ``Blind system identification,''
  \emph{Proc. {IEEE}}, vol.~85, no.~8, pp. 1310--1322, Aug 1997.
\BIBentrySTDinterwordspacing

\bibitem{Taylor1981}
\BIBentryALTinterwordspacing
L.~Taylor, ``The phase retrieval problem,'' \emph{{IEEE} Trans. Antennas
  Propag.}, vol.~29, no.~2, pp. 386--391, Mar 1981.
\BIBentrySTDinterwordspacing

\bibitem{Fienup1982}
\BIBentryALTinterwordspacing
J.~R. Fienup, ``Phase retrieval algorithms: a comparison,'' \emph{Appl. Opt.},
  vol.~21, no.~15, pp. 2758--2769, Aug 1982.
\BIBentrySTDinterwordspacing

\bibitem{Rubinstein2010}
\BIBentryALTinterwordspacing
R.~Rubinstein, A.~Bruckstein, and M.~Elad, ``Dictionaries for sparse
  representation modeling,'' \emph{Proc. {IEEE}}, vol.~98, no.~6, pp.
  1045--1057, Jun 2010.
\BIBentrySTDinterwordspacing

\bibitem{Choudhary2013a}
\BIBentryALTinterwordspacing
S.~Choudhary and U.~Mitra, ``Identifiability scaling laws in bilinear inverse
  problems,'' \emph{arXiv preprint arXiv:1402.2637}, 2014.
\BIBentrySTDinterwordspacing

\bibitem{Nguyen2013}
\BIBentryALTinterwordspacing
H.~Q. Nguyen, S.~Liu, and M.~N. Do, ``Subspace methods for computational
  relighting,'' \emph{Proc. SPIE}, vol. 8657, pp. 865\,703.1--865\,703.10,
  2013.
\BIBentrySTDinterwordspacing

\bibitem{Paulraj1985}
\BIBentryALTinterwordspacing
A.~Paulraj and T.~Kailath, ``Direction of arrival estimation by eigenstructure
  methods with unknown sensor gain and phase,'' in \emph{Proc. Int. Conf.
  Acoustics, Speech, and Signal Processing (ICASSP)}, vol.~10.\hskip 1em plus
  0.5em minus 0.4em\relax IEEE, Apr 1985, pp. 640--643.
\BIBentrySTDinterwordspacing

\bibitem{Donoho2003}
\BIBentryALTinterwordspacing
D.~L. Donoho and M.~Elad, ``Optimally sparse representation in general
  (nonorthogonal) dictionaries via $\ell_1$ minimization,'' \emph{Proc. Natl.
  Acad. Sci.}, vol. 100, no.~5, pp. 2197--2202, 2003.
\BIBentrySTDinterwordspacing

\bibitem{Morrison2009}
\BIBentryALTinterwordspacing
R.~Morrison, M.~Do, and J.~Munson, D.C., ``{MCA}: A multichannel approach to
  {SAR} autofocus,'' \emph{{IEEE} Trans. Image Process.}, vol.~18, no.~4, pp.
  840--853, Apr 2009.
\BIBentrySTDinterwordspacing

\bibitem{Ahmed2014}
\BIBentryALTinterwordspacing
A.~Ahmed, B.~Recht, and J.~Romberg, ``Blind deconvolution using convex
  programming,'' \emph{{IEEE} Trans. Inf. Theory}, vol.~60, no.~3, pp.
  1711--1732, Mar 2014.
\BIBentrySTDinterwordspacing

\bibitem{Candes2013a}
\BIBentryALTinterwordspacing
E.~J. Cand\`{e}s, T.~Strohmer, and V.~Voroninski, ``Phaselift: Exact and stable
  signal recovery from magnitude measurements via convex programming,''
  \emph{Commun. Pure Appl. Math.}, vol.~66, no.~8, pp. 1241--1274, 2013.
\BIBentrySTDinterwordspacing

\bibitem{Candes2013}
\BIBentryALTinterwordspacing
E.~J. Cand\`{e}s and X.~Li, ``Solving quadratic equations via phaselift when
  there are about as many equations as unknowns,'' \emph{Found. Comput. Math.},
  vol.~14, no.~5, pp. 1017--1026, 2014.
\BIBentrySTDinterwordspacing

\bibitem{Candes2013b}
\BIBentryALTinterwordspacing
E.~J. Cand\`{e}s, Y.~Eldar, T.~Strohmer, and V.~Voroninski, ``Phase retrieval
  via matrix completion,'' \emph{SIAM J. Imaging Sci.}, vol.~6, no.~1, pp.
  199--225, 2013.
\BIBentrySTDinterwordspacing

\bibitem{Choudhary2014a}
\BIBentryALTinterwordspacing
S.~Choudhary and U.~Mitra, ``Sparse blind deconvolution: What cannot be done,''
  in \emph{Proc. Int. Symp. Information Theory (ISIT)}.\hskip 1em plus 0.5em
  minus 0.4em\relax IEEE, June 2014, pp. 3002--3006.
\BIBentrySTDinterwordspacing

\bibitem{Spielman2013}
\BIBentryALTinterwordspacing
D.~A. Spielman, H.~Wang, and J.~Wright, ``Exact recovery of sparsely-used
  dictionaries,'' in \emph{Proc. 25th Annu. Conf. Learning Theory (COLT)},
  vol.~23.\hskip 1em plus 0.5em minus 0.4em\relax JMLR, 2012, pp. 37.1--37.18.
\BIBentrySTDinterwordspacing

\bibitem{Agarwal2013}
\BIBentryALTinterwordspacing
A.~Agarwal, A.~Anandkumar, and P.~Netrapalli, ``A clustering approach to learn
  sparsely-used overcomplete dictionaries,'' \emph{arXiv preprint
  arXiv:1309.1952}, 2013.
\BIBentrySTDinterwordspacing

\bibitem{Agarwal2013a}
\BIBentryALTinterwordspacing
A.~Agarwal, A.~Anandkumar, P.~Jain, P.~Netrapalli, and R.~Tandon, ``Learning
  sparsely used overcomplete dictionaries via alternating minimization,''
  \emph{arXiv preprint arXiv:1310.7991}, 2013.
\BIBentrySTDinterwordspacing

\bibitem{Arora2013}
\BIBentryALTinterwordspacing
S.~Arora, R.~Ge, and A.~Moitra, ``New algorithms for learning incoherent and
  overcomplete dictionaries,'' in \emph{Proc. 27th Conf. Learning Theory
  (COLT)}, vol.~35.\hskip 1em plus 0.5em minus 0.4em\relax JMLR, 2014, pp.
  1--28.
\BIBentrySTDinterwordspacing

\bibitem{Arora2014}
\BIBentryALTinterwordspacing
S.~Arora, A.~Bhaskara, R.~Ge, and T.~Ma, ``More algorithms for provable
  dictionary learning,'' \emph{arXiv preprint arXiv:1401.0579}, 2014.
\BIBentrySTDinterwordspacing

\bibitem{Moulines1995}
\BIBentryALTinterwordspacing
E.~Moulines, P.~Duhamel, J.~Cardoso, and S.~Mayrargue, ``Subspace methods for
  the blind identification of multichannel fir filters,'' \emph{{IEEE} Trans.
  Signal Process.}, vol.~43, no.~2, pp. 516--525, Feb 1995.
\BIBentrySTDinterwordspacing

\bibitem{Bredon1972}
G.~E. Bredon, \emph{Introduction to compact transformation groups}.\hskip 1em
  plus 0.5em minus 0.4em\relax Academic press, 1972, vol.~46.

\bibitem{Sanchez1995}
\BIBentryALTinterwordspacing
V.~Sanchez, P.~Garcia, A.~Peinado, J.~Segura, and A.~Rubio, ``Diagonalizing
  properties of the discrete cosine transforms,'' \emph{{IEEE} Trans. Signal
  Process.}, vol.~43, no.~11, pp. 2631--2641, Nov 1995.
\BIBentrySTDinterwordspacing

\end{thebibliography}

\end{document}